\documentclass[a4paper,10pt]{article}
\usepackage[utf8x]{inputenc}

\usepackage{amsmath,amssymb,amsfonts,amsthm,dsfont,textgreek}

\usepackage{bbm}
\usepackage{hyperref}

\title{Lieb-Thirring and Cwickel-Lieb-Rozenblum inequalities for perturbed graphene with a Coulomb impurity}
\author{Sergey Morozov and David M\"uller}
\date{}

\newtheorem{thm}{Theorem}
\newtheorem{lem}[thm]{Lemma}
\newtheorem{rem}[thm]{Remark}
\newtheorem{cor}[thm]{Corollary}
\newtheorem{definition}[thm]{Definition}

\let\Im\undefined
\DeclareMathOperator{\Im}{Im}
\let\Re\undefined
\DeclareMathOperator{\Re}{Re}

\DeclareMathOperator{\tr}{tr}
\DeclareMathOperator{\dd}{d\!}
\DeclareMathOperator{\rank}{rank}

\DeclareMathOperator{\loc}{loc}
\DeclareMathOperator{\Span}{span}
\DeclareMathOperator{\diag}{diag}
\DeclareMathOperator{\supp}{supp}
\DeclareMathOperator{\Ltwolim}{\mathsf L^2-lim}

\setlength{\textwidth}{125mm}
\setlength{\textheight}{195mm}

\begin{document}

\maketitle

\begin{center} Mathematisches Institut, Ludwig-Maximilians-Universit\"at M\"unchen\\ Theresienstr. 39, 80333 Munich, Germany.\\ \smallskip $\mathtt{ \underline{morozov@math.lmu.de},\ dmueller@math.lmu.de}$\end{center}

\medskip
\noindent{\bfseries Abstract:} We study the two dimensional massless Coulomb-Dirac operator restricted to its positive spectral subspace and prove estimates on the negative eigenvalues created by electromagnetic perturbations.

\medskip
\noindent{\bfseries Mathematics Subject Classification (2010):} 35P15, 35Q40.

\medskip
\noindent{\bfseries Keywords:} Lieb-Thirring inequality, Cwickel-Lieb-Rozenblum inequality, gra\-phene, Coulomb-Dirac operator.

\section{Introduction}

As theoretically predicted \cite{SilvestrovEfetov2007,Downing2011} and recently experimentally observed \cite{LeeWong2016,GutierrezBrown2016}, electrostatic potentials can create bound states in graphene, which corresponds to emergence of a quantum dot. We consider the case of a graphene sheet with an attractive Coulomb impurity perturbed by a weak electromagnetic potential and provide bounds on the energies of the bound states using a model similar to those of \cite{EggerSiedentop2010}.

We begin by considering a gra\-phene sheet with an attractive Coulomb impurity of strength $\nu$. For energies near the conical point of the energy-quasi-momentum dispersion relation, the Hamiltonian of an electron in such material is effectively given by
the massless Coulomb-Dirac operator (see \cite{Pereira-Nilsson-Castro_Neto} and Section IV of \cite{CastroNeto_et_al}). This operator acts in $\mathsf L^2(\mathbb R^2, \mathbb C^2)$ and is associated to the differential expression
\begin{equation}\label{Dirac symbol}
 d^\nu:= -\mathrm i\boldsymbol\sigma\cdot\nabla- \nu|\cdot|^{-1}.
\end{equation}
Here the units are chosen so that the Fermi velocity $v_F$ equals $1$, and $\boldsymbol\sigma= (\sigma_1, \sigma_2)= \bigg(\begin{pmatrix}0&1\\1&0\end{pmatrix}, \begin{pmatrix}0&-\mathrm i\\ \mathrm i&0\end{pmatrix}\bigg)$ is the vector of Pauli matrices.
For $\nu\in[0,1/2]$ (which we assume throughout in the following) we work with the distinguished self-adjoint operator $D^\nu$ in $\mathsf L^2(\mathbb R^2,\mathbb C^2)$ associated to \eqref{Dirac symbol} (see \cite{Mueller, Warmt} and \eqref{D^nu} below). The supercritical case of $\nu >1/2$ is not considered here. In that case the canonical choice of a particular self-adjoint realisation among many possible is not well-established.

We now state the main results of the paper. Scalar operators like $\sqrt{-\Delta}$ are applied to vector-valued functions component-wise without reflecting this in the notation.

\begin{thm}~\label{t:abs_value_bound}
\begin{enumerate}
  \item For every $\nu\in [0, 1/2)$ there exists $C_\nu >0$ such that
\begin{align}\label{CLR reduction}
 |D^\nu| \geqslant C_\nu\sqrt{-\Delta}
\end{align}
 holds.
  \item For any $\lambda\in [0, 1)$ there exists $K_\lambda >0$ such that
\begin{align}\label{l bound}
 |D^{1/2}| \geqslant K_\lambda l^{\lambda -1}(-\Delta)^{\lambda/2} -l^{-1}
\end{align}
holds for any $l >0$.
\end{enumerate}
\end{thm}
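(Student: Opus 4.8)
\emph{Proof proposal.} The plan is to strip away the two symmetries shared by $D^\nu$, $\sqrt{-\Delta}$, $(-\Delta)^{\lambda/2}$ and $|\cdot|^{-1}$ — rotational invariance and homogeneity under dilations — and reduce to pointwise inequalities. Since $D^\nu$ commutes with the total angular momentum it decomposes as $D^\nu=\bigoplus_{k\in\mathbb Z+1/2}D^\nu_k$, where $D^\nu_k$ is a radial Dirac-type operator on $\mathsf L^2(\mathbb R_+)\otimes\mathbb C^2$ after the weight $r\,\dd r$ has been removed by the unitary $f\mapsto r^{1/2}f$; the explicit form of $D^\nu_k$ and, crucially, the boundary behaviour at the origin that distinguishes the self-adjoint realisation are taken from \cite{Mueller, Warmt} (cf.\ \eqref{D^nu}). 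Under the same unitary $\sqrt{-\Delta}$ becomes $\bigoplus_k\diag\big(\sqrt{-\Delta}_{|k-1/2|},\sqrt{-\Delta}_{|k+1/2|}\big)$, where $\sqrt{-\Delta}_m$ is the radial part of $\sqrt{-\Delta}$ in the $m$-th Fourier channel (the spinor components stay uncoupled), and likewise for $(-\Delta)^{\lambda/2}$. Hence it suffices to prove \eqref{CLR reduction} and \eqref{l bound} in each channel with a $k$-independent constant, and by the symmetry $k\mapsto-k$ one may take $k\geqslant1/2$.

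For \eqref{CLR reduction} the case $\nu=0$ is trivial because $(D^0)^2=-\Delta$, so $|D^0|=\sqrt{-\Delta}$. For $\nu\in(0,1/2)$ I would use homogeneity: conjugating $|D^\nu_k|$ and $\sqrt{-\Delta}_m$ by multiplication with $r^{1/2}$ produces dilation-invariant operators of degree $0$, which the Mellin transform $\mathcal M\colon\mathsf L^2(\mathbb R_+)\to\mathsf L^2(\mathbb R)$ diagonalises simultaneously — $|D^\nu_k|$ into multiplication by a real-analytic Hermitian $2\times2$ matrix $\mathfrak M^\nu_k(\cdot)\geqslant0$, and $\sqrt{-\Delta}_m$ into multiplication by an explicit scalar function $n_m(\cdot)$ (a quotient of Gamma functions, growing linearly at infinity). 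Then \eqref{CLR reduction} is the pointwise $2\times2$ bound $\mathfrak M^\nu_k(s)\geqslant C_\nu\diag\big(n_{|k-1/2|}(s),n_{|k+1/2|}(s)\big)$ for all $s$, uniformly in $k$, which — the matrices being $2\times2$ — reduces to comparing traces and determinants. For all but finitely many $k$ the Coulomb term is a uniformly small form perturbation of the free operator and the $\nu=0$ identity delivers a $k$-independent constant; in the remaining small-$k$ channels $\mathfrak M^\nu_k(s)$ is explicit in $s$, $k$ and $\gamma_k:=\sqrt{k^2-\nu^2}$, and its eigenvalues stay bounded below by a positive multiple of those on the right precisely because $\gamma_k>0$. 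This is the crux and the reason $\nu=1/2$ must be excluded here: in the channel $k=1/2$, $\gamma_{1/2}=\sqrt{1/4-\nu^2}\to0$ as $\nu\to1/2$ and $C_\nu$ collapses with it.

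For \eqref{l bound} I would first observe that, by the unitary rescaling $x\mapsto lx$ (under which $|D^{1/2}|\mapsto l\,|D^{1/2}|$ and $(-\Delta)^{\lambda/2}\mapsto l^{\lambda}(-\Delta)^{\lambda/2}$), the whole family over $l>0$ is equivalent to the single inequality $|D^{1/2}|\geqslant K_\lambda(-\Delta)^{\lambda/2}-1$. After the partial-wave decomposition every channel $k\geqslant3/2$ is still non-critical ($\gamma_k\geqslant\sqrt2$), hence $|D^{1/2}_k|\geqslant c\,\sqrt{-\Delta}_{|k\mp1/2|}$ with a uniform $c>0$ by the argument of part~(i); together with the elementary scalar inequality
\begin{equation*}
 t\geqslant K_\lambda t^{\lambda}-1\qquad(t\geqslant0,\ \lambda\in[0,1),\ K_\lambda=\lambda^{-\lambda}(1-\lambda)^{\lambda-1}),
\end{equation*}
applied in the spectral representation of $\sqrt{-\Delta}$, this settles all those channels. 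The genuine obstruction is the single critical channel $k=1/2$, where $\gamma=0$ and $|D^{1/2}_{1/2}|$ fails to dominate $\sqrt{-\Delta}$ — on the Mellin side the lowest eigenvalue of the symbol of $r^{1/2}\,|D^{1/2}_{1/2}|\,r^{1/2}$ vanishes at a finite point, so no additive constant can repair a comparison with $\sqrt{-\Delta}$, which is exactly why $\lambda<1$ cannot be relaxed. Here the point is structural: the quasi-modes that spoil $|D^{1/2}_{1/2}|\geqslant c\sqrt{-\Delta}$ are forced to be logarithmically spread out over dyadic scales, and against any such state the sub-linear operator $(-\Delta)^{\lambda/2}$, $\lambda<1$, is controlled by the $\mathsf L^2$-norm alone; making this quantitative — again packaged by the scalar inequality $t\geqslant K_\lambda t^\lambda-1$ — yields \eqref{l bound}.

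The step I expect to be the main obstacle is precisely the critical channel at $\nu=1/2$: extracting the exact Mellin symbol of $|D^{1/2}_{1/2}|$ — i.e.\ tracking how the boundary condition of the distinguished extension enters — and reading off the sharp behaviour of its lowest eigenvalue near the point where it vanishes. The rest (the partial-wave reduction, the $2\times2$ eigenvalue comparisons in the non-critical channels, the two uses of the scalar inequality, and the form-domain bookkeeping — in particular the inclusion $\mathsf H^{1/2}\supseteq$ form domain of $D^\nu$ for $\nu<1/2$, which makes the right-hand sides meaningful) is routine.
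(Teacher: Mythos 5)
Your skeleton (partial-wave decomposition, Mellin representation, pointwise $2\times2$ symbol comparison in the non-critical channels, the scalar inequality $t\geqslant \lambda^{-\lambda}(1-\lambda)^{\lambda-1}t^{\lambda}-1$ for part (ii)) matches the paper's, and the non-critical channels $|\varkappa|\geqslant 3/2$ can indeed be closed along your lines: the paper proves $(D^\nu_\varkappa)^2\geqslant c_\nu (D^0_\varkappa)^2$ by a pointwise eigenvalue computation for an explicit matrix symbol and then takes square roots by operator monotonicity. The genuine gap is exactly the step you flag as ``the main obstacle'', namely the critical channels $\varkappa=\pm1/2$: your plan there rests on reading off ``the exact Mellin symbol of $r^{1/2}|D^\nu_{\pm1/2}|r^{1/2}$'', but this object is not accessible by the route you describe. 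While $r^{1/2}D^\nu_{\pm1/2}\,r^{1/2}$ is dilation-homogeneous of degree zero and hence has an explicit matrix Mellin multiplier, the absolute value does not commute with the weighted sandwich $A\mapsto r^{1/2}Ar^{1/2}$, so the multiplier of $r^{1/2}|D^\nu_{\pm1/2}|r^{1/2}$ is \emph{not} the matrix absolute value of that symbol; computing it would require the full generalized eigenfunction expansion of the Coulomb--Dirac channel operator, which is essentially the whole difficulty repackaged. Note also that squaring is not an escape: $(D^\nu)^2\geqslant C(-\Delta)$ is false for every $\nu>0$ because the operator domain leaves $\mathsf H^1$ (Corollary \ref{c: operator core}), so in the critical channels one cannot compare squares with the free operator either.

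The paper circumvents this by never diagonalizing $|D^\nu_{\pm1/2}|$. Instead it proves the comparison of squares
$(D^\nu_{\pm1/2})^2\geqslant \eta_\nu\big(\mathcal U_{\pm1/2}^*\diag(H^{\alpha}_{0},H^{\alpha'}_{1})\mathcal U_{\pm1/2}\big)^2$
(Lemmata \ref{critical Dirac via Kato bound lemma} and \ref{squares comparison lemma}), where $H^{\alpha}_{m}$ are the channel operators of the \emph{scalar} relativistic Coulomb Hamiltonian $(-\Delta)^{1/2}-\alpha|\cdot|^{-1}$ at the specially chosen couplings $\alpha=(V_{\mp1/2}(\mathrm i\beta))^{-1}$, $\beta=\sqrt{1/4-\nu^2}$. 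This reduces to a pointwise determinant/trace inequality between explicit Gamma-function symbols, and operator monotonicity of the square root then yields \eqref{moduli comparison}. For $\nu<1/2$ these couplings are strictly subcritical and $H^\alpha_m\geqslant(1-\alpha/\alpha_m)R^1$ gives \eqref{CLR reduction}; for $\nu=1/2$ they are exactly critical, and one needs the separate channel-wise critical lower bound $H^{\alpha_m}_m\geqslant K_{m,\lambda}l^{\lambda-1}R^\lambda-l^{-1}$ (Theorem \ref{critical lower bound theorem}), proved by a Schur-test/Taylor-expansion argument on the Legendre-function kernel in the spirit of \cite{FrankHLT}. Your closing heuristic about quasi-modes spread over dyadic scales points in the right direction but is not quantitative; without the intermediate scalar Coulomb comparison (or an equivalent device) and the critical Hardy--Lieb--Thirring channel bound, neither \eqref{CLR reduction} near $\varkappa=\pm1/2$ nor \eqref{l bound} is established.
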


Note that for $\nu \in (0, 1/2]$ the inequality $(D^\nu)^2 \geqslant C(-\Delta)$ is false for any $C >0$, since by Corollary \ref{c: operator core} the operator domain of $D^\nu$ is not contained in $\mathsf H^1(\mathbb R^2, \mathbb C^2)$.

The operator inequality \eqref{l bound} is related to the estimate for the fractional Schr\"o\-din\-ger operator with Coulomb potential in $\mathsf L^2(\mathbb R^2)$: For any $t\in (0, 1/2)$ there exists $M_t >0$ such that
\begin{align}\label{SSS}
 (-\Delta)^{1/2} -\frac{2\big(\Gamma(3/4)\big)^2}{\big(\Gamma(1/4)\big)^2|\cdot|} \geqslant M_t l^{2t -1}(-\Delta)^t -l^{-1}
\end{align}
holds for all $l >0$, see (1.3) in \cite{FrankHLT} (and Theorem 2.3 in \cite{SolovejSoerensenSpitzer} for an analogous result in three dimensions).

Since the negative energy states of $D^\nu$ belong to the fully occupied valence band of graphene (Dirac sea) \cite{Wallace1947,CastroNeto_et_al}, the space of physically available electronic states is $\mathfrak H_+^\nu:= P^\nu_+\mathsf L^2(\mathbb R^2,\mathbb C^2)$, where $P^\nu_+$ is the spectral projector of $D^\nu$ to the half-line $[0, \infty)$. We now consider perturbations of $D^\nu$ by electromagnetic potentials, which are assumed to be weak enough so that the state space is essentially unchanged:

\begin{cor}\label{defining corollary}
Suppose that $(\nu, \gamma)\in \big([0, 1/2]\times [0, \infty)\big)\setminus \big\{(1/2, 0)\big\}$.
Let $V$ be a non-negative measurable $(2\times 2)$-matrix function with $\tr (V^{2 +\gamma})\in \mathsf L^1(\mathbb R^2)$.
Let $\mathfrak w$ be a real-valued quadratic form in $\mathfrak H_+^\nu$ with the domain containing $P_+^\nu\mathfrak D\big(|D^\nu|^{1/2}\big)$. Assume that there exists $C> 0$ such that
\begin{align}\label{w}
 0 \leqslant\mathfrak w[\varphi] \leqslant C\Big(\big\||D^\nu|^{1/2}\varphi\big\|^2 +\|\varphi\|^2\Big), \quad\textrm{for all }\varphi \in P_+^\nu\mathfrak D\big(|D^\nu|^{1/2}\big).
\end{align}
Then the quadratic form
\begin{align*}
    \mathfrak d^\nu(\mathfrak w, V)&:P_+^\nu\mathfrak D\big(|D^\nu|^{1/2}\big)\to \mathbb R, \\ \mathfrak d^\nu(\mathfrak w, V)[\varphi] &:=\big\||D^\nu|^{1/2}\varphi\big\|^2 +\mathfrak w[\varphi] - \int\limits_{\mathbb{R}^2}\big\langle \varphi(\mathbf x), V(\mathbf x)\varphi (\mathbf x)\big\rangle \mathrm{d}\mathbf{x}
\end{align*}
is closed and bounded from below in $\mathfrak H_+^\nu$.
\end{cor}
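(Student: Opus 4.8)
The plan is to regard $\mathfrak d^\nu(\mathfrak w, V)$ as a form perturbation of the closed non‑negative form $\mathfrak h_0^\nu[\varphi]:=\big\||D^\nu|^{1/2}\varphi\big\|^2$ with form domain $P_+^\nu\mathfrak D\big(|D^\nu|^{1/2}\big)$ (this is the form of the self‑adjoint operator obtained by restricting $|D^\nu|$ to $\mathfrak H_+^\nu$, since $P_+^\nu$ commutes with $|D^\nu|$), and then to invoke the KLMN theorem. First note that, because $0\leqslant\mathfrak w[\varphi]\leqslant C\big(\mathfrak h_0^\nu[\varphi]+\|\varphi\|^2\big)$, the form $\mathfrak h_0^\nu+\mathfrak w$ is non‑negative, has the same form domain $P_+^\nu\mathfrak D\big(|D^\nu|^{1/2}\big)$, and is still closed, its form norm being equivalent to that of $\mathfrak h_0^\nu$. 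Hence it suffices to show that the non‑negative form $\varphi\mapsto\int_{\mathbb R^2}\langle\varphi(\mathbf x),V(\mathbf x)\varphi(\mathbf x)\rangle\,\mathrm d\mathbf x$ is relatively form‑bounded with respect to $\mathfrak h_0^\nu$ with relative bound strictly below $1$; KLMN then yields that $\mathfrak d^\nu(\mathfrak w,V)=(\mathfrak h_0^\nu+\mathfrak w)-\int\langle\cdot,V\cdot\rangle$ is closed and bounded from below on $P_+^\nu\mathfrak D\big(|D^\nu|^{1/2}\big)$.

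To estimate the potential term I set $g(\mathbf x):=\big(\tr V(\mathbf x)^2\big)^{1/2}$ and $\lambda:=\tfrac{2}{2+\gamma}\in(0,1]$. Since $V\geqslant 0$ one has the pointwise bound $\langle\varphi,V\varphi\rangle\leqslant\|V\|_{\mathrm{op}}\,|\varphi|^2\leqslant g\,|\varphi|^2$, while equivalence of the $\ell^2$ and $\ell^{2+\gamma}$ norms on $\mathbb C^2$ gives $g^{2+\gamma}=(\tr V^2)^{(2+\gamma)/2}\leqslant c_\gamma\,\tr V^{2+\gamma}$, so that $g\in\mathsf L^{2+\gamma}(\mathbb R^2)$. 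Hölder's inequality with exponents $\big(2+\gamma,\tfrac q2,\tfrac q2\big)$, where $q:=\tfrac{2(2+\gamma)}{1+\gamma}=\tfrac{4}{2-\lambda}\in(2,4]$, yields $\int_{\mathbb R^2}\langle\varphi,V\varphi\rangle\leqslant\|g\|_{\mathsf L^{2+\gamma}}\|\varphi\|_{\mathsf L^q}^2$, and the (inhomogeneous) Sobolev embedding $\mathsf H^{\lambda/2}(\mathbb R^2)\hookrightarrow\mathsf L^q(\mathbb R^2)$ bounds this by $\|g\|_{\mathsf L^{2+\gamma}}S_\lambda\big(\big\|(-\Delta)^{\lambda/4}\varphi\big\|^2+\|\varphi\|^2\big)$ for a constant $S_\lambda$.

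It remains to dominate $\big\|(-\Delta)^{\lambda/4}\varphi\big\|^2$ by $\mathfrak h_0^\nu[\varphi]$ modulo $\|\varphi\|^2$, which is where Theorem~\ref{t:abs_value_bound} enters. If $\nu<1/2$, part~(1) together with $|\xi|^{\lambda}\leqslant 1+|\xi|$ gives $\big\|(-\Delta)^{\lambda/4}\varphi\big\|^2\leqslant\|\varphi\|^2+C_\nu^{-1}\mathfrak h_0^\nu[\varphi]$; since $g$ is fixed and its $\mathsf L^{2+\gamma}$‑norm need not be small, I then split $g=g_1+g_2$ with $g_1\in\mathsf L^\infty$ of compact support and $\|g_2\|_{\mathsf L^{2+\gamma}}$ as small as desired (possible because $g\in\mathsf L^{2+\gamma}$), estimating the $g_1$‑contribution by $\|g_1\|_\infty\|\varphi\|^2$; this makes the relative bound arbitrarily small. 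If $\nu=1/2$ — so that $\gamma>0$ and $\lambda<1$ — part~(2) of Theorem~\ref{t:abs_value_bound} with this $\lambda$ gives, for every $l>0$, $\big\|(-\Delta)^{\lambda/4}\varphi\big\|^2\leqslant K_\lambda^{-1}l^{1-\lambda}\mathfrak h_0^{1/2}[\varphi]+K_\lambda^{-1}l^{-\lambda}\|\varphi\|^2$, and letting $l\to 0^+$ makes the relative bound arbitrarily small because $\lambda<1$ (the diverging factor $l^{-\lambda}$ only affects the lower bound, not closedness). In either case the relative form bound can be chosen below $1$, and KLMN completes the argument.

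The genuinely new ingredient is Theorem~\ref{t:abs_value_bound} itself; granting it, the only delicate points are the matching $\lambda=\tfrac{2}{2+\gamma}$ of the Sobolev exponent to the integrability hypothesis $\tr V^{2+\gamma}\in\mathsf L^1(\mathbb R^2)$, and — in the critical case $\nu=1/2$ — extracting an arbitrarily small relative bound from the $l$‑dependent estimate \eqref{l bound}, which is precisely the reason \eqref{l bound} is needed in place of the (false, by the remark after Theorem~\ref{t:abs_value_bound}) cleaner inequality $(D^{1/2})^2\geqslant c(-\Delta)$.
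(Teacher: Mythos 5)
Your proposal is correct and follows essentially the same route as the paper: H\"older plus the Sobolev embedding $\mathsf H^{1/(2+\gamma)}(\mathbb R^2)\hookrightarrow\mathsf L^{(4+2\gamma)/(1+\gamma)}(\mathbb R^2)$ to control the potential term by $\big\|(-\Delta)^{1/(4+2\gamma)}\varphi\big\|^2$, then Theorem~\ref{t:abs_value_bound} (part~1 for $\nu<1/2$, part~2 with $\lambda=2/(2+\gamma)$ for $\nu=1/2$) to dominate this by the form of $|D^\nu|$, and KLMN to conclude. The only cosmetic difference is that the paper achieves the infinitesimal bound in both cases by splitting $V=V_\varepsilon+B_\varepsilon$ with $\|\tr V_\varepsilon^{2+\gamma}\|_{\mathsf L^1}$ small (fixing $l$ when $\nu=1/2$), whereas you split only for $\nu<1/2$ and instead send $l\to 0$ in the critical case.
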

According to Theorem 10.1.2 in \cite{BirmanSolomjak}, there exists a unique self-adjoint operator $D^\nu(\mathfrak w, V)$ in $\mathfrak H_+^\nu$ associated to $\mathfrak d^\nu(\mathfrak w, V)$.
In the following two theorems we study the negative spectrum of $D^\nu(\mathfrak w, V)$. Note that the eigenvalues of $D^\nu(\mathfrak w, V)$ can be interpreted as bound states of a quantum dot.

For numbers and self-adjoint operators we use the notation $x_\pm := \max\{\pm x, 0\}$ for the positive and negative parts of $x$.

\begin{thm}\label{CLR theorem}
 Let $\nu\in [0, 1/2)$. There exists $C^{\mathrm{CLR}}_\nu> 0$ such that
\begin{equation}\label{CLR}
 \rank \big(D^\nu(\mathfrak w, V)\big)_- \leqslant C^{\mathrm{CLR}}_\nu\int_{\mathbb R^2} \tr \big(V(\mathbf x)\big)^2\dd\mathbf x.
\end{equation}
\end{thm}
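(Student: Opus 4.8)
The plan is to reduce \eqref{CLR} to the classical Cwikel--Lieb--Rozenblum bound for the half-Laplacian $(-\Delta)^{1/2}$ on $\mathbb R^2$, using Theorem~\ref{t:abs_value_bound}(1) as the bridge. All the substantive work sits in that operator inequality; the rest is a chain of two applications of the variational principle plus the classical CLR estimate. I would carry it out in three steps.

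\emph{Step 1: discard $\mathfrak w$.} Since $\mathfrak w\geqslant 0$ by \eqref{w}, the forms obey $\mathfrak d^\nu(\mathfrak w,V)[\varphi]\geqslant\mathfrak d^\nu(0,V)[\varphi]$ on the common domain $P_+^\nu\mathfrak D\big(|D^\nu|^{1/2}\big)$, so by the variational principle $\rank\big(D^\nu(\mathfrak w,V)\big)_-\leqslant\rank\big(D^\nu(0,V)\big)_-$, and it suffices to estimate the latter, i.e.\ one may take $\mathfrak w=0$.

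\emph{Step 2: pass to a fractional Schr\"odinger operator.} For $\varphi\in P_+^\nu\mathfrak D\big(|D^\nu|^{1/2}\big)\subseteq\mathfrak D\big(|D^\nu|^{1/2}\big)$, Theorem~\ref{t:abs_value_bound}(1) gives $\big\||D^\nu|^{1/2}\varphi\big\|^2\geqslant C_\nu\big\|(-\Delta)^{1/4}\varphi\big\|^2$; in particular $\mathfrak D\big(|D^\nu|^{1/2}\big)\subseteq\mathsf H^{1/2}(\mathbb R^2,\mathbb C^2)$. Using $\langle\xi,V(\mathbf x)\xi\rangle\leqslant\big(\tr V(\mathbf x)\big)|\xi|^2$ for $\xi\in\mathbb C^2$, which holds because $V(\mathbf x)\geqslant 0$, I obtain for such $\varphi$
\[
 \mathfrak d^\nu(0,V)[\varphi]\;\geqslant\;C_\nu\big\|(-\Delta)^{1/4}\varphi\big\|^2-\int_{\mathbb R^2}\big(\tr V(\mathbf x)\big)|\varphi(\mathbf x)|^2\,\dd\mathbf x\;=:\;\mathfrak q[\varphi].
\]
Since $\tr(V^2)\in\mathsf L^1(\mathbb R^2)$ and $(\tr A)^2\leqslant 2\tr(A^2)$ for nonnegative $(2\times 2)$-matrices $A$, the scalar potential $\tr V$ lies in $\mathsf L^2(\mathbb R^2)$; hence, via the Sobolev embedding $\mathsf H^{1/2}(\mathbb R^2)\hookrightarrow\mathsf L^4(\mathbb R^2)$ and the KLMN theorem, $\mathfrak q$ is a closed semibounded form on $\mathsf H^{1/2}(\mathbb R^2,\mathbb C^2)$, associated to the self-adjoint operator $C_\nu(-\Delta)^{1/2}-\tr V$ acting componentwise. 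Because $P_+^\nu\mathfrak D\big(|D^\nu|^{1/2}\big)\subseteq\mathsf H^{1/2}(\mathbb R^2,\mathbb C^2)$ and $\mathfrak d^\nu(0,V)\geqslant\mathfrak q$ there, the variational principle again yields $\rank\big(D^\nu(0,V)\big)_-\leqslant n$, where $n$ is the number of negative eigenvalues of $C_\nu(-\Delta)^{1/2}-\tr V$ on $\mathsf L^2(\mathbb R^2,\mathbb C^2)$; as this operator is the orthogonal sum of two copies of the scalar operator $C_\nu(-\Delta)^{1/2}-\tr V$ on $\mathsf L^2(\mathbb R^2)$, one has $n=2\,\rank\big(C_\nu(-\Delta)^{1/2}-\tr V\big)_-$ with the latter taken in $\mathsf L^2(\mathbb R^2)$.

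\emph{Step 3: apply the scalar CLR inequality.} In the subcritical regime $2=d>2s=1$ the half-Laplacian satisfies a Cwikel--Lieb--Rozenblum estimate: there is an absolute constant $C>0$ with $\rank\big((-\Delta)^{1/2}-W\big)_-\leqslant C\int_{\mathbb R^2}W_+(\mathbf x)^2\,\dd\mathbf x$ for every real $W$. By homogeneity $C_\nu(-\Delta)^{1/2}-W=C_\nu\big((-\Delta)^{1/2}-C_\nu^{-1}W\big)$ this gives $\rank\big(C_\nu(-\Delta)^{1/2}-\tr V\big)_-\leqslant CC_\nu^{-2}\int_{\mathbb R^2}\big(\tr V(\mathbf x)\big)^2\,\dd\mathbf x$. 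Collecting the three steps proves \eqref{CLR}, with e.g.\ $C^{\mathrm{CLR}}_\nu:=2CC_\nu^{-2}$; incidentally this also shows that the negative spectrum of $D^\nu(\mathfrak w,V)$ is finite.

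Given Theorem~\ref{t:abs_value_bound}(1), the only points needing care are the two invocations of the variational principle \emph{across different form domains} — legitimate because each time the smaller domain is contained in the larger one and the forms are pointwise ordered on it — and checking that $\mathfrak q$ is semibounded, so that $C_\nu(-\Delta)^{1/2}-\tr V$ is a genuine self-adjoint operator to which the scalar CLR inequality applies. I would also remark that $C^{\mathrm{CLR}}_\nu$ deteriorates as $\nu\uparrow 1/2$ because $C_\nu\downarrow 0$ there, consistent with the exclusion of $\nu=1/2$, for which \eqref{l bound} provides only a weaker substitute.
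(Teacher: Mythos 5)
Your proof is correct and follows the paper's overall strategy: Theorem~\ref{t:abs_value_bound}(1) turns the problem into one about a fractional Schr\"odinger operator, and a CLR bound for $(-\Delta)^{1/2}$ in $\mathbb R^2$ finishes the job, with the variational principle across nested form domains providing the glue. Where you diverge from the paper is that you scalarize the potential via $V\leqslant(\tr V)\,I$ and then apply the \emph{scalar} CLR to each of the two components, whereas the paper keeps $V$ matrix-valued and invokes the matrix form directly, namely \eqref{CLR for fractional Laplacian} with $t=1/2$ (Example~3.3 of \cite{Frank2014CLR}). The paper's route buys two things: it avoids the KLMN/Sobolev detour you use to realize $C_\nu(-\Delta)^{1/2}-\tr V$ as a self-adjoint operator --- the chain of $\sup\dim\{\mathcal X:\ldots\}$ comparisons only needs pointwise ordering of quadratic forms on nested domains, which is precisely how Frank's result is stated --- and it yields the cleaner constant $C^{\mathrm{CLR}}_\nu=4C_\nu^{-2}/\pi$, with no loss from the two scalarization steps. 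You should also repair a small bookkeeping slip at the end: Steps 2 and 3 give $\rank\big(D^\nu(\mathfrak w,V)\big)_-\leqslant 2CC_\nu^{-2}\int(\tr V)^2$, which has integrand $(\tr V)^2$ rather than $\tr(V^2)$ as required in \eqref{CLR}; applying $(\tr V)^2\leqslant 2\tr(V^2)$ (which you invoked earlier only to show $\tr V\in\mathsf L^2$) the constant you obtain is $C^{\mathrm{CLR}}_\nu:=4CC_\nu^{-2}$, not $2CC_\nu^{-2}$.
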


Analogues of Theorem \ref{CLR theorem} are widely known for many bounded from below self-adjoint operators as Cwickel-Lieb-Rozenblum inequalities (see \cite{Rozenblum, Cwikel, Lieb1976} for the original contributions and \cite{Frank2014CLR} and references therein for further developments). In particular, in Example 3.3 of \cite{Frank2014CLR} it is proved that the estimate
\begin{align}\label{CLR for fractional Laplacian}
 \rank\big((-\Delta)^{t} -V\big)_- \leqslant (4\pi t)^{-1}(1 -t)^{(t -2)/t}\int_{\mathbb R^2}\tr \big(V(\mathbf x)\big)^{1/t}\mathrm d\mathbf x
\end{align}
holds for all $0 <t <1$. Our proof of Theorem \ref{CLR theorem} is based on Theorem \ref{t:abs_value_bound} and \eqref{CLR for fractional Laplacian}.

\begin{thm}\label{LT theorem}
 Let $\nu\in [0, 1/2]$ and $\gamma> 0$. There exists $C^{\mathrm{LT}}_{\nu, \gamma}> 0$ such that
\begin{equation}\label{LT}
 \tr\big(D^\nu(\mathfrak w, V)\big)^\gamma_- \leqslant C^{\mathrm{LT}}_{\nu, \gamma}\int_{\mathbb R^2} \tr \big(V(\mathbf x)\big)^{2+ \gamma}\dd\mathbf x.
\end{equation}
\end{thm}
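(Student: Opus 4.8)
The plan is to deduce \eqref{LT} from the Cwikel-Lieb-Rozenblum estimates --- from Theorem \ref{CLR theorem} when $\nu\in[0,1/2)$ and from \eqref{CLR for fractional Laplacian} when $\nu=1/2$ --- via the Aizenman-Lieb layer-cake representation of the Riesz mean. First I would reduce to the case $\mathfrak w=0$: since $\mathfrak w\geqslant0$ we have $\mathfrak d^\nu(\mathfrak w,V)\geqslant\mathfrak d^\nu(0,V)$ on the common form domain $P_+^\nu\mathfrak D\big(|D^\nu|^{1/2}\big)$, hence $D^\nu(\mathfrak w,V)\geqslant D^\nu(0,V)$ and $\tr\big(D^\nu(\mathfrak w,V)\big)_-^\gamma\leqslant\tr\big(D^\nu(0,V)\big)_-^\gamma$ by the variational principle. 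Then, using the identity
\begin{equation*}
 \tr(A)_-^\gamma=\gamma\int_0^\infty\tau^{\gamma-1}\,\rank(A+\tau)_-\,\dd\tau
\end{equation*}
valid for every self-adjoint $A$ bounded from below, it suffices to bound $\rank\big(D^\nu(0,V)+\tau\big)_-$ --- the number of eigenvalues of $D^\nu(0,V)$ below $-\tau$ --- for each $\tau>0$ and then to integrate.

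For $\nu\in[0,1/2)$ I would apply Theorem \ref{CLR theorem} directly, using the elementary fact that every non-negative Hermitian matrix $M$ satisfies $M-\tau\leqslant(M-\tau)_+$. This gives the form inequality $\mathfrak d^\nu(0,V)[\varphi]+\tau\|\varphi\|^2\geqslant\mathfrak d^\nu\big(0,(V-\tau)_+\big)[\varphi]$, hence by the variational principle $\rank\big(D^\nu(0,V)+\tau\big)_-\leqslant\rank\big(D^\nu(0,(V-\tau)_+)\big)_-$; since $(V-\tau)_+\leqslant V$ pointwise, the potential $(V-\tau)_+$ still satisfies the hypotheses of Corollary \ref{defining corollary}, and Theorem \ref{CLR theorem} yields $\rank\big(D^\nu(0,(V-\tau)_+)\big)_-\leqslant C_\nu^{\mathrm{CLR}}\int_{\mathbb R^2}\tr\big((V(\mathbf x)-\tau)_+^2\big)\,\dd\mathbf x$. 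Substituting this into the layer-cake formula, exchanging the order of integration by Tonelli's theorem, and evaluating the remaining one-dimensional integral by the Beta function --- for each fixed $\mathbf x$ one has $\int_0^\infty\tau^{\gamma-1}\tr\big((V(\mathbf x)-\tau)_+^2\big)\,\dd\tau=\tfrac{2}{\gamma(\gamma+1)(\gamma+2)}\tr\big(V(\mathbf x)^{\gamma+2}\big)$ --- gives \eqref{LT} with $C_{\nu,\gamma}^{\mathrm{LT}}=\tfrac{2}{(\gamma+1)(\gamma+2)}C_\nu^{\mathrm{CLR}}$.

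For $\nu=1/2$ Theorem \ref{CLR theorem} is not available, so I would combine part (2) of Theorem \ref{t:abs_value_bound} with \eqref{CLR for fractional Laplacian}. Fix $\lambda\in\big(2/(\gamma+2),1\big)$, which is non-empty precisely because $\gamma>0$. Choosing $l=2/\tau$ in \eqref{l bound} gives $|D^{1/2}|+\tau\geqslant K_\lambda(2/\tau)^{\lambda-1}(-\Delta)^{\lambda/2}+\tau/2$; combining this with $V-\tau/2\leqslant(V-\tau/2)_+$ and then enlarging the test-function space from $P_+^{1/2}\mathfrak D\big(|D^{1/2}|^{1/2}\big)$, which by \eqref{l bound} is contained in $\mathsf H^{\lambda/2}(\mathbb R^2,\mathbb C^2)$, to all of $\mathsf H^{\lambda/2}(\mathbb R^2,\mathbb C^2)$, the variational principle gives
\begin{equation*}
 \rank\big(D^{1/2}(0,V)+\tau\big)_-\leqslant\rank\Big(K_\lambda(2/\tau)^{\lambda-1}(-\Delta)^{\lambda/2}-(V-\tau/2)_+\Big)_-.
\end{equation*}
Dividing by $K_\lambda(2/\tau)^{\lambda-1}$ and applying \eqref{CLR for fractional Laplacian} with $t=\lambda/2\in(0,1)$ bounds the right-hand side by a constant multiple of $\tau^{-2(1-\lambda)/\lambda}\int_{\mathbb R^2}\tr\big((V(\mathbf x)-\tau/2)_+^{2/\lambda}\big)\,\dd\mathbf x$. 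Feeding this into the layer-cake formula and evaluating the $\tau$-integral again by the Beta function works because the exponents conspire, $\big(\gamma-2(1-\lambda)/\lambda\big)+2/\lambda=\gamma+2$, so that $\int_{\mathbb R^2}\tr\big(V(\mathbf x)^{2+\gamma}\big)\,\dd\mathbf x$ reappears, the convergence at the origin being guaranteed by $\gamma-2(1-\lambda)/\lambda>0$, which holds for the chosen $\lambda$.

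I expect the case $\nu=1/2$ to be the crux. The difficulty is that \eqref{l bound} controls $|D^{1/2}|$ by a fractional power of $-\Delta$ only modulo the subtracted constant $l^{-1}$; this constant has to be absorbed into the spectral parameter by coupling $l$ to $\tau$, which in turn makes the Birman-Schwinger-type bound diverge like $\tau^{-2(1-\lambda)/\lambda}$ as $\tau\to0$, and this singularity must stay integrable against $\tau^{\gamma-1}$. That is exactly why $\lambda$ must be taken close to $1$ and why $\gamma>0$ is needed, in accordance with the hypothesis of the theorem. Keeping track of form domains when passing from the operator in $\mathfrak H_+^{1/2}$ to a scalar fractional Schr\"odinger operator on the whole space is a further, more routine, point to verify.
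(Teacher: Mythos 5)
Your proposal is correct and follows essentially the same route as the paper: layer-cake representation of the Riesz mean, Theorem \ref{CLR theorem} for $\nu<1/2$, and Theorem \ref{t:abs_value_bound}(2) combined with \eqref{CLR for fractional Laplacian} for $\nu=1/2$ with $l$ coupled to $\tau$. The only cosmetic differences are your explicit reduction to $\mathfrak w=0$ (the paper keeps $\mathfrak w$ in the inequalities, where it is harmless because $\mathfrak w\geqslant0$) and your fixed choice $l=2/\tau$, whereas the paper takes $l=(\sigma\tau)^{-1}$ and later optimises over $\sigma\in(0,1)$ to improve the constant.
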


Theorem \ref{LT theorem} is a form of Lieb-Thirring inequality, (see \cite{LiebThirring} for the original result and \cite{LaptevWeidl} for a review of further developments). In another publication \cite{Virtual_level} we prove that $D^{1/2}(0, V)$ has a negative eigenvalue for any non-trivial $V\geqslant 0$. This situation is associated with the existence of a virtual level at zero, as observed for example for the operator $\Big(-\dfrac{\dd^{\,2}}{\dd r^2}- \dfrac1{4r^2}\Big)$ in $\mathsf L^2(\mathbb R_+)$ (see \cite{EkholmFrank}, Proposition 3.2). In particular, the bound \eqref{CLR} cannot hold for $\nu=1/2$. In this case Theorem \ref{LT theorem} is an equivalent of Hardy-Lieb-Thirring inequality (see \cite{EkholmFrank, FrankHLT, FrankLiebSeiringer}).

Certain estimates for the optimal constants in Theorems \ref{t:abs_value_bound} -- \ref{LT theorem} can be extracted from the proofs provided. This results in explicit, but quite involved expressions.

The article is organised in the following way: We start with some auxiliary results in Section \ref{transforms section}, where we prepare useful representations of operators of interest with the help of certain unitary transforms. One of such representations allows us to provide a rigorous definition \eqref{D^nu} of $D^{\nu}$. In Section \ref{scalar section} we study the operator $(-\Delta)^{1/2}-\alpha|\cdot|^{-1}$ in the representation, in which it can be relatively easily compared with $|D^\nu|$. Such comparison is done in the two critical channels of the angular momentum decomposition in Section \ref{critical channels section}. For the non-critical channels we obtain a lower bound on $|D^\nu|$ in terms of $(-\Delta)^{1/2}$ in Section \ref{non-critical section}. In the subsequent Section \ref{clb section} we prove a channel-wise improvement of \eqref{SSS}. Finally, in Section \ref{main proofs section} we complete the proofs of Theorems \ref{t:abs_value_bound} -- \ref{LT theorem} and Corollary 
\ref{defining corollary}.

\paragraph{Acknowledgement:} S. M. was supported by the RSF grant 15-11-30007.

\section{Mellin, Fourier and related transforms in polar coordinates}\label{transforms section}

In this section we introduce several unitary transformations which will be useful in the subsequent analysis. We also formulate and prove several technical results needed in the subsequent sections.
Let $(r, \theta)$, $(p, \omega)\in [0, \infty)\times[0, 2\pi)$ be the polar coordinates in $\mathbb R^2$ in coordinate and momentum spaces, respectively.

\paragraph{Fourier transform.}
We use the standard unitary Fourier transform in $\mathsf L^2(\mathbb R^2)$ given in the polar coordinates for $\varphi\in \mathsf L^1(\mathbb R^2)\cap \mathsf L^2(\mathbb R^2)$ by
\begin{equation}\label{Fourier transform}
 (\mathcal F\varphi)(p, \omega):= \frac1{2\pi}\int_0^\infty\int_0^{2\pi}\mathrm e^{-\mathrm i pr\cos(\omega- \theta)}\varphi(r, \theta)\, \mathrm d\theta\,r\mathrm dr.
\end{equation}

\begin{lem}\label{Fourier channel lemma}
 For $m\in \mathbb Z$ and $\psi\in \mathsf C_0^\infty\big([0, \infty)\big)$ the Fourier transform of
\begin{equation}\label{auxiliary nonsense}
 \Psi^{(m)}(r, \theta): = r^{-1/2}\psi(r)\mathrm e^{\mathrm im\theta}
\end{equation}
 is given in the polar coordinates by
\begin{equation}\label{Fourier channel}
 \mathcal F(\Psi^{(m)})(p, \omega)= (-\mathrm i)^m\mathrm e^{\mathrm im\omega}\int_0^\infty \sqrt{r}J_m(pr)\psi(r)\mathrm dr.
\end{equation}
\end{lem}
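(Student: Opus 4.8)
The strategy is the classical one: reduce the 2D Fourier transform of a function with a pure angular harmonic $e^{im\theta}$ to a 1D Hankel-type transform, using the Jacobi--Anger expansion (equivalently, the integral representation of the Bessel function $J_m$). First I would substitute the defining formula \eqref{auxiliary nonsense} into the Fourier transform \eqref{Fourier transform}, obtaining
\[
 \mathcal F(\Psi^{(m)})(p,\omega)=\frac1{2\pi}\int_0^\infty\int_0^{2\pi}e^{-\mathrm i pr\cos(\omega-\theta)}\,r^{-1/2}\psi(r)\,e^{\mathrm im\theta}\,\mathrm d\theta\,r\,\mathrm dr.
\]
Since $\psi\in\mathsf C_0^\infty\big([0,\infty)\big)$, the function $\Psi^{(m)}$ is in $\mathsf L^1(\mathbb R^2)\cap\mathsf L^2(\mathbb R^2)$ (the factor $r^{-1/2}$ against the Jacobian $r\,\mathrm dr$ leaves an $r^{1/2}$, which is locally integrable and compactly supported), so formula \eqref{Fourier transform} applies and Fubini lets me perform the $\theta$-integration first.

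Next I would compute the inner angular integral. Writing $\theta=\omega+\phi$ (and using $2\pi$-periodicity to keep the range of length $2\pi$), the integral becomes
\[
 e^{\mathrm im\omega}\int_0^{2\pi}e^{-\mathrm i pr\cos\phi}\,e^{\mathrm im\phi}\,\mathrm d\phi.
\]
The standard integral representation of the Bessel function, $J_m(z)=\dfrac1{2\pi}\displaystyle\int_0^{2\pi}e^{\mathrm i(m\phi - z\sin\phi)}\,\mathrm d\phi$, after the shift $\phi\mapsto\phi+\pi/2$ (which turns $\sin\phi$ into $\cos\phi$ and contributes a factor $e^{\mathrm im\pi/2}=\mathrm i^{m}$, so that $\int_0^{2\pi}e^{-\mathrm i z\cos\phi}e^{\mathrm im\phi}\,\mathrm d\phi = 2\pi(-\mathrm i)^{m}J_m(z)$ — here one uses $J_m(-z)=(-1)^m J_m(z)$ to absorb the sign of $z$), gives exactly the factor $2\pi(-\mathrm i)^m J_m(pr)$. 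Substituting back and cancelling the $2\pi$ yields
\[
 \mathcal F(\Psi^{(m)})(p,\omega)=(-\mathrm i)^m e^{\mathrm im\omega}\int_0^\infty \sqrt{r}\,J_m(pr)\,\psi(r)\,\mathrm dr,
\]
which is \eqref{Fourier channel}. The remaining $r$-integral converges absolutely for every fixed $p$ because $\psi$ is bounded with compact support and $\sqrt{r}\,J_m(pr)$ is locally bounded in $r$ (using $J_m(z)=O(z^{m})$ near $z=0$ for $m\ge 0$, and $J_{-m}=(-1)^mJ_m$).

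I do not anticipate a genuine obstacle here; this is a routine computation. The only point requiring a little care is the bookkeeping of phase factors and the sign conventions in the Bessel integral representation — in particular making sure the shift $\phi\mapsto\phi+\pi/2$ and the identity $J_m(-z)=(-1)^mJ_m(z)$ combine to give precisely $(-\mathrm i)^m$ (and not $\mathrm i^m$ or $(-1)^m$), and checking that the $m<0$ case is handled correctly via $J_{-m}=(-1)^mJ_m$. Justifying the interchange of the $\theta$- and $r$-integrations is immediate from the $\mathsf L^1$-membership noted above, so Fubini's theorem applies without further ado.
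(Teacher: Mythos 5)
Your proof is correct and follows essentially the same route as the paper: substitute the angular harmonic into the Fourier integral, use Fubini (justified by $\Psi^{(m)}\in\mathsf L^1$), and evaluate the inner $\theta$-integral to produce the factor $2\pi(-\mathrm i)^m J_m(pr)\mathrm e^{\mathrm im\omega}$. The only difference is cosmetic: the paper cites DLMF 10.9.2 and 10.2.2 for the identity $\int_0^{2\pi}\mathrm e^{-\mathrm ipr\cos(\omega-\theta)}\mathrm e^{\mathrm im\theta}\,\mathrm d\theta=2\pi(-\mathrm i)^mJ_m(pr)\mathrm e^{\mathrm im\omega}$, whereas you rederive it directly from the Bessel integral representation via the shift $\phi\mapsto\phi+\pi/2$ (which already yields the $(-\mathrm i)^m$ factor, so the remark about invoking $J_m(-z)=(-1)^mJ_m(z)$ is unnecessary since $pr\geqslant 0$ here).
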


\begin{proof}
 According to \cite{dlmf}, 10.9.2 and 10.2.2
\begin{equation}\label{aux1}
 \int_0^{2\pi}\mathrm e^{-\mathrm ipr\cos(\omega- \theta)}\mathrm e^{\mathrm im\theta}\mathrm d\theta= 2\pi\mathrm i^m J_m(-pr)\mathrm e^{\mathrm im\omega}= 2\pi(-\mathrm i)^mJ_m(pr)\mathrm e^{\mathrm im\omega}.
\end{equation}
Substituting \eqref{auxiliary nonsense} into \eqref{Fourier transform} and using \eqref{aux1} we obtain \eqref{Fourier channel}.
\end{proof}

\paragraph{Mellin transform.} Let $\mathcal M$ be the unitary Mellin transform, first defined on $\mathsf C_0^\infty(\mathbb R_+)$ by
\begin{equation}\label{Mellin transform}
 (\mathcal M\psi)(s):= \frac1{\sqrt{2\pi}}\int_0^\infty r^{-1/2- \mathrm is}\psi(r)\mathrm dr,
\end{equation}
and then extended to a unitary operator $\mathcal M:\mathsf L^2(\mathbb R_+)\to \mathsf L^2(\mathbb R)$, see e.g. \cite{YaouancOliverRaynal}.

\begin{definition}\label{D lambda definition}
For $\lambda\in \mathbb R\setminus\{0\}$ let $\mathfrak D^\lambda$ be the set of functions $\psi\in \mathsf L^2(\mathbb R)$ such that there exists $\Psi$ analytic in the strip $\mathfrak S^\lambda:= \big\{z\in\mathbb C: \Im z/\lambda\in (0, 1)\big\}$ with the properties
\begin{enumerate}
 \item $\underset{t\to +0}\Ltwolim\ \Psi(\cdot+ \mathrm it\lambda)= \psi(\cdot)$;
 \item there exists $\underset{t\to 1-0}{\Ltwolim}\ \Psi(\cdot+ \mathrm it\lambda)$;
 \item $\sup\limits_{t\in (0, 1)}\displaystyle\int_{\mathbb R}\big|\Psi(s+ \mathrm it\lambda)\big|^2\mathrm ds< \infty$.
\end{enumerate}
\end{definition}

For $\lambda\in \mathbb R$ let the operator of multiplication by $r^\lambda$ in $\mathsf L^2(\mathbb R_+, \textrm dr)$ be defined on its maximal domain $\mathsf L^2\big(\mathbb R_+, (1+ r^{2\lambda})\mathrm dr\big)$.
Applying the lemma of \cite{Titchmarsh} (Section 5.4, page 125) to justify the translations of the integration contour between different values of $t$ under Assumption 3 of Definition \ref{D lambda definition} we obtain
\begin{thm}\label{Mellin shift theorem}
Let $\lambda\in \mathbb R\setminus\{0\}$. Then the identity
\begin{equation*}
 \mathfrak D^\lambda =\mathcal M\mathsf L^2\big(\mathbb R_+, (1+ r^{2\lambda})\mathrm dr\big)
\end{equation*}
holds, and for any $\psi\in \mathfrak D^\lambda$ the function $\Psi$ from Definition \ref{D lambda definition} satisfies
\begin{equation*}
 \Psi(z) =(\mathcal Mr^{\Im z}\mathcal M^*\psi)(\Re z), \qquad\textrm{for all}\ z\in\mathfrak S^\lambda.
\end{equation*}
\end{thm}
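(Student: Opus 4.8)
The plan is to characterise $\mathfrak D^\lambda$ via the Mellin transform by recognising that the analytic function $\Psi$ in Definition \ref{D lambda definition} should be, on each horizontal line $\Im z = t\lambda$, the Mellin transform of $r^{t\lambda}$ times the original function $r^{-1/2}\psi(r)$-data. Concretely, for $\psi \in \mathcal M\mathsf L^2\big(\mathbb R_+, (1+ r^{2\lambda})\mathrm dr\big)$, write $\psi = \mathcal M f$ with $f \in \mathsf L^2\big(\mathbb R_+, (1+ r^{2\lambda})\mathrm dr\big)$, and set $\Psi(z) := (\mathcal M r^{\Im z} f)(\Re z)$. First I would verify that this is well-defined: since $f \in \mathsf L^2(\mathbb R_+, \mathrm dr) \cap \mathsf L^2(\mathbb R_+, r^{2\lambda}\mathrm dr)$, Hölder-type interpolation shows $r^{t\lambda} f \in \mathsf L^2(\mathbb R_+, \mathrm dr)$ for every $t \in [0,1]$ (the weight $r^{2t\lambda}$ is dominated by $1 + r^{2\lambda}$), so $\mathcal M(r^{t\lambda} f)$ makes sense and lies in $\mathsf L^2(\mathbb R)$, with $\mathsf L^2$-norm bounded uniformly in $t \in [0,1]$; this immediately gives Assumption 3, while Assumptions 1 and 2 follow from continuity of $t \mapsto r^{t\lambda} f$ in $\mathsf L^2(\mathbb R_+, \mathrm dr)$ (dominated convergence) composed with the unitary $\mathcal M$. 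This establishes the inclusion $\mathcal M\mathsf L^2\big(\mathbb R_+, (1+ r^{2\lambda})\mathrm dr\big) \subseteq \mathfrak D^\lambda$ together with the stated formula for $\Psi$ on that part.

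Next I would establish analyticity of $\Psi$ in the open strip $\mathfrak S^\lambda$. The natural route is to write, at least formally, $\Psi(s + \mathrm i t\lambda) = \frac1{\sqrt{2\pi}}\int_0^\infty r^{-1/2 - \mathrm is + t\lambda} f(r)\,\mathrm dr$, i.e. $\Psi(z) = \frac1{\sqrt{2\pi}}\int_0^\infty r^{-1/2 - \mathrm i z} f(r)\,\mathrm dr$, and to check that this integral converges locally uniformly in $z \in \mathfrak S^\lambda$ and defines a holomorphic function there by Morera's theorem (or differentiation under the integral sign). Since $f$ need only be $\mathsf L^2$ and not $\mathsf L^1$, some care is required: one splits the integral at $r = 1$, uses Cauchy–Schwarz against the weights, and exploits that for $z$ in a compact subset of $\mathfrak S^\lambda$ the exponent $t\lambda$ stays in a compact subinterval of $(0,1)\cdot\lambda$ (or its mirror), keeping both $\int_0^1 r^{-1 + 2t\lambda}\,\mathrm dr$ and $\int_1^\infty r^{-1 - 2(1-t)\lambda}\,\mathrm dr$ — scaled appropriately by the $\mathsf L^2$ weight — finite. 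One then identifies the boundary values of this holomorphic $\Psi$ with $\mathcal M(r^{t\lambda}f)$, which is consistent with the construction above.

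For the reverse inclusion $\mathfrak D^\lambda \subseteq \mathcal M\mathsf L^2\big(\mathbb R_+, (1+ r^{2\lambda})\mathrm dr\big)$, I would take $\psi \in \mathfrak D^\lambda$ with its analytic representative $\Psi$, set $f := \mathcal M^* \psi \in \mathsf L^2(\mathbb R_+, \mathrm dr)$, and must show $r^\lambda f \in \mathsf L^2(\mathbb R_+, \mathrm dr)$. This is exactly where the lemma of Titchmarsh (Section 5.4, p.~125) cited in the excerpt enters: that lemma licenses shifting the Mellin integration contour across the strip under the uniform $\mathsf L^2$-bound of Assumption 3, so that the $\mathsf L^2$-limit in Assumption 2, $\psi_1 := \Ltwolim_{t\to 1-0}\Psi(\cdot + \mathrm i t\lambda)$, equals $\mathcal M(r^\lambda f)$ in the sense that $\mathcal M^*\psi_1 = r^\lambda f$ almost everywhere. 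Concretely, one tests both sides against $\mathcal M$ of a dense family (e.g.\ $\mathsf C_0^\infty(\mathbb R_+)$ functions, whose Mellin transforms extend to entire functions with good decay), uses Parseval for $\mathcal M$ and the contour-shift lemma to move the line $\Im z = 0$ to $\Im z = t\lambda$ and then let $t \to 1$, concluding that the pairing of $\psi_1$ with such test functions coincides with the pairing of $r^\lambda f$; since $\psi_1 \in \mathsf L^2(\mathbb R)$, this forces $r^\lambda f \in \mathsf L^2(\mathbb R_+, \mathrm dr)$, hence $f \in \mathsf L^2\big(\mathbb R_+, (1+r^{2\lambda})\mathrm dr\big)$.

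The main obstacle is the rigorous justification of the contour shift and the attendant identification of boundary values: one must make precise, under only the weak $\mathsf L^2$-hypotheses of Definition \ref{D lambda definition} (no pointwise or $\mathsf L^1$ control), that the holomorphic function on the strip genuinely has the claimed Mellin-transform structure on every horizontal line, and in particular that its two boundary $\mathsf L^2$-limits are the Mellin transforms of $f$ and $r^\lambda f$ respectively. Everything else — the interpolation bound placing $r^{t\lambda}f$ in $\mathsf L^2$, the uniform norm bound, the continuity giving Assumptions 1--2, and holomorphy via Morera — is routine once the framework is set up; invoking Titchmarsh's lemma precisely as stated in the excerpt is what makes the equivalence, and the formula $\Psi(z) = (\mathcal M r^{\Im z}\mathcal M^*\psi)(\Re z)$, fall out.
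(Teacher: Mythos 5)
Your proposal is correct and follows essentially the same approach as the paper: the paper's proof consists of a single sentence invoking the lemma of Titchmarsh (Section 5.4, p.~125) to justify translating the Mellin contour under the uniform $\mathsf L^2$-bound of Assumption~3, and your outline fills in exactly the two inclusions and the boundary-value identification that this citation leaves implicit, with Titchmarsh's contour-shift lemma as the decisive ingredient in the reverse inclusion.
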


We conclude that $r^\lambda$ acts as a complex shift in the Mellin space. Indeed, for $\lambda\in \mathbb R$ let $R^\lambda: \mathfrak D^\lambda\to \mathsf L^2(\mathbb R)$ be the linear operator defined by
\begin{equation*}
 R^\lambda\psi:= \begin{cases}\underset{t\to 1-0}{\Ltwolim}\ \Psi(\cdot+ \mathrm it\lambda), &\lambda \neq 0;\\ \psi, &\lambda =0,\end{cases}
\end{equation*}
with $\Psi$ as in Definition \ref{D lambda definition}.
It follows from Theorem \ref{Mellin shift theorem} that $R^\lambda$ is well-defined and that
\begin{equation}\label{complex shifted}
 \mathcal Mr^\lambda\mathcal M^*= R^\lambda
\end{equation}
holds (see \cite{YaouancOliverRaynal}, Section II).

The following lemma will be needed later:
\begin{lem}\label{Mellin-Bessel lemma}
 Let $J_m$ be the Bessel function with $m\in \mathbb Z$. The relation
\begin{equation*}
 \bigg(\mathcal M\Big((-\mathrm i)^m\int_0^\infty\sqrt{\cdot r}J_m(\cdot r)\psi(r)\mathrm dr\Big)\bigg)(s) = \Xi_m(s)(\mathcal M\psi)(-s)
\end{equation*}
holds for every $\psi\in C_0^\infty\big([0, \infty)\big)$ and $s\in\mathbb R$ with
\begin{equation}\label{Xi}
 \Xi_m(s):= (-\mathrm i)^{|m|}2^{-\mathrm is}\dfrac{\Gamma\Big(\big(|m|+ 1- \mathrm is\big)/2\Big)}{\Gamma\Big(\big(|m|+ 1+ \mathrm is\big)/2\Big)}.
\end{equation}
\end{lem}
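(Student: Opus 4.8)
The plan is to compute the Mellin transform of the Bessel integral operator directly, using the known formula for the Mellin transform of the Bessel function. The key classical identity (see, e.g., \cite{dlmf}, 10.22.43, or standard tables of Mellin transforms) is
\begin{equation*}
 \int_0^\infty r^{z -1}J_m(r)\,\mathrm dr = \frac{2^{z -1}\Gamma\big((|m|+ z)/2\big)}{\Gamma\big((|m|- z)/2 +1\big)},
\end{equation*}
valid for $-|m| <\Re z <3/2$ (the lower bound ensures convergence at $r =0$, the upper bound controls the oscillatory decay at infinity). Since $\psi \in C_0^\infty\big([0,\infty)\big)$ vanishes near the origin, the inner integral $\int_0^\infty \sqrt{pr}\,J_m(pr)\psi(r)\,\mathrm dr$ is a perfectly nice (in fact Schwartz-class in $p$, decaying at infinity by the stationary-phase/oscillation of $J_m$) function, so its Mellin transform is well defined.

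First I would write out the Mellin transform explicitly:
\begin{equation*}
 \bigg(\mathcal M\Big((-\mathrm i)^m\int_0^\infty\sqrt{\cdot r}\,J_m(\cdot r)\psi(r)\,\mathrm dr\Big)\bigg)(s) = \frac{(-\mathrm i)^m}{\sqrt{2\pi}}\int_0^\infty\int_0^\infty p^{-\mathrm is}\sqrt{r}\,J_m(pr)\psi(r)\,\mathrm dr\,\mathrm dp,
\end{equation*}
then interchange the order of integration (justified by Fubini once one checks absolute integrability, which follows from the compact support of $\psi$ away from $0$ together with the decay $J_m(x) = O(x^{-1/2})$ as $x \to \infty$ and $J_m(x)=O(x^{|m|})$ as $x\to 0$; one may need to argue via a limiting procedure with a cutoff or interpret the $p$-integral as an oscillatory integral). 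After substituting $x = pr$ in the inner $p$-integral one gets a factor $r^{\mathrm is -1}$ times $\int_0^\infty x^{-\mathrm is}J_m(x)\,\mathrm dx$, which by the above Bessel–Mellin formula with $z = 1 -\mathrm is$ equals $2^{-\mathrm is}\Gamma\big((|m|+1 -\mathrm is)/2\big)\big/\Gamma\big((|m|+1 +\mathrm is)/2\big)$. Collecting the remaining $r$-integral $\frac1{\sqrt{2\pi}}\int_0^\infty r^{-1/2 +\mathrm is}\psi(r)\,\mathrm dr = (\mathcal M\psi)(-s)$ and using $(-\mathrm i)^m = (-\mathrm i)^{|m|}\cdot(-\mathrm i)^{m -|m|}$ — noting that for $m <0$ one uses $J_{-|m|} = (-1)^{|m|}J_{|m|}$ to replace $J_m$ by $J_{|m|}$, and that $(-1)^{|m|}(-\mathrm i)^{m} = (-\mathrm i)^{|m|}$ — yields precisely $\Xi_m(s)(\mathcal M\psi)(-s)$ with $\Xi_m$ as in \eqref{Xi}.

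The main obstacle is the rigorous justification of the interchange of the two integrals and of the use of the Bessel–Mellin formula at the purely imaginary exponent $z = 1 -\mathrm is$, where the integral $\int_0^\infty x^{-\mathrm is}J_m(x)\,\mathrm dx$ is only conditionally convergent. I would handle this by first establishing the formula for $\Re z$ in the open interval of absolute convergence (or by inserting a regulator $x^\varepsilon$, or a smooth cutoff, and passing to the limit), using analyticity in $z$ and the fact that the regularised $p$-integral converges uniformly; alternatively one can integrate by parts in $p$ to gain decay before applying Fubini. Since $\psi$ has compact support bounded away from $0$ and $\infty$, these technical points are routine but must be spelled out; everything else is the direct substitution above. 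One should also double-check the sign and phase bookkeeping in the case $m<0$, which is the other place a careless computation could go wrong.
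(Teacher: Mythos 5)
Your proposal is correct and follows essentially the same route as the paper: reduce to $m\geqslant 0$ via $J_{-m}=(-1)^mJ_m$, invoke the Mellin transform of $J_m$ (DLMF 10.22.43), and justify the interchange of integrals by truncating the conditionally convergent $p$-integral and passing to the limit. The paper implements the limiting step exactly as one of your suggested options — Fubini on $[0,R]\times\supp\psi$ followed by dominated convergence, using the uniform boundedness of $\int_0^L t^{-\mathrm is}J_m(t)\,\mathrm dt$ in $L$ — so the only difference is that the paper commits to this specific mechanism where you list several equivalent ones.
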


\begin{proof}
It is enough to prove the statement for $m\in\mathbb N_0$, since $J_{-m}= (-1)^mJ_m$, see 10.4.1 in \cite{dlmf}.
According to 10.22.43 in \cite{dlmf},
\begin{equation}\label{Mellin of J_m}
 \lim_{R\to \infty}(-\mathrm i)^{m}\int_0^R t^{-\mathrm is}J_m(t)\,\mathrm dt= \Xi_m(s).
\end{equation}
It follows that
\begin{equation*}
 \sup_{L> 0}\Big|\int_0^Lt^{-\mathrm is}J_m(t)\,\mathrm dt\Big|< \infty.
\end{equation*}
The claim now follows from the representation
\begin{align*}
 &\bigg(\mathcal M\Big((-\mathrm i)^{m}\int_0^\infty\sqrt{\cdot r}J_m(\cdot r)\psi(r)\mathrm dr\Big)\bigg)(s)\\ &=\lim_{R\to\infty}\frac{(-\mathrm i)^m}{\sqrt{2\pi}}\int_0^Rp^{-\mathrm is}\int_{\supp\psi}\sqrt rJ_m(pr)\psi(r)\,\mathrm dr\,\mathrm dp
\end{align*}
by Fubini's theorem, dominated convergence and \eqref{Mellin of J_m}.
\end{proof}

\begin{rem}\label{Xi bar and inverse remark}
For any $m \in\mathbb Z$ the function $\Xi_m$ introduced in \eqref{Xi} allows an analytic continuation to $\mathbb C\setminus \Big(-\mathrm i\big(1 +|m| +2\mathbb N_0\big)\Big)$, whereas
\begin{align}\label{Xi bar and inverse}
 \Xi_m^{-1}(\cdot) =\overline{\Xi_m(\overline\cdot)}
\end{align}
allows an analytic continuation to $\mathbb C\setminus \Big(\mathrm i\big(1 +|m| +2\mathbb N_0\big)\Big)$.
\end{rem}

\begin{lem}\label{R and Xi commutation lemma}
For $(m, \lambda)\in \mathbb Z\times[0, 1]$ and any $\psi\in\mathfrak D^\lambda\supset \mathfrak D^1$ with
\begin{align}\label{Xi multiplication condition}
\Xi_m^{-1}\psi =\overline{\Xi_m(\overline\cdot)}\psi\in \mathfrak D^\lambda
\end{align}
the commutation rule
\begin{align}\label{R and Xi commutation}
 R^\lambda\Xi_m^{-1}\psi =\Xi_m^{-1}(\cdot +\mathrm i\lambda)R^\lambda\psi
\end{align}
applies. Except for $(m, \lambda) =(0, 1)$ condition \eqref{Xi multiplication condition} is automatically fulfilled for all $\psi\in\mathfrak D^\lambda$.
\end{lem}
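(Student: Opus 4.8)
The plan is to prove the commutation rule \eqref{R and Xi commutation} by exploiting the analyticity description of $\mathfrak D^\lambda$ from Definition \ref{D lambda definition} and the fact, recorded in Remark \ref{Xi bar and inverse remark}, that $\Xi_m^{-1}(\cdot) =\overline{\Xi_m(\overline\cdot)}$ extends analytically to $\mathbb C\setminus\big(\mathrm i(1+|m|+2\mathbb N_0)\big)$, hence in particular to a neighbourhood of the closed strip $\overline{\mathfrak S^\lambda}$ whenever $\lambda\in[0,1]$ and $(m,\lambda)\neq(0,1)$ (the pole nearest the real axis sits at $\mathrm i(1+|m|)\geqslant\mathrm i\cdot 2$ in all remaining cases, and at $\mathrm i$ only when $m=0$). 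First I would take $\psi\in\mathfrak D^\lambda$ with $\Xi_m^{-1}\psi\in\mathfrak D^\lambda$ and let $\Psi$ be the analytic function attached to $\psi$ by Definition \ref{D lambda definition}, so that $\Psi(\cdot) = \psi$ as the boundary value from below and $R^\lambda\psi = \Ltwolim_{t\to1-0}\Psi(\cdot+\mathrm it\lambda)$. The natural candidate for the analytic function attached to $\Xi_m^{-1}\psi$ is the product $z\mapsto \Xi_m^{-1}(z)\Psi(z)$, which is analytic on $\mathfrak S^\lambda$ because $\Xi_m^{-1}$ is analytic there.

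The key steps are then: (i) verify that $\Xi_m^{-1}(z)$ is bounded on the closed strip $\overline{\mathfrak S^\lambda}$ (away from its poles), which follows from the explicit form \eqref{Xi} together with Stirling's formula — the ratio of Gamma functions along any horizontal line has polynomial-in-$|s|$ behaviour, and in fact the modulus $|\Xi_m(s+\mathrm i\tau)|$ is bounded on horizontal strips not meeting the poles; (ii) conclude that $\Xi_m^{-1}\Psi$ satisfies Assumption 3 of Definition \ref{D lambda definition}, using the uniform-in-$t$ bound on $\int|\Psi(s+\mathrm it\lambda)|^2\,\mathrm ds$ and the uniform bound on $|\Xi_m^{-1}|$; (iii) identify the two boundary values: at $t\to+0$ the $\mathsf L^2$-limit is $\Xi_m^{-1}\psi$ (which, by hypothesis, is the given element of $\mathfrak D^\lambda$), and at $t\to1-0$ the $\mathsf L^2$-limit is $\Xi_m^{-1}(\cdot+\mathrm i\lambda)R^\lambda\psi$, because $\Xi_m^{-1}$ is continuous up to the upper edge of the strip and $\Psi(\cdot+\mathrm it\lambda)\to R^\lambda\psi$ in $\mathsf L^2$. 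By the uniqueness of the analytic function in Definition \ref{D lambda definition} (or rather by Theorem \ref{Mellin shift theorem}, which shows $\Psi$ is determined by $\psi$), the function $\Xi_m^{-1}\Psi$ \emph{is} the analytic function attached to $\Xi_m^{-1}\psi$, and reading off its upper boundary value gives exactly \eqref{R and Xi commutation}.

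For the final sentence of the statement — that condition \eqref{Xi multiplication condition} is automatic for $(m,\lambda)\neq(0,1)$ — I would argue as follows. Given $\psi\in\mathfrak D^\lambda$ with attached analytic $\Psi$, the product $\Xi_m^{-1}\Psi$ is analytic on $\mathfrak S^\lambda$ provided $\Xi_m^{-1}$ has no poles in the \emph{open} strip, i.e.\ provided $\mathrm i(1+|m|+2\mathbb N_0)\cap\mathfrak S^\lambda=\varnothing$; since the lowest pole is at height $1+|m|$ and $\lambda\leqslant1$, the open strip $\{\,0<\Im z<\lambda\,\}$ contains a pole only if $1+|m|<\lambda\leqslant1$, which is impossible. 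One must also control behaviour at the upper edge $\Im z=\lambda$: a pole sits there precisely when $\lambda=1+|m|\in(0,1]$, forcing $|m|=0$ and $\lambda=1$, the excluded case. In all other cases $\Xi_m^{-1}$ is bounded on $\overline{\mathfrak S^\lambda}$ by step (i), so steps (ii)–(iii) go through verbatim and show $\Xi_m^{-1}\psi\in\mathfrak D^\lambda$, establishing \eqref{Xi multiplication condition}.

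The main obstacle I anticipate is step (i): obtaining a clean uniform bound on $|\Xi_m^{-1}(s+\mathrm i\tau)|$ for $s\in\mathbb R$ and $\tau$ in the relevant range, together with the continuity statements needed to pass to the $\mathsf L^2$-limits at both edges of the strip. This is a matter of applying Stirling's asymptotics to the ratio $\Gamma\big((|m|+1-\mathrm iz)/2\big)/\Gamma\big((|m|+1+\mathrm iz)/2\big)$ and checking that the factor $2^{-\mathrm iz}$ contributes $2^{\Im z}$, which is harmless on a bounded strip; one then needs that this ratio stays bounded away from the poles, uniformly in $s$. Once this analytic bookkeeping is in place, the identification of boundary values and the appeal to uniqueness are routine, so the heart of the argument is really the Gamma-function estimate plus the careful location of poles relative to the strip that distinguishes the exceptional pair $(0,1)$.
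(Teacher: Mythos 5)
Your argument is sound for $(m,\lambda)\neq(0,1)$, but it does not cover the exceptional pair $(0,1)$, which is exactly the case where the hypothesis \eqref{Xi multiplication condition} is not automatic and so the only case in which the lemma says something beyond its final sentence. When $(m,\lambda)=(0,1)$ the function $\Xi_0^{-1}$ has a pole at $z=\mathrm i$, which lies on the upper edge of $\mathfrak S^1$, and $\Xi_0^{-1}$ is therefore unbounded even on the open strip $\mathfrak S^1$. Consequently step (i) fails, step (ii) cannot verify Assumption~3 of Definition \ref{D lambda definition} for $\Xi_0^{-1}\Psi$ by multiplying the uniform bound on $\int|\Psi(s+\mathrm it)|^2\,\mathrm ds$ by a nonexistent bound on $|\Xi_0^{-1}|$, and in step (iii) the claim that the upper boundary value exists and equals $\Xi_0^{-1}(\cdot+\mathrm i)R^1\psi$ does not follow from "continuity up to the upper edge," because $\Xi_0^{-1}$ is not continuous there.

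The paper avoids this by a mid-strip argument that uses only the universally valid boundedness of $\Xi_m^{-1}$ on $\mathfrak S^{\lambda/2}$ (note $\lambda/2\leqslant 1/2<1$, so the pole never interferes): on $\mathfrak S^{\lambda/2}$ one shows $\Xi_m^{-1}\Psi$ is the analytic function attached to $\Xi_m^{-1}\psi$, hence coincides with the function $\Phi$ furnished directly by the hypothesis \eqref{Xi multiplication condition}; the identity theorem then gives $\Phi=\Xi_m^{-1}\Psi$ on all of $\mathfrak S^\lambda$. Crucially, the existence of the $\mathsf L^2$-limit of $\Phi$ at the upper edge is \emph{supplied by the hypothesis}, not deduced from boundedness of $\Xi_m^{-1}$, and the identification $R^\lambda\Xi_m^{-1}\psi=\Xi_m^{-1}(\cdot+\mathrm i\lambda)R^\lambda\psi$ is then completed by passing to an almost-everywhere convergent subsequence rather than by a dominated-convergence bound. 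You need to incorporate some version of this: the role of condition \eqref{Xi multiplication condition} is precisely to replace the missing boundedness in the $(0,1)$ case, and a proof that silently assumes boundedness of $\Xi_m^{-1}$ up to the top of the strip never actually uses that hypothesis.
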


\begin{proof}
 It follows from Remark \ref{Xi bar and inverse remark} that $\Xi_m^{-1}$ is analytic in $\mathfrak S^1$ and, for $(m, \lambda) \neq(0, 1)$, in a complex neighbourhood of $\mathfrak S^\lambda$. With the help of the Stirling asymptotic formula
\begin{equation}\label{Gamma asymptotics}
 \Gamma(z) =\sqrt{\frac{2\pi}z}\Big(\frac z{\mathrm e}\Big)^z\Big(1+ O\big(|z|^{-1}\big)\Big) \quad\textrm {for all }z\in\mathbb C\text{ with }|\arg z|<\pi -\delta,\quad \delta >0
\end{equation}
(see e.g. 5.11.3 in \cite{dlmf}) we conclude that the asymptotics
\begin{equation}\label{Xi asymptotics}
 \big|\Xi_m^{-1}(z)\big| =\big|\overline{\Xi_m(\overline{z})}\big| =|\Re z|^{-\Im z}\Big(1 +O\big(|z|^{-1}\big)\Big)\ \textrm{holds for }z\in\mathfrak S^1\ \textrm{as }|z|\to \infty.
\end{equation}
This implies that 
\begin{align}\label{Xi boundedness statement}
\Xi_m^{-1}\textrm{ is analytic and bounded in }\mathfrak S^\lambda\textrm{ for all }(m, \lambda)\in \big(\mathbb Z\times [0, 1]\big)\setminus \big\{(0, 1)\big\}, 
\end{align}
and the last statement of the lemma follows.

Since $\psi\in\mathfrak D^\lambda$, there exists $\Psi$ as in Definition \ref{D lambda definition}. Analogously, by \eqref{Xi multiplication condition} there exists $\Phi$ analytic in $\mathfrak S^\lambda$ corresponding to $\varphi:= \Xi_m^{-1}\psi$ as in Definition \ref{D lambda definition}. Then $\varphi$, $\psi\in \mathfrak D^{\lambda/2}$ and by \eqref{Xi boundedness statement}
\begin{align*}
 \Phi(\cdot +\mathrm i\lambda/2) =R^{\lambda/2}\varphi =R^{\lambda/2}\Xi_m^{-1}\psi =\Xi_m^{-1}(\cdot +\mathrm i\lambda/2)\Psi(\cdot +\mathrm i\lambda/2)
\end{align*}
holds on $\mathbb R$.
Thus $\Phi$ and $\Xi_m^{-1}\Psi$ must coincide on their joint domain of analyticity $\mathfrak S^\lambda$. Since $R^\lambda\Xi_m^{-1}\psi =\underset{t\to 1-0}\Ltwolim\ \Phi(\cdot+ \mathrm it\lambda)$ exists, it must coincide as a function on $\mathbb R$ with
\begin{align}\label{L2 limits}
\underset{t\to 1-0}\Ltwolim\ \Xi_m^{-1}(\cdot +\mathrm it\lambda)\Psi(\cdot+ \mathrm it\lambda) =\Xi_m^{-1}(\cdot +\mathrm i\lambda)\ \underset{t\to 1-0}\Ltwolim\ \Psi(\cdot+ \mathrm it\lambda) =\Xi_m^{-1}(\cdot +\mathrm i\lambda)R^\lambda\psi,
\end{align}
where the first equality in \eqref{L2 limits} can be justified by passing to an almost everywhere convergent subsequence.
\end{proof}

For $\lambda =1$, multiplying \eqref{R and Xi commutation} by $\Xi_m$ we conclude
\begin{cor}\label{V corollary}
For $m\in \mathbb Z$ and $\psi\in\mathfrak D^1$ (satisfying $\Xi_0^{-1}\psi\in \mathfrak D^1$ if $m =0$) the identity
\begin{align*}
 \Xi_m R^1\Xi_m^{-1}\psi =V_{|m|- 1/2}(\cdot +\mathrm i/2)R^1\psi
\end{align*}
holds with
\begin{equation}\label{V}
 V_j(z):=\frac{\Gamma\big((j+1+ \mathrm iz)/2\big)\Gamma\big((j+1- \mathrm iz)/2\big)}{2\Gamma\big((j+2+ \mathrm iz)/2\big)\Gamma\big((j+2- \mathrm iz)/2\big)},
\end{equation}
for $j\in\mathbb N_0- 1/2$ and $z\in \mathbb C\setminus \mathrm i(\mathbb Z+ 1/2)$.
\end{cor}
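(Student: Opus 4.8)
The proof is the short computation announced just before the statement: specialise the commutation rule of Lemma~\ref{R and Xi commutation lemma} to $\lambda = 1$, multiply by $\Xi_m$, and then do the bookkeeping of Gamma factors. First I would record that $\Xi_m$, regarded as a multiplication operator on $\mathsf L^2(\mathbb R)$, is unitary: for real $s$ the reflection identity $\overline{\Gamma(\overline z)} = \Gamma(z)$ gives $\bigl|\Gamma\bigl((|m|+1-\mathrm is)/2\bigr)\bigr| = \bigl|\Gamma\bigl((|m|+1+\mathrm is)/2\bigr)\bigr|$, hence $|\Xi_m(s)| = |2^{-\mathrm is}| = 1$ by \eqref{Xi}. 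In particular, multiplying an $\mathsf L^2(\mathbb R)$-identity by $\Xi_m$ is legitimate.

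Next, for $\psi\in\mathfrak D^1$ — with the extra hypothesis $\Xi_0^{-1}\psi\in\mathfrak D^1$ when $m = 0$, which is precisely condition \eqref{Xi multiplication condition} of Lemma~\ref{R and Xi commutation lemma} for $\lambda = 1$, a condition that is automatic when $m\neq 0$ — Lemma~\ref{R and Xi commutation lemma} gives $R^1\Xi_m^{-1}\psi = \Xi_m^{-1}(\cdot+\mathrm i)R^1\psi$ in $\mathsf L^2(\mathbb R)$. Multiplying by $\Xi_m$ yields $\Xi_m R^1\Xi_m^{-1}\psi = \Xi_m(\cdot)\,\Xi_m^{-1}(\cdot+\mathrm i)\,R^1\psi$, so it remains only to identify the scalar function $s\mapsto \Xi_m(s)\Xi_m^{-1}(s+\mathrm i)$ with $V_{|m|-1/2}(s+\mathrm i/2)$.

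For this last step I would use the explicit continuation $\Xi_m^{-1}(z) = \overline{\Xi_m(\overline z)} = \mathrm i^{|m|}2^{\mathrm iz}\Gamma\bigl((|m|+1+\mathrm iz)/2\bigr)\big/\Gamma\bigl((|m|+1-\mathrm iz)/2\bigr)$ from Remark~\ref{Xi bar and inverse remark} and \eqref{Xi bar and inverse}, evaluate it at $z = s+\mathrm i$ (so the Gamma arguments become $(|m|+\mathrm is)/2$ and $(|m|+2-\mathrm is)/2$, and the prefactor becomes $\mathrm i^{|m|}2^{\mathrm is-1}$), and multiply by $\Xi_m(s)$ from \eqref{Xi}. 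The phases combine as $(-\mathrm i)^{|m|}\mathrm i^{|m|} = 1$ and the powers of two as $2^{-\mathrm is}2^{\mathrm is-1} = 1/2$, leaving
\[
 \Xi_m(s)\Xi_m^{-1}(s+\mathrm i) = \frac{\Gamma\bigl((|m|+\mathrm is)/2\bigr)\Gamma\bigl((|m|+1-\mathrm is)/2\bigr)}{2\,\Gamma\bigl((|m|+1+\mathrm is)/2\bigr)\Gamma\bigl((|m|+2-\mathrm is)/2\bigr)}.
\]
Finally I would substitute $j = |m|-1/2$ and $z = s+\mathrm i/2$ into \eqref{V} and check that the four Gamma arguments $(j+1\pm\mathrm iz)/2$ and $(j+2\pm\mathrm iz)/2$ reduce respectively to $(|m|+\mathrm is)/2$, $(|m|+1-\mathrm is)/2$, $(|m|+1+\mathrm is)/2$, $(|m|+2-\mathrm is)/2$; this matches the display above and proves the identity.

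There is essentially no genuine obstacle here. The only care needed is in (a) keeping the conjugation/reflection identities for $\Gamma$ straight, so that both the unitarity of $\Xi_m$ and the formula for $\Xi_m^{-1}$ come out correctly; (b) tracking the phase and power‑of‑two prefactors through the shift $z\mapsto z+\mathrm i$; and (c) remembering that all equalities are in $\mathsf L^2(\mathbb R)$, i.e.\ hold for almost every $s$, which harmlessly sidesteps the pole of $V_{-1/2}$ at $s = 0$ (equivalently the blow‑up of $\Xi_0^{-1}(\cdot+\mathrm i)$ at $0$) in the special case $m = 0$.
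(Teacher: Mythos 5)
Your proposal is correct and is exactly the paper's argument: the paper derives the corollary in one line by setting $\lambda=1$ in Lemma~\ref{R and Xi commutation lemma} and multiplying by the unimodular function $\Xi_m$, and your Gamma-function bookkeeping identifying $\Xi_m(s)\Xi_m^{-1}(s+\mathrm i)$ with $V_{|m|-1/2}(s+\mathrm i/2)$ checks out.
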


We will need the following properties of $V_j$:

\begin{lem}\label{V monotonicity lemma}
For every $j\in \mathbb N_0- 1/2$ the function \eqref{V} is analytic in $\mathbb C\setminus \mathrm i(\mathbb Z+ 1/2)$ and has the following properties:
\begin{enumerate}
 \item $V_j(z) =V_j(-z)$, for all $z\in \mathbb C\setminus \mathrm i(\mathbb Z+ 1/2)$;
 \item $V_j(s)$ is positive and strictly monotonously decreasing for $s\in \mathbb R_+$;
 \item $V_j(\mathrm i\zeta)$ is positive and strictly monotonously increasing for $\zeta\in [0, 1/2)$;
 \item The relation
\begin{equation}\label{V via 1/V}
 \big(z^2 +(j +1)^2\big)V_{j}(z) = \big(V_{j +1}(z)\big)^{-1}
\end{equation}
holds for all $z\in \mathbb C\setminus \mathrm i(\mathbb Z+ 1/2)$.
\end{enumerate}
\end{lem}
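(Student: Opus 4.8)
The plan is to establish all four properties by direct manipulation of the Gamma-function representation \eqref{V}, using the classical analytic properties of $\Gamma$. The analyticity claim is immediate: $\Gamma$ is meromorphic with poles precisely at the non-positive integers, so each of the four Gamma factors in \eqref{V} contributes poles along purely imaginary lines, and the numerator-zeros of the denominator factors cancel against... more carefully, the zeros and poles can only lie where one of the arguments $(j+1\pm \mathrm iz)/2$ or $(j+2\pm \mathrm iz)/2$ is a non-positive integer; since $j\in\mathbb N_0-1/2$, this forces $z\in \mathrm i(\mathbb Z+1/2)$, and one checks the denominator has no zeros off this set, so $V_j$ is analytic on $\mathbb C\setminus\mathrm i(\mathbb Z+1/2)$. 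Property 1 ($V_j(z)=V_j(-z)$) is visible by inspection: swapping $z\mapsto -z$ merely exchanges the two numerator factors and the two denominator factors.

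For property 4, the relation \eqref{V via 1/V}, I would use the functional equation $\Gamma(w+1)=w\,\Gamma(w)$ applied to each factor. Writing $w=(j+1+\mathrm iz)/2$ we have $\Gamma((j+3+\mathrm iz)/2)=\frac{j+1+\mathrm iz}{2}\Gamma((j+1+\mathrm iz)/2)$, and similarly for the conjugate argument. Forming $V_{j+1}(z)$ from \eqref{V} (replace $j$ by $j+1$) and taking the product $\big(z^2+(j+1)^2\big)V_j(z)\cdot V_{j+1}(z)$, all eight Gamma factors should telescope via this recursion, leaving the constant $1$ after noting $(j+1+\mathrm iz)(j+1-\mathrm iz)=(j+1)^2+z^2$ and tracking the factors of $2$. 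This is a routine but careful bookkeeping computation.

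The substantive part is the monotonicity statements, properties 2 and 3, and I expect property 2 (or equivalently 3, since they are linked through \eqref{V via 1/V} shifting $j$ by one and $z$ between real and imaginary axes) to be the main obstacle. The cleanest route is to write $\log V_j$ using the digamma function $\psi=\Gamma'/\Gamma$ and differentiate. For real $s$, set $f(s):=\log V_j(s)$; then $f'(s)$ is a combination of $\psi$ evaluated at $(j+1\pm\mathrm is)/2$ and $(j+2\pm\mathrm is)/2$. Using the integral representation $\psi(w)=\int_0^\infty\big(\frac{e^{-u}}{u}-\frac{e^{-wu}}{1-e^{-u}}\big)\,\mathrm du$ (or the reflection/series representation $\psi(w)=-\gamma+\sum_{n\ge0}\big(\frac1{n+1}-\frac1{n+w}\big)$), one can express $-f'(s)$ as a manifestly positive integral: the combination of four complex-argument digammas collapses, by pairing conjugate arguments, into $\int_0^\infty \big(\text{something}\big)\cdot\sin(\cdot)$ or into a convergent sum of terms each of fixed sign. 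Alternatively, a slicker argument: use the Beta-function identity $V_j(s)=\tfrac12 B\big((j+1+\mathrm is)/2,(j+1-\mathrm is)/2\big)\big/ B\big((j+2+\mathrm is)/2,\dots\big)$... but in fact the most transparent approach is to use the product representation
\begin{equation*}
 V_j(z)=\frac12\prod_{n=0}^\infty \frac{(n+(j+2+\mathrm iz)/2)(n+(j+2-\mathrm iz)/2)}{(n+(j+1+\mathrm iz)/2)(n+(j+1-\mathrm iz)/2)}
\end{equation*}
coming from $1/\Gamma$'s Weierstrass product, which telescopes the exponential factors. Each factor in the product equals $\frac{(2n+j+2)^2+z^2}{(2n+j+1)^2+z^2}$, so
\begin{equation*}
 V_j(z)=\frac12\prod_{n=0}^\infty\frac{(2n+j+2)^2+z^2}{(2n+j+1)^2+z^2}.
\end{equation*}
Now property 1 is obvious again, property 2 follows because for $s\in\mathbb R_+$ each factor $\frac{a^2+s^2}{b^2+s^2}$ with $a>b>0$ is strictly decreasing in $s$, and property 3 follows because for $\zeta\in[0,1/2)$ and $z=\mathrm i\zeta$ each factor is $\frac{a^2-\zeta^2}{b^2-\zeta^2}$ with $a=2n+j+2>b=2n+j+1\ge j+1\ge 1/2>\zeta$, which is positive and strictly increasing in $\zeta$ (its derivative has the sign of $a^2-b^2>0$). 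Convergence of the infinite product is clear since the $n$-th factor is $1+O(n^{-2})$. The only real work, then, is deriving this product formula cleanly from \eqref{V}; once it is in hand, all four properties are essentially immediate, and \eqref{V via 1/V} can even be re-derived from it by shifting the index.
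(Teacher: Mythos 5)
Your treatment of analyticity, Property 1 (symmetry by inspection), and Property 4 (telescoping via $\Gamma(w+1)=w\Gamma(w)$) is correct and matches the paper where the paper bothers to prove these. The issue is your preferred route to Properties 2 and 3, the infinite product
\begin{equation*}
 V_j(z)=\frac12\prod_{n=0}^\infty\frac{(2n+j+2)^2+z^2}{(2n+j+1)^2+z^2}.
\end{equation*}
This product does not converge. Writing the $n$-th factor as
\begin{equation*}
 1+\frac{(2n+j+2)^2-(2n+j+1)^2}{(2n+j+1)^2+z^2}=1+\frac{4n+2j+3}{(2n+j+1)^2+z^2}=1+O(n^{-1}),
\end{equation*}
you see it is $1+O(1/n)$, not $1+O(n^{-2})$ as you claimed, so the product diverges to $+\infty$. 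The underlying reason is that the Weierstrass exponential regulators $e^{-w/n}$ do not cancel here: with $a=(j+1+\mathrm iz)/2$, $\bar a=(j+1-\mathrm iz)/2$, $b=(j+2+\mathrm iz)/2$, $\bar b=(j+2-\mathrm iz)/2$, the sum of numerator arguments minus denominator arguments is $(a+\bar a)-(b+\bar b)=-1\neq 0$, so a residual $\prod e^{-1/n}$ survives and is exactly what keeps the genuine Weierstrass product finite. Your formula drops it.

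The idea is salvageable: normalize by $V_j(0)$ (which by inspection is finite and positive). Then
\begin{equation*}
 \frac{V_j(z)}{V_j(0)}=\prod_{n=0}^\infty\frac{\big((2n+j+2)^2+z^2\big)(2n+j+1)^2}{\big((2n+j+1)^2+z^2\big)(2n+j+2)^2},
\end{equation*}
and now the $n$-th factor is $1+O(n^{-3})$, so the product converges absolutely, and the same factor-by-factor monotonicity argument you gave yields both Property 2 and Property 3. Alternatively, the digamma route you mentioned first is exactly what the paper does: it applies DLMF 5.7.7 to get $V_j'(s)=2sV_j(s)\sum_{k\geqslant 0}(-1)^{k+1}\big(s^2+(k+j+1)^2\big)^{-1}$, an alternating series whose first term dominates, and analogously on the imaginary axis. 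That computation is short, avoids convergence subtleties entirely, and is probably the path of least resistance; your product approach, once corrected as above, is a fine and arguably more transparent alternative, but in its present form it contains a genuine error.
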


\begin{proof}
\emph{2.} For $z\in\mathbb C\setminus(-\mathbb N_0)$ let \textpsi$(z):= \Gamma'(z)/\Gamma(z)$ be the digamma function. Differentiating \eqref{V} and using Formula 5.7.7 in \cite{dlmf} we obtain
\begin{equation*}\begin{split}
 V'_j(s)&= V_j(s)\Im\Big(\text\textpsi\big((j+2+ \mathrm is)/2\big) -\text\textpsi\big((j+1+ \mathrm is)/2\big)\Big)\\ &= 2sV_j(s) \sum_{k =0}^\infty\frac{(-1)^{k +1}}{s^2 +(k +j +1)^2} < 0, \quad\text{for all }s >0.
\end{split}
\end{equation*}
\emph{3.} Analogously to \emph{2}, we compute
\begin{equation*}
 \mathrm iV'_j(\mathrm i\zeta)= 2\zeta V_j(\mathrm i\zeta)\sum_{k =0}^\infty\frac{(-1)^{k}}{(k +j +1)^2 -\zeta^2} > 0, \quad\text{for all }\zeta\in [0, 1/2).
\end{equation*}
\emph{4.} Follows directly from \eqref{V} and the recurrence relation $\Gamma(z+1)=z\Gamma(z)$ (valid for all $z\in\mathbb C\setminus (-\mathbb N_0)$).
\end{proof}

\paragraph{Angular decomposition.}

We can represent arbitrary $u\in \mathsf L^2(\mathbb R^2)$ in the polar coordinates as
\begin{equation*}
 u(r, \theta)= \frac1{\sqrt{2\pi}}\sum_{m\in \mathbb Z}r^{-1/2}u_m(r)\mathrm e^{\mathrm im\theta}
\end{equation*}
with
\begin{equation*}
 u_m(r):= \sqrt{\frac r{2\pi}}\int_0^{2\pi}u(r,\theta)\mathrm e^{-\mathrm im\theta}\mathrm d\theta.
\end{equation*}
The map
\begin{equation}\label{W}
 \mathcal W:\mathsf L^2(\mathbb R^2)\to \underset{m\in \mathbb Z}\bigoplus\mathsf L^2(\mathbb R_+), \quad u\mapsto \underset{m\in \mathbb Z}\bigoplus u_m
\end{equation}
is unitary.

For the proof of the following lemma (based on Lemmata 2.1, 2.2 of \cite{Bouzouina}) see the proof of Theorem 2.2.5 in \cite{BalinskyEvans}.
\begin{lem}\label{WF Coulomb lemma}
For $m\in \mathbb Z$ and $z\in (1, \infty)$ let
\begin{equation}\label{Q}
 Q_{|m|- 1/2}(z):= 2^{-|m|-1/2}\int_{-1}^1(1- t^2)^{|m|- 1/2}(z -t)^{-|m|-1/2}\,\mathrm dt
\end{equation}
be the Legendre function of second kind, see \cite{WhittakerWatson}, Section 15.3.
Let the quadratic form $\mathfrak q_m$ be defined on $\mathsf L^2\big(\mathbb R_+, (1+ p^2)^{1/2}\mathrm dp\big)$ by
\begin{equation*}
 \mathfrak q_m[g] := \pi^{-1}\iint_{\mathbb R_+^2}\overline{g(p)}Q_{|m|- 1/2}\bigg(\frac12\Big(\frac {q}p +\frac p{q}\Big)\bigg)g(q)\,\mathrm dq\,\mathrm dp.
\end{equation*}
Then for every $f$ in the Sobolev space $\mathsf H^{1/2}(\mathbb R^2)$ the relation
\begin{equation*}
 \int_{\mathbb R^2}|\mathbf x|^{-1}\big|f(\mathbf x)\big|^2\,\mathrm d\mathbf x = \sum_{m\in \mathbb Z}\mathfrak q_m\big[(\mathcal F f)_m\big]
\end{equation*}
holds. 
\end{lem}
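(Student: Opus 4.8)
\emph{Proof proposal.}\ The claimed identity is a Parseval-type computation carried out in the angular-momentum representation of momentum space. The first step is to establish, for $f$ in a convenient dense subspace (say $f\in\mathsf C_0^\infty(\mathbb R^2)$), the relation
\begin{equation*}
 \int_{\mathbb R^2}|\mathbf x|^{-1}\big|f(\mathbf x)\big|^2\,\mathrm d\mathbf x =\frac1{2\pi}\iint_{\mathbb R^2\times\mathbb R^2}\overline{(\mathcal Ff)(\mathbf p)}\;\frac{(\mathcal Ff)(\mathbf q)}{|\mathbf p -\mathbf q|}\,\mathrm d\mathbf p\,\mathrm d\mathbf q,
\end{equation*}
which says that multiplication by $|\mathbf x|^{-1}$ is intertwined by $\mathcal F$ with convolution by $\mathcal F\big(|\cdot|^{-1}\big)$; in the normalisation \eqref{Fourier transform} the latter is the locally integrable function $\mathbf p\mapsto|\mathbf p|^{-1}$, since $\int_{\mathbb R^2}|\mathbf x|^{-1}\mathrm e^{-\mathrm i\mathbf p\cdot\mathbf x}\,\mathrm d\mathbf x =2\pi|\mathbf p|^{-1}$. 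For Schwartz $f$ the double integral converges absolutely (because $\mathcal Ff$ is Schwartz and $|\cdot|^{-1}\in\mathsf L^1_{\mathrm{loc}}(\mathbb R^2)$), and the identity follows from Parseval's theorem for tempered distributions together with the convolution formula for $\mathcal F\big(|f|^2\big)$.

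\emph{Angular decomposition.} Passing to polar coordinates $\mathbf p =(p,\omega_p)$, $\mathbf q =(q,\omega_q)$, inserting the harmonic expansion $(\mathcal Ff)(p,\omega) =(2\pi)^{-1/2}\sum_{m\in\mathbb Z}p^{-1/2}(\mathcal Ff)_m(p)\mathrm e^{\mathrm im\omega}$, and changing the angular variables to $\phi:=\omega_p -\omega_q$ and $\omega_q$, the integration over $\omega_q$ produces a factor $2\pi\delta_{mn}$ and decouples the channels. The remaining relative-angle integral in channel $m$ is evaluated via the classical expansion of the planar Coulomb kernel,
\begin{equation*}
 \frac1{2\pi}\int_0^{2\pi}\frac{\mathrm e^{-\mathrm im\phi}\,\mathrm d\phi}{\sqrt{p^2 +q^2 -2pq\cos\phi}} =\frac1{\pi\sqrt{pq}}\;Q_{|m| -1/2}\Big(\tfrac12\big(\tfrac pq +\tfrac qp\big)\Big),
\end{equation*}
which, after writing $p^2 +q^2 -2pq\cos\phi =2pq\big(\tfrac12(p/q +q/p) -\cos\phi\big)$, reduces to the integral representation \eqref{Q} of the Legendre function of the second kind (cf. \cite{WhittakerWatson}, \S15.3, and Lemmata 2.1--2.2 of \cite{Bouzouina}). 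Collecting all constants — the $(2\pi)^{-1}$ from the momentum-space rewriting, the $(2\pi)^{-1}$ from the two harmonic expansions, the factor $2\pi$ from $\delta_{mn}$, the weight $(\pi\sqrt{pq})^{-1}$ from the Legendre kernel, and the Jacobians $p\,\mathrm dp$ and $q\,\mathrm dq$ — everything recombines into the prefactor $\pi^{-1}$ with no residual power of $p$ or $q$, and the double integral becomes $\sum_{m\in\mathbb Z}\mathfrak q_m\big[(\mathcal Ff)_m\big]$. Applying this with an $f$ whose Fourier transform lives in a single channel shows, in particular, that each $\mathfrak q_m$ is non-negative and bounded on $\mathsf L^2\big(\mathbb R_+,(1 +p^2)^{1/2}\,\mathrm dp\big)$.

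\emph{Extension to $\mathsf H^{1/2}(\mathbb R^2)$.} The left-hand side of the asserted identity is a non-negative quadratic form bounded by $C\|f\|^2_{\mathsf H^{1/2}(\mathbb R^2)}$ — this is the two-dimensional Hardy inequality $\int_{\mathbb R^2}|\mathbf x|^{-1}|f|^2\,\mathrm d\mathbf x\leqslant C\big\|(-\Delta)^{1/4}f\big\|^2$, equivalently boundedness on $\mathsf L^2\big(\mathbb R^2,(1 +|\mathbf p|)\,\mathrm d\mathbf p\big)$ of the convolution form of the first step. Since $\mathsf C_0^\infty(\mathbb R^2)$ is dense in $\mathsf H^{1/2}(\mathbb R^2)$ and, by the first two steps, the right-hand side carries the same bound on this dense set, the identity extends by continuity to all of $\mathsf H^{1/2}(\mathbb R^2)$; the channel sum is controlled by decomposing $\mathcal Ff$ into its angular components — a series convergent in $\mathsf L^2\big(\mathbb R^2,(1 +|\mathbf p|)\,\mathrm d\mathbf p\big)$ because $f\in\mathsf H^{1/2}(\mathbb R^2)$ — and using the vanishing of the off-diagonal blocks of the convolution form found in the second step together with the boundedness of the $\mathfrak q_m$.

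\emph{Main difficulty.} The conceptual content is modest; the care goes into the first two steps — rigorously justifying the momentum-space rewriting for the singular, non-$\mathsf L^2$ function $|\cdot|^{-1}$ (distributional Fourier transform; absolute convergence permitting Fubini despite the integrable singularities of $|\mathbf p -\mathbf q|^{-1}$ and, after the angular integration, of $Q_{|m| -1/2}$ along the diagonal $p =q$), and tracking every factor of $2\pi$ so that the constant $\pi^{-1}$ in the definition of $\mathfrak q_m$ emerges exactly. The reference cited in the statement (the proof of Theorem 2.2.5 in \cite{BalinskyEvans}) addresses precisely these points.
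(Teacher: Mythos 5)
The paper itself provides no proof of this lemma; it simply points to the proof of Theorem 2.2.5 in \cite{BalinskyEvans}, which relies on Lemmata 2.1--2.2 of \cite{Bouzouina}. Your sketch is correct and reproduces the standard route used in those references: the convolution formula $\int|\mathbf x|^{-1}|f|^2 = (2\pi)^{-1}\iint\overline{\hat f(\mathbf p)}\hat f(\mathbf q)|\mathbf p-\mathbf q|^{-1}$ (with $\mathcal F(|\cdot|^{-1}) = |\cdot|^{-1}$ in this normalisation), followed by the Heine expansion of $|\mathbf p-\mathbf q|^{-1}$ in Legendre functions $Q_{|m|-1/2}$ after the change of variable $\phi=\omega_p-\omega_q$, and a density/continuity argument; I have checked the constant bookkeeping and it indeed produces the prefactor $\pi^{-1}$.
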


The natural Hilbert space for spin-$1/2$ particles is $\mathsf L^2(\mathbb R^2, \mathbb C^2)$. Moreover, the natural angular momentum decomposition associated to \eqref{Dirac symbol} is not given by \eqref{W}, but by
\begin{equation*}
 \mathcal A:= \mathcal{SW}\begin{pmatrix}
                  1& 0\\ 0&\mathrm i
                 \end{pmatrix},
\end{equation*}
 where the unitary operator $\mathcal S$ is defined as
\begin{equation}\label{S}
 \mathcal S: \underset{m\in \mathbb Z}\bigoplus\mathsf L^2(\mathbb R_+, \mathbb C^2)\to \underset{\varkappa\in \mathbb Z+1/2}\bigoplus\mathsf L^2(\mathbb R_+, \mathbb C^2), \quad \underset{m\in \mathbb Z}\bigoplus\binom{\varphi_m}{\psi_m} \mapsto \underset{\varkappa\in \mathbb Z+1/2}\bigoplus\binom{\varphi_{\varkappa-1/2}}{\psi_{\varkappa+1/2}}.
\end{equation}

For $\nu\in [0, 1/2]$ and $\varkappa\in\mathbb Z+1/2$ we define the operators $D_{\varkappa,\, \max}^\nu$ in $\mathsf L^2(\mathbb R_+, \mathbb C^2)$ by the differential expressions
\begin{equation}\label{d_kappa}
 d^\nu_\varkappa= \begin{pmatrix}
                   -\dfrac\nu r& -\dfrac{\mathrm d}{\mathrm dr}- \dfrac\varkappa r\\ \dfrac{\mathrm d}{\mathrm dr}- \dfrac\varkappa r & -\dfrac\nu r
                  \end{pmatrix}
\end{equation}
on their maximal domains 
\begin{equation*}
\mathfrak D(D^\nu_{\varkappa,\, \max}):= \big\{u\in \mathsf L^2(\mathbb R_+, \mathbb C^2)\cap\mathsf{AC}_{\loc}(\mathbb R_+, \mathbb C^2): d^\nu_\varkappa u\in \mathsf L^2(\mathbb R_+, \mathbb C^2)\big\}.
\end{equation*}

Let $D^\nu_{\max}$ be the maximal operator in $\mathsf L^2(\mathbb R^2, \mathbb C^2)$ corresponding to \eqref{Dirac symbol} on the domain
\begin{align*}
 \mathfrak D(D^\nu_{\max}):= \Big\{&u\in \mathsf L^2(\mathbb R^2, \mathbb C^2): \text{ there exists }w\in \mathsf L^2(\mathbb R^2, \mathbb C^2)\text{ such that }\\ &\langle u,d^\nu v\rangle =\langle w, v\rangle\text{ holds for all }v\in \mathsf C_0^\infty\big(\mathbb R^2\setminus\{0\}, \mathbb C^2\big)\Big\}.
\end{align*}

The following Lemma follows from Section 7.3.3 in \cite{Thaller}.

\begin{lem}\label{angular Dirac lemma}
The operator $D^\nu_{\max}$ preserves the fibres of the half-integer angular momentum decomposition and satisfies
\begin{equation*}
 \mathcal A\,D^\nu_{\max}\,\mathcal A^*= \underset{\varkappa\in \mathbb Z+1/2}\bigoplus D^\nu_{\varkappa,\, \max}.
\end{equation*}
\end{lem}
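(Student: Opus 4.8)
The plan is to reduce the claimed identity to a standard fact about the partial-wave decomposition of the free Dirac operator, and then to account for the Coulomb term separately. I would first recall that $\mathcal A = \mathcal{SW}\operatorname{diag}(1, \mathrm i)$, so that
\[
 \mathcal A\, D^\nu_{\max}\,\mathcal A^* = \mathcal S\,\mathcal W\begin{pmatrix}1 & 0\\ 0 & \mathrm i\end{pmatrix} D^\nu_{\max}\begin{pmatrix}1 & 0\\ 0 & -\mathrm i\end{pmatrix}\mathcal W^*\,\mathcal S^*,
\]
and the conjugation by $\operatorname{diag}(1, \mathrm i)$ is precisely the similarity transform that turns the Cartesian Dirac expression $-\mathrm i\boldsymbol\sigma\cdot\nabla$ into a form whose angular part is diagonalised by the scalar decomposition $\mathcal W$ twisted by the spin shift $\mathcal S$. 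Concretely, writing $-\mathrm i\boldsymbol\sigma\cdot\nabla$ in polar coordinates one gets a $2\times 2$ first-order operator whose off-diagonal entries carry the angular derivative $\partial_\theta$; passing to the Fourier modes $\mathrm e^{\mathrm im\theta}$ via $\mathcal W$ and then recombining the upper component of mode $\varkappa - 1/2$ with the lower component of mode $\varkappa + 1/2$ via $\mathcal S$ produces exactly the radial matrix expression $d^0_\varkappa$ of \eqref{d_kappa} with $\nu = 0$. The Coulomb potential $-\nu|\cdot|^{-1}$ is a scalar multiplication operator, rotationally invariant, so it commutes with all of $\mathcal W$, $\mathcal S$ and the spin rotation, and simply contributes the diagonal term $-\nu/r$ in each fibre; adding it gives $d^\nu_\varkappa$.

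The first concrete step is therefore to verify the free case: $\mathcal A\,D^0_{\max}\,\mathcal A^* = \bigoplus_{\varkappa} D^0_{\varkappa,\max}$, which is exactly the content of Section 7.3.3 of \cite{Thaller} once one strips away the mass term (set $m = 0$ there) and specialises the general formalism to two dimensions; one must check that the sign conventions and the ordering of Pauli matrices used in \eqref{Dirac symbol} match those of the reference, adjusting by the unitary $\operatorname{diag}(1, \mathrm i)$ built into $\mathcal A$ if needed. The second step is to observe that multiplication by $-\nu|\cdot|^{-1}$ acts within each fibre as $-\nu/r$, so that on the common core $\mathsf C_0^\infty(\mathbb R^2\setminus\{0\}, \mathbb C^2)$ — whose image under $\mathcal A$ is a core of each $D^\nu_{\varkappa, \max}$ restricted to smooth compactly supported radial data — the expressions agree. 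The third step is the domain bookkeeping: one must check that $\mathcal A$ maps $\mathfrak D(D^\nu_{\max})$ onto $\bigoplus_\varkappa \mathfrak D(D^\nu_{\varkappa, \max})$. This is where some care is required, since the maximal domains are defined by the distributional action of the differential expressions; but because $\mathcal A$ is unitary and intertwines the two differential expressions on the dense test-function space, and because both maximal operators are closed (being adjoints of the corresponding minimal operators), the domains must correspond — a function $u$ lies in $\mathfrak D(D^\nu_{\max})$ iff $d^\nu u \in \mathsf L^2$ in the distributional sense, and this property is fibre-wise after applying $\mathcal A$.

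The main obstacle I anticipate is not conceptual but notational: pinning down the exact unitary (the $\operatorname{diag}(1, \mathrm i)$ twist together with the half-integer relabelling in $\mathcal S$) that brings \eqref{Dirac symbol}, with this paper's particular Pauli-matrix convention, into the canonical radial form \eqref{d_kappa}, and checking that the resulting off-diagonal radial operators are $\pm\mathrm d/\mathrm dr \mp \varkappa/r$ rather than some sign- or index-shifted variant. One has to write out $\boldsymbol\sigma\cdot\nabla$ in polar coordinates, use $\partial_\theta \mathrm e^{\mathrm im\theta} = \mathrm i m\, \mathrm e^{\mathrm im\theta}$, absorb the $r^{-1/2}$ weight from \eqref{W} (which shifts the radial derivative by $-1/(2r)$ and is cancelled by a compensating term from the angular part), and confirm that the spin shift by $\pm 1/2$ in $\mathcal S$ converts the integer angular index $m$ into the half-integer $\varkappa$ appearing symmetrically in both off-diagonal entries. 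Once that bookkeeping is done, the Coulomb term and the domain identification follow with no further difficulty, and the Lemma is established.
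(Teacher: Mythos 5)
Your proposal is correct and follows essentially the same route as the paper, which offers no written proof beyond the citation of Section 7.3.3 in \cite{Thaller}: the content of that reference is precisely the partial-wave computation you carry out (polar form of $-\mathrm i\boldsymbol\sigma\cdot\nabla$, the $\diag(1,\mathrm i)$ twist and the $\mathcal S$ relabelling producing $d^\nu_\varkappa$, the rotationally invariant Coulomb term acting fibre-wise, and the identification of maximal domains via adjoints of the intertwined minimal operators). Your sign and index bookkeeping checks out against \eqref{d_kappa} and \eqref{S}, so nothing further is needed.
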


\noindent In the following lemma we construct particular self-adjoint restrictions of $D^\nu_{\varkappa,\,\max}$.

\begin{lem}\label{D^nu_kappa lemma}
For $\nu \in[0, 1/2]$ and $\varkappa\in (\mathbb Z+ 1/2)$ let
\begin{equation}\label{operator core}
 \mathfrak C^\nu_\varkappa:= \mathsf C_0^\infty(\mathbb R_+, \mathbb C^2) \dot + \begin{cases}
                                                                        \Span\{\psi^\nu_\varkappa\}, & \textrm{for } \varkappa =\pm1/2 \textrm{ and }\nu\in(0, 1/2];\\ \{0\}, & \textrm{otherwise},
                                                                       \end{cases}
\end{equation}
with
\begin{equation}\label{extra functions}
 \psi^\nu_{\varkappa}(r):= \sqrt{2\pi}\binom{\nu}{\sqrt{\varkappa^2 -\nu^2} -\varkappa}r^{\sqrt{\varkappa^2 -\nu^2}}\mathrm e^{-r}, \qquad r\in\mathbb R_+.
\end{equation}
Then the restriction of $D^\nu_{\varkappa,\,\max}$ to $\mathfrak C^\nu_\varkappa$ is essentially self-adjoint in $\mathsf L^2(\mathbb R_+, \mathbb C^2)$. We define $D^\nu_{\varkappa}$ to be the self-adjoint operator in $\mathsf L^2(\mathbb R_+, \mathbb C^2)$ obtained as the closure of this restriction.
\end{lem}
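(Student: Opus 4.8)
The plan is to treat the two cases of $\varkappa$ separately. For $\varkappa \neq \pm 1/2$ (and also for $\varkappa = \pm 1/2$ with $\nu = 0$), the differential expression $d^\nu_\varkappa$ in \eqref{d_kappa} is a regular Dirac-type system on $\mathbb R_+$ with a potential that is singular only at $r = 0$, and the relevant parameter $|\varkappa|$ satisfies $|\varkappa| \geqslant 3/2 > 1/2 \geqslant \nu$, so that $\sqrt{\varkappa^2 - \nu^2} \geqslant 1$. In this situation a standard limit-point/limit-circle analysis applies: at $r = \infty$ the expression is in the limit point case (the $-\nu/r$ terms are a relatively bounded perturbation of the free Dirac radial operator, which is limit point at infinity), and at $r = 0$ one checks, by inserting the ansatz $r^{\pm\sqrt{\varkappa^2-\nu^2}}$ into the indicial equation, that only one of the two fundamental solutions is square-integrable near the origin, i.e. we are in the limit point case at $0$ as well. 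Hence $D^\nu_{\varkappa,\max}$ restricted to $\mathsf C_0^\infty(\mathbb R_+,\mathbb C^2)$ is already essentially self-adjoint, and $\mathfrak C^\nu_\varkappa = \mathsf C_0^\infty(\mathbb R_+,\mathbb C^2)$ in this case, matching \eqref{operator core}.

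The substantive case is $\varkappa = \pm 1/2$ with $\nu \in (0, 1/2]$, where $\sqrt{\varkappa^2 - \nu^2} = \sqrt{1/4 - \nu^2} \in [0, 1/2)$. Here the origin is in the limit circle case: both fundamental solutions near $r = 0$ behave like $r^{\pm\sqrt{1/4-\nu^2}}$ (with a logarithmic modification when $\nu = 1/2$), and both are square-integrable near $0$. Thus $D^\nu_{\varkappa,\max}$ has deficiency indices $(1,1)$, and its self-adjoint extensions form a one-parameter family parametrised by a boundary condition at the origin. The role of the function $\psi^\nu_\varkappa$ in \eqref{extra functions} is to single out one such extension: I would verify directly that $\psi^\nu_\varkappa \in \mathfrak D(D^\nu_{\varkappa,\max})$ (it is $\mathsf{AC}_{\loc}$, square-integrable because of the $\mathrm e^{-r}$ decay at infinity and the exponent $\sqrt{\varkappa^2-\nu^2} > -1/2$ at the origin, and $d^\nu_\varkappa \psi^\nu_\varkappa$ is again of this form hence in $\mathsf L^2$), and that $\psi^\nu_\varkappa \notin \mathfrak C^\nu_\varkappa \setminus \Span\{\psi^\nu_\varkappa\}$ — more precisely that it is not in the closure of $\mathsf C_0^\infty(\mathbb R_+,\mathbb C^2)$ in the graph norm, which amounts to showing its boundary value at $0$ (in the appropriate symplectic/boundary-form sense) is nonzero. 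Then the restriction of $D^\nu_{\varkappa,\max}$ to $\mathsf C_0^\infty(\mathbb R_+,\mathbb C^2) \dot+ \Span\{\psi^\nu_\varkappa\}$ has deficiency indices $(0,0)$: it is symmetric (one checks the boundary form $[u,v]_0$ at the origin vanishes for $u, v$ in this space, using that $\psi^\nu_\varkappa$ is real up to the constant vector and a direct computation of the limiting bilinear concomitant), and it extends the essentially self-adjoint-or-deficiency-$(1,1)$ minimal operator by exactly the one dimension needed. By von Neumann's extension theory / the theory of boundary triples for singular ODEs, adding one suitable element to the minimal domain yields a self-adjoint operator, and its closure is self-adjoint; this gives the definition of $D^\nu_\varkappa$.

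The main obstacle I anticipate is the careful bookkeeping of the boundary behaviour at $r = 0$ in the borderline/critical channels $\varkappa = \pm 1/2$: establishing that $\psi^\nu_\varkappa$ really represents a \emph{nontrivial} and \emph{self-adjoint} boundary condition (in particular handling the degenerate endpoint $\nu = 1/2$, where $\sqrt{\varkappa^2-\nu^2} = 0$ and the two indicial solutions coalesce, so the second solution acquires a $\log r$), and checking that the boundary form vanishes on $\mathfrak C^\nu_\varkappa$. This is where one must compute $\lim_{r\to 0}\big(\overline{u_1(r)}\, v_2(r) - \overline{u_2(r)}\, v_1(r)\big)$ for $u, v \in \mathfrak C^\nu_\varkappa$ and confirm it is zero, using the explicit asymptotics of $\psi^\nu_\varkappa$; once that is in hand, essential self-adjointness on $\mathfrak C^\nu_\varkappa$ follows from the standard criterion that a symmetric restriction of $D^\nu_{\varkappa,\max}$ whose domain has the right codimension in $\mathfrak D(D^\nu_{\varkappa,\max})$ is essentially self-adjoint. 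Everything else — the limit point analysis at infinity, the regularity of $\psi^\nu_\varkappa$, the Frobenius analysis of the indicial equation — is routine.
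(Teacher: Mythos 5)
Your proposal is correct and takes essentially the same approach as the paper: a Weyl limit-point/limit-circle analysis of $d^\nu_\varkappa$ via the indicial exponents $\pm\sqrt{\varkappa^2-\nu^2}$ and the explicit fundamental solutions, establishing limit point at $\infty$ for all channels, limit point at $0$ exactly when $\varkappa^2-\nu^2\geqslant 1/4$, and limit circle at $0$ with deficiency indices $(1,1)$ when $\varkappa=\pm1/2$ and $\nu\in(0,1/2]$, so that adjoining $\psi^\nu_\varkappa$ to $\mathsf C_0^\infty(\mathbb R_+,\mathbb C^2)$ selects a self-adjoint extension. Two small remarks. First, for $\varkappa=\pm1/2$, $\nu=0$ one has $\sqrt{\varkappa^2-\nu^2}=1/2$, not $\geqslant 1$ as you assert; the relevant limit-point criterion at $0$ is $\sqrt{\varkappa^2-\nu^2}\geqslant 1/2$ (so that $r^{-\sqrt{\varkappa^2-\nu^2}}\notin\mathsf L^2\big((0,1)\big)$), which is met, so your conclusion stands. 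Second, the paper's final step is slightly more economical than your plan of verifying symmetry of the extension via the boundary form: it shows $\psi^\nu_\varkappa\notin\mathfrak D(D^\nu_{\varkappa,\,\min})$ using Weidmann's boundary condition $\lim_{\varepsilon\to+0}\langle\varphi(\varepsilon),\mathrm i\sigma_2\psi(\varepsilon)\rangle_{\mathbb C^2}=0$ for $\varphi\in\mathfrak D(D^\nu_{\varkappa,\,\min})$, $\psi\in\mathfrak D(D^\nu_{\varkappa,\,\max})$ (your ``nonzero boundary value'' criterion) and then cites a Birman--Solomjak result to conclude that a one-dimensional extension of a deficiency-$(1,1)$ minimal operator restricting the maximal one is self-adjoint, while your route (checking the boundary form vanishes on $\mathfrak C^\nu_\varkappa$, which is immediate since $\psi^\nu_\varkappa$ is a real constant spinor times a real scalar, whence $\langle\psi^\nu_\varkappa,\mathrm i\sigma_2\psi^\nu_\varkappa\rangle\equiv 0$) works just as well.
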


\begin{proof}
For $\nu\in [0, 1/2]$, $\varkappa\in \mathbb Z+1/2$ let $D^\nu_{\varkappa,\,\min}$ be the closure of the restriction of $D^\nu_{\varkappa,\,\max}$ to $\mathsf C_0^\infty(\mathbb R_+, \mathbb C^2)$.
To determine the defect indices of $D^\nu_{\varkappa,\,\min}$ we observe that the fundamental solution of the equation $d^\nu_\varkappa\varphi =0$ in $\mathbb R_+$ is a linear combination of two functions:
\begin{equation*}
 \varphi^\nu_{\varkappa,\,\pm}(r):= \begin{cases}\displaystyle\binom{1/2 \pm1/2}{1/2 \mp1/2}r^{\pm\varkappa},& \textrm{ for }\nu =0;\\ \displaystyle\binom{\nu}{\pm\sqrt{\varkappa^2 -\nu^2} -\varkappa}r^{\pm\sqrt{\varkappa^2 -\nu^2}},& \textrm{ for }0 <\nu^2 <\varkappa^2;\end{cases}
\end{equation*}
\begin{equation*}
 \varphi^\nu_{\varkappa,\,+} := \binom{\nu}{-\varkappa}\quad \textrm{and}\quad \varphi^\nu_{\varkappa,\,0}(r):= \binom{\nu\ln r}{1 -\varkappa\ln r}, \quad\textrm{for }\nu^2= \varkappa^2 =1/4.
\end{equation*}
Now we apply Theorems 1.4 and 1.5 of \cite{Weidmann1971}.
Since $\varphi^\nu_{\varkappa,\,+}\not\in \mathsf L^2\big((1, \infty)\big)$ for any $\varkappa$ and $\nu$, the differential expression \eqref{d_kappa} is in the limit point case at infinity. 
For $\varkappa^2 -\nu^2\geqslant 1/4$ we have $\varphi^\nu_{\varkappa,\,-}\not\in \mathsf L^2\big((0, 1)\big)$ and hence \eqref{d_kappa} is in the limit point case at zero. In this case the defect indices of $D^\nu_{\varkappa,\,\min}$ are zero and thus $D^\nu_{\varkappa,\,\min}$ is self-adjoint.

For $\varkappa^2 -\nu^2< 1/4$, i.e. $\varkappa =\pm1/2$ and $\nu \in(0, 1/2]$,  any solution of $d^\nu_\varkappa\varphi =0$ belongs to $\mathsf L^2\big((0, 1)\big)$ and hence \eqref{d_kappa} is in the limit circle case at zero with the deficiency indices of $D^\nu_{\varkappa,\,\min}$ being $(1, 1)$. In this case every one-dimensional extension of $D^\nu_{\varkappa,\,\min}$ which is a restriction of $D^\nu_{\varkappa,\,\max}$ is self-adjoint (see e.g. \cite{BirmanSolomjak}, Section 4.4.1).
Theorem 1.5(2) in \cite{Weidmann1971} implies
\begin{equation}\label{general bc}
 \lim_{\varepsilon\to +0}\langle \varphi(\varepsilon), \mathrm i\sigma_2\psi(\varepsilon)\rangle_{\mathbb C^2} =0 \quad\textrm{for all }\varphi\in \mathfrak D(D^\nu_{\varkappa,\,\min}),\ \psi\in \mathfrak D(D^\nu_{\varkappa,\,\max}).
\end{equation}
Choosing $\psi(r) := \mathrm e^{-r}\varphi_{\varkappa, -}^\nu(r)$ for $\nu^2< \varkappa^2$ and $\psi(r) := \mathrm e^{-r}\varphi_{\varkappa, 0}^\nu(r)$ for $\nu^2= \varkappa^2$ in \eqref{general bc}, we conclude that $\psi^\nu_{\varkappa}\not\in\mathfrak D(D^\nu_{\varkappa,\,\min})$. Thus the closure of the restriction of $D^\nu_{\varkappa,\,\max}$ to $\mathfrak C^\nu_\varkappa$ is a one-dimensional extension of $D^\nu_{\varkappa,\,\min}$, hence a self-adjoint operator.
\end{proof}
For $\nu\in (0, 1/2]$ we now define
\begin{equation}\label{D^nu}
 D^\nu := \mathcal A^*\underset{\varkappa\in \mathbb Z+1/2}\bigoplus D^\nu_{\varkappa}\mathcal A.
\end{equation}
By Lemma \ref{angular Dirac lemma}, $D^\nu$ is a self-adjoint operator in $\mathsf L^2(\mathbb R^2, \mathbb C^2)$ corresponding to \eqref{Dirac symbol}.
Lemma \ref{D^nu_kappa lemma} implies
\begin{cor}\label{c: operator core}
Let $\delta_{\cdot, \cdot}$ be the Kronecker symbol. The set
\begin{equation*}
 \mathfrak C^\nu := \mathsf C_0^\infty\big(\mathbb R^2\setminus\{0\}, \mathbb C^2\big) \dot+ \Span\{\Psi^\nu_+, \Psi^\nu_-\}
\end{equation*}
with
\begin{equation*}
\begin{split}
 \Psi^\nu_\pm(r, \theta)&:= \Big(\mathcal A^*\bigoplus_{\varkappa\in \mathbb Z +1/2}\delta_{\varkappa, \pm1/2}\psi^\nu_{\pm1/2}\Big)(r, \theta)\\ &= \binom{\nu\mathrm e^{\mathrm i(\pm1/2 -1/2)\theta}}{-\mathrm i\big(\sqrt{1/4- \nu^2}\mp 1/2\big)\mathrm e^{\mathrm i(\pm1/2 +1/2)\theta}}r^{\sqrt{1/4- \nu^2}- 1/2}\mathrm e^{-r}
\end{split}\end{equation*}
is an operator core for $D^\nu$.
\end{cor}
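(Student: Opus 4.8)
The plan is to transfer the operator-core statement for the fibre operators $D^\nu_\varkappa$ from Lemma~\ref{D^nu_kappa lemma} back to $D^\nu$ through the unitary $\mathcal A$. Since $D^\nu = \mathcal A^*\bigl(\bigoplus_\varkappa D^\nu_\varkappa\bigr)\mathcal A$ by definition \eqref{D^nu}, a set $\mathfrak C$ is an operator core for $D^\nu$ if and only if $\mathcal A\,\mathfrak C$ is a core for the direct sum $\bigoplus_\varkappa D^\nu_\varkappa$, which in turn holds if and only if $\mathcal A\,\mathfrak C$ contains a set whose $\varkappa$-fibre is a core for each $D^\nu_\varkappa$ and which is dense enough in the direct-sum sense. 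So the first step is to verify that $\mathcal A$ maps $\mathsf C_0^\infty(\mathbb R^2\setminus\{0\},\mathbb C^2)$ onto a subspace of $\bigoplus_\varkappa \mathsf C_0^\infty(\mathbb R_+,\mathbb C^2)$ that is a core for $\bigoplus_\varkappa D^\nu_{\varkappa,\min}$, hence (where the minimal and maximal operators coincide, i.e. for $\varkappa^2-\nu^2\ge 1/4$) already a core for $D^\nu_\varkappa$; this is essentially the content of Lemma~\ref{angular Dirac lemma} combined with the fact that finite sums of products of a smooth compactly-supported radial function and an angular exponential $\mathrm e^{\mathrm i m\theta}$ are dense in $\mathsf C_0^\infty(\mathbb R^2\setminus\{0\})$ in the graph norm.

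Second, I would account for the two exceptional channels $\varkappa=\pm 1/2$ with $\nu\in(0,1/2]$. Here $D^\nu_{\varkappa,\min}$ has deficiency indices $(1,1)$ and, by Lemma~\ref{D^nu_kappa lemma}, $D^\nu_\varkappa$ is the closure of the restriction of $D^\nu_{\varkappa,\max}$ to $\mathsf C_0^\infty(\mathbb R_+,\mathbb C^2)\dot+\Span\{\psi^\nu_\varkappa\}$. Thus in the direct sum we must adjoin one extra function in each of the $\varkappa=+1/2$ and $\varkappa=-1/2$ fibres, namely $\bigoplus_\varkappa \delta_{\varkappa,\pm1/2}\,\psi^\nu_{\pm1/2}$. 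Applying $\mathcal A^*$ to these two fibre vectors produces exactly $\Psi^\nu_\pm$; indeed, unwinding the definitions of $\mathcal S$ in \eqref{S} and of $\mathcal W$ in \eqref{W} together with the diagonal factor $\diag(1,\mathrm i)$ shows that the fibre $\psi^\nu_{1/2}=\sqrt{2\pi}\binom{\nu}{\sqrt{1/4-\nu^2}-1/2}r^{\sqrt{1/4-\nu^2}}\mathrm e^{-r}$ in channel $\varkappa=1/2$ corresponds to angular indices $m=0$ (upper component) and $m=1$ (lower component), which after multiplying the lower component by $\mathrm i^{-1}=-\mathrm i$ and inserting the radial prefactor $r^{-1/2}$ reproduces the formula displayed for $\Psi^\nu_+$; the computation for $\Psi^\nu_-$ is the mirror image with $\varkappa=-1/2$, $m=-1$ and $m=0$. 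This is a routine but slightly fiddly bookkeeping of the three unitaries.

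Third, I would assemble the pieces: the direct sum $\bigoplus_\varkappa D^\nu_\varkappa$ has as an operator core the algebraic direct sum of the per-channel cores $\mathfrak C^\nu_\varkappa$ from \eqref{operator core}, restricted to finitely many channels, and since $\mathcal A$ is unitary and intertwines the two operators, the preimage $\mathcal A^*$ of this set is a core for $D^\nu$. The preimage of $\bigoplus_\varkappa\mathsf C_0^\infty(\mathbb R_+,\mathbb C^2)$ (finite sums) is precisely $\mathsf C_0^\infty(\mathbb R^2\setminus\{0\},\mathbb C^2)$, while the preimage of the two extra one-dimensional spaces is $\Span\{\Psi^\nu_+,\Psi^\nu_-\}$; the sum is direct because the $\Psi^\nu_\pm$ have a non-integrable-at-zero radial singularity $r^{\sqrt{1/4-\nu^2}-1/2}\notin\mathsf C_0^\infty$ behaviour near the origin (more precisely, they are not in $\mathfrak D(D^\nu_{\varkappa,\min})$, as shown in the proof of Lemma~\ref{D^nu_kappa lemma}), so $\mathfrak C^\nu = \mathsf C_0^\infty(\mathbb R^2\setminus\{0\},\mathbb C^2)\dot+\Span\{\Psi^\nu_+,\Psi^\nu_-\}$ is an operator core for $D^\nu$.

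The main obstacle I anticipate is not any deep analysis but rather making the core-of-a-direct-sum argument precise and correctly matching indices through the chain $\mathcal A=\mathcal S\mathcal W\,\diag(1,\mathrm i)$: one must be careful that ``algebraic direct sum of cores, with only finitely many nonzero fibres'' is genuinely a core for $\bigoplus_\varkappa D^\nu_\varkappa$ (this uses that each $D^\nu_\varkappa$ is self-adjoint, so its graph is closed, and a standard approximation argument channel by channel followed by truncation in $\varkappa$), and that the angular-momentum shifts built into $\mathcal S$ send the single fibre index $\varkappa=\pm1/2$ to the pair of Fourier indices appearing in $\Psi^\nu_\pm$. Once the bookkeeping is set up, the statement follows immediately from Lemmata~\ref{angular Dirac lemma} and~\ref{D^nu_kappa lemma}.
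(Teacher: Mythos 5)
Your proposal is correct and follows exactly the route the paper intends (the paper simply asserts that Lemma~\ref{D^nu_kappa lemma} implies the corollary, and your argument is the natural filling-in: transfer the fibre cores $\mathfrak C^\nu_\varkappa$ through the unitary $\mathcal A$, using that an algebraic direct sum of cores is a core for the direct sum and that $\mathcal A^*$ of the two extra fibre vectors is $\Psi^\nu_\pm$). The only slip is the claim that the preimage of the finite algebraic sum $\bigoplus_\varkappa\mathsf C_0^\infty(\mathbb R_+,\mathbb C^2)$ is \emph{precisely} $\mathsf C_0^\infty(\mathbb R^2\setminus\{0\},\mathbb C^2)$ --- it is only the subspace of finite angular sums --- but you already supply the correct fix (density of such sums in graph norm together with $\mathfrak C^\nu\subseteq\mathfrak D(D^\nu)$), so the argument stands.
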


\begin{rem}\label{canonicity remark}
For a particular class of non-semibounded operators a distinguished self-adjoint realisation can be selected by requiring the positivity of the Schur complement (see \cite{EstebanLoss}). In this sense $D^\nu +\diag(1, -1)$ is a distinguished self-adjoint realisation of the massive Coulomb-Dirac operator as proven in \cite{Mueller}.
\end{rem}

\paragraph{MWF-transform.}
We now introduce the unitary transform 
\begin{equation}\label{T}
\mathcal T:\mathsf L^2(\mathbb R^2)\to \underset{m\in \mathbb Z}\bigoplus\mathsf L^2(\mathbb R), \quad \mathcal T:= \mathcal M\mathcal W\mathcal F,
\end{equation}
where $\mathcal M$ acts fibre-wise.
A direct calculation using Lemmata \ref{Fourier channel lemma} and \ref{Mellin-Bessel lemma} gives
\begin{equation}\label{T in action}
\mathcal T\varphi= \underset{m\in \mathbb Z}\bigoplus\mathcal T_m\,\varphi_m,
\end{equation}
where for $m\in\mathbb Z$ the operators $\mathcal T_m: \mathsf L^2(\mathbb R_+)\to \mathsf L^2(\mathbb R)$ are given by
\begin{equation}\label{T components}
 (\mathcal T_m\phi)(s):= \Xi_m(s)(\mathcal M\phi)(-s) \quad \textrm{for any }\phi\in \mathsf L^2(\mathbb R_+).
\end{equation}

In the following two lemmata we study the actions of several operators in the MWF-representation.
\begin{lem}\label{transformed operators lemma}
 The relations
\begin{equation}\label{sigma*p with T}
 (\mathcal S\mathcal T)(-\mathrm i\boldsymbol\sigma\cdot\nabla)(\mathcal S\mathcal T)^*= \underset{\varkappa\in \mathbb Z+ 1/2}\bigoplus (R^1\otimes\sigma_1)
\end{equation}
and for any $\lambda\in \mathbb R$
\begin{equation}\label{|p| with T}
 \mathcal T(-\Delta)^{\lambda/2}\mathcal T^*= \underset{m\in \mathbb Z}\bigoplus R^\lambda
\end{equation}
hold.
\end{lem}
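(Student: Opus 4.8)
The plan is to verify the two operator identities fibre by fibre, using the unitary decompositions already established. Both sides of \eqref{sigma*p with T} and \eqref{|p| with T} are unitarily equivalent to direct sums over angular channels, so it suffices to compute the action of the relevant differential expressions in each channel after transforming with $\mathcal T_m$ (respectively $\mathcal S\mathcal T$), and then match them with $R^\lambda$ (respectively $R^1\otimes\sigma_1$). Throughout I would work first on the dense core $\mathsf C_0^\infty(\mathbb R_+)$ (componentwise, with smooth angular parts $\mathrm e^{\mathrm im\theta}$) and then extend by closedness, since $R^\lambda$ on $\mathfrak D^\lambda=\mathcal M\mathsf L^2(\mathbb R_+,(1+r^{2\lambda})\mathrm dr)$ is the closure of its restriction to nice functions by Theorem \ref{Mellin shift theorem}.

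For \eqref{|p| with T}, I would start from $(\mathcal F u)_m = \mathcal F_m u_m$ where, by Lemma \ref{Fourier channel lemma}, the Fourier transform in channel $m$ is the Hankel-type transform $\phi\mapsto(-\mathrm i)^m\int_0^\infty\sqrt{\cdot\,r}\,J_m(\cdot\, r)\phi(r)\,\mathrm dr$ (after peeling off the $r^{-1/2}\mathrm e^{\mathrm im\theta}$ factors as in \eqref{auxiliary nonsense}). The operator $(-\Delta)^{\lambda/2}$ acts in Fourier space as multiplication by $p^\lambda$, which after the Mellin transform becomes the complex shift $R^\lambda$ by \eqref{complex shifted}. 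So in the MWF-representation channel $m$ reads $\mathcal T_m p^\lambda \mathcal T_m^{-1}$, and by \eqref{T components} this equals $\Xi_m(s)\,\big(\mathcal M p^\lambda \mathcal M^*\big)(-s)\,\Xi_m(s)^{-1}$ acting on the variable $-s$; the reflection $s\mapsto -s$ turns the complex shift by $+\mathrm i\lambda$ at argument $-s$ into a shift by $+\mathrm i\lambda$ at argument $s$, i.e.\ again $R^\lambda$, while the scalar factors $\Xi_m(s)/\Xi_m(s+\mathrm i\lambda)$ are exactly absorbed by the commutation rule of Lemma \ref{R and Xi commutation lemma} (equation \eqref{R and Xi commutation}). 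Carrying the $\Xi_m$-factors correctly through the reflection and the shift is the one place where signs are easy to get wrong, and that is the computation I would do most carefully; the conclusion is that $\mathcal T_m p^\lambda\mathcal T_m^{-1}=R^\lambda$ on the appropriate domain $\mathfrak D^\lambda$, whence \eqref{|p| with T}.

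For \eqref{sigma*p with T}, the point is that $-\mathrm i\boldsymbol\sigma\cdot\nabla$ is the $\nu=0$ case of \eqref{Dirac symbol}, so by Lemma \ref{angular Dirac lemma} it decomposes under $\mathcal A=\mathcal{SW}\mathrm{diag}(1,\mathrm i)$ into the fibre operators $D^0_{\varkappa,\max}$ with differential expression \eqref{d_kappa} at $\nu=0$, namely $\begin{pmatrix}0 & -\frac{\mathrm d}{\mathrm dr}-\frac\varkappa r\\ \frac{\mathrm d}{\mathrm dr}-\frac\varkappa r & 0\end{pmatrix}$. Since $\mathcal T=\mathcal{MWF}$, conjugating by $\mathcal S\mathcal T$ amounts to conjugating each $2\times 2$ fibre by $\mathcal M$ together with the channel Fourier transforms $\mathcal F_{\varkappa\mp1/2}$ in the two components (as dictated by $\mathcal S$ in \eqref{S}). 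The key algebraic fact is that the first-order operators $\frac{\mathrm d}{\mathrm dr}\mp\frac\varkappa r$ are intertwined by the Hankel transforms $\mathcal F_{|\varkappa|\pm1/2}$ into multiplication by $\pm\mathrm i p$ — this is the raising/lowering relation for Bessel functions, $J_{m-1}(x)-J_{m+1}(x)=2J_m'(x)$ together with $J_{m-1}(x)+J_{m+1}(x)=\tfrac{2m}{x}J_m(x)$, which after the $\sqrt r$-dressing becomes exactly $\mathcal F_{m\mp1}\big(\tfrac{\mathrm d}{\mathrm dr}\pm\tfrac{m}{r}\big)\mathcal F_m^{-1}=\mp\mathrm i p$ — and then $p$ in Mellin becomes $R^1$. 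Thus each fibre becomes $\begin{pmatrix}0 & R^1\\ R^1 & 0\end{pmatrix}=R^1\otimes\sigma_1$, giving \eqref{sigma*p with T}.

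The main obstacle I anticipate is purely bookkeeping rather than conceptual: tracking the indices and prefactors of the $\mathcal T_m$'s through the $\mathcal S$-shuffle \eqref{S} (the upper component carries $\varkappa-1/2$, the lower $\varkappa+1/2$, with an extra $\mathrm i$ from $\mathrm{diag}(1,\mathrm i)$), making sure the $(-\mathrm i)^m$ phases in \eqref{Fourier channel} and the $\Xi_m$-phases in \eqref{T components} combine so that the off-diagonal entries come out real and equal, and checking that the formal computations on $\mathsf C_0^\infty$ extend to the stated operator domains — for the latter one uses that both sides are closed and agree on a core, invoking \eqref{complex shifted}, Theorem \ref{Mellin shift theorem}, and Lemma \ref{R and Xi commutation lemma}. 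A small subtlety worth a remark is the exceptional pair $(m,\lambda)=(0,1)$ in Lemma \ref{R and Xi commutation lemma}, but since $R^1$ on $\mathfrak D^1$ is still well-defined and \eqref{sigma*p with T} only ever pairs $\varkappa\pm1/2$ with $\varkappa\in\mathbb Z+1/2$ (so the channel indices are never both $0$ simultaneously in a way that breaks the argument), this does not obstruct the proof.
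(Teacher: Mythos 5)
Your proposal reaches the right conclusions but takes a noticeably heavier route than the paper, and one intermediate claim is misstated. The paper's proof exploits that in $\mathcal T=\mathcal{MWF}$ the Fourier transform sits innermost: since $(-\Delta)^{\lambda/2}=\mathcal F^*p^\lambda\mathcal F$ and $-\mathrm i\boldsymbol\sigma\cdot\nabla=\mathcal F^*(\boldsymbol\sigma\cdot\mathbf p)\mathcal F$, conjugation by $\mathcal T$ cancels the Fourier transforms outright, leaving the multiplication operators $p^\lambda$ and $\begin{pmatrix}0& p\mathrm e^{-\mathrm i\omega}\\ p\mathrm e^{\mathrm i\omega}&0\end{pmatrix}$ to be decomposed by $\mathcal W$ \emph{in momentum space}: the factor $\mathrm e^{\pm\mathrm i\omega}$ shifts the channel index by $\pm1$, which is exactly undone by the reshuffle $\mathcal S$ of \eqref{S}, and then $\mathcal Mp^\lambda\mathcal M^*=R^\lambda$ by \eqref{complex shifted}. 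No Bessel recurrences, no $\Xi_m$, no Lemma \ref{R and Xi commutation lemma}. Your route for \eqref{sigma*p with T} — intertwining the coordinate-space radial expressions $\tfrac{\mathrm d}{\mathrm dr}\mp\tfrac\varkappa r$ with $\pm\mathrm ip$ via the Bessel raising/lowering identities — is the classical partial-wave diagonalisation of the free Dirac operator and can be made to work, but it forces you through exactly the phase bookkeeping you yourself identify as the danger zone, all of which the momentum-space argument avoids.

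The one genuinely incorrect step is your claim, in the argument for \eqref{|p| with T}, that the factors $\Xi_m(s)/\Xi_m(s+\mathrm i\lambda)$ are ``absorbed by the commutation rule of Lemma \ref{R and Xi commutation lemma}''. That rule does the opposite: conjugating the shift $R^\lambda$ by multiplication by $\Xi_m$ produces the nontrivial multiplier $\Xi_m(\cdot)\Xi_m^{-1}(\cdot+\mathrm i\lambda)$ rather than removing it — this is precisely how the Coulomb potential acquires its symbol $V_{|m|-1/2}(\cdot+\mathrm i/2)$ in Lemma \ref{1/r with T lemma} and Corollary \ref{V corollary}, so if the $\Xi_m$'s really had to be commuted through $R^\lambda$ they would not cancel. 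They disappear for a different reason: writing $\mathcal T_m=\mathcal M\circ\mathcal F_m$ with $\mathcal F_m$ the Hankel transform of Lemma \ref{Mellin-Bessel lemma}, the channel operator is $\mathcal T_m(\mathcal F_m^*p^\lambda\mathcal F_m)\mathcal T_m^*=\mathcal Mp^\lambda\mathcal M^*=R^\lambda$, i.e.\ the $\Xi_m$'s cancel against the inverse Hankel transform hidden inside $(-\Delta)^{\lambda/2}$ before any commutation is needed. Your own opening sentence for that part (``acts in Fourier space as multiplication by $p^\lambda$, which after the Mellin transform becomes $R^\lambda$'') already contains the correct mechanism; the subsequent detour through the reflection and the commutation rule should simply be deleted. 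The same remark disposes of your worry about the exceptional pair $(m,\lambda)=(0,1)$: it never enters.
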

\begin{proof}
For any $(\varphi, \psi)\in \mathsf H^1(\mathbb R^2, \mathbb C^2)$, applying \eqref{T}, \eqref{W} and \eqref{S} we obtain
\begin{align*}
 &\mathcal{ST}(-\mathrm i\boldsymbol\sigma\cdot\nabla)\binom{\varphi}{\psi}= \mathcal{SM}\mathcal W\begin{pmatrix}
                               0& p\mathrm e^{-\mathrm i\omega}\\ p\mathrm e^{\mathrm i\omega}& 0
                              \end{pmatrix}\binom{\mathcal F\varphi}{\mathcal F\psi}\\ &= \underset{\varkappa\in \mathbb Z +1/2}\bigoplus \mathcal Mp\binom{(\mathcal F\psi)_{\varkappa +1/2}}{(\mathcal F\varphi)_{\varkappa -1/2}} =\bigg(\underset{\varkappa\in \mathbb Z+ 1/2}\bigoplus(\mathcal Mp\mathcal M^*)\otimes \sigma_1\bigg)\mathcal{SMWF}\binom\varphi\psi,
\end{align*}
which according to \eqref{complex shifted} and \eqref{T} leads to \eqref{sigma*p with T}. To get \eqref{|p| with T}, the same argument applies with $p^\lambda$ instead of $\begin{pmatrix}
                               0& p\mathrm e^{-\mathrm i\omega}\\ p\mathrm e^{\mathrm i\omega}& 0
                              \end{pmatrix}$ and $\mathcal S$ removed.
\end{proof}

\begin{lem}\label{1/r with T lemma}
The relation
\begin{equation}\label{1/r with T}
\mathcal T|\cdot|^{-1}\mathcal T^*= \underset{m\in \mathbb Z}\bigoplus \Xi_m R^1\Xi_m^{-1}
\end{equation}
holds.
\end{lem}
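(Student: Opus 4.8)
The plan is to reduce \eqref{1/r with T} to a fibre-wise identity in the MWF-representation and then to prove that identity by a short chain of unitary manipulations based on \eqref{T components} and the complex-shift rule \eqref{complex shifted}.

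First I would note that $|\cdot|^{-1}$, being rotation invariant, is block diagonal with respect to \eqref{W}, namely $\mathcal W|\cdot|^{-1}\mathcal W^{*}=\underset{m\in\mathbb Z}\bigoplus r^{-1}$, where $r^{-1}$ is the operator of multiplication by $r^{-1}$ on $\mathsf L^{2}(\mathbb R_{+})$ with its maximal domain. Since $\mathcal F$ commutes with rotations, $\mathcal W\mathcal F\mathcal W^{*}$ is block diagonal as well, its $m$-th fibre being by Lemma \ref{Fourier channel lemma} the Fourier-Bessel transform $\mathcal H_{m}\colon\phi\mapsto(-\mathrm i)^{m}\int_{0}^{\infty}\sqrt{\cdot\,r}\,J_{m}(\cdot\,r)\phi(r)\,\mathrm dr$; hence $\mathcal T_{m}=\mathcal M\mathcal H_{m}$ by \eqref{T in action}, \eqref{T components} and Lemma \ref{Mellin-Bessel lemma}. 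Combining these,
\begin{align*}
 \mathcal T|\cdot|^{-1}\mathcal T^{*}=\mathcal M\mathcal W\mathcal F\,|\cdot|^{-1}\,\mathcal F^{*}\mathcal W^{*}\mathcal M^{*}=\underset{m\in\mathbb Z}\bigoplus\mathcal T_{m}\,r^{-1}\,\mathcal T_{m}^{*},
\end{align*}
so it suffices to show $\mathcal T_{m}r^{-1}\mathcal T_{m}^{*}=\Xi_{m}R^{1}\Xi_{m}^{-1}$ for every $m\in\mathbb Z$, the right-hand side being, as in \eqref{1/r with T}, the composition of multiplication by $\Xi_m$, the shift $R^1$, and multiplication by $\Xi_m^{-1}$.

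For the fibre-wise identity, \eqref{T components} says exactly that $\mathcal T_{m}=M_{\Xi_{m}}J\mathcal M$, where $M_{\Xi_{m}}$ is multiplication by $\Xi_{m}$ --- a \emph{unitary} operator on $\mathsf L^{2}(\mathbb R)$, since $|\Xi_{m}(s)|=1$ for real $s$ by \eqref{Xi} --- and $(J\psi)(s):=\psi(-s)$. Taking adjoints and using $\mathcal Mr^{-1}\mathcal M^{*}=R^{-1}$ from \eqref{complex shifted}, I obtain
\begin{align*}
 \mathcal T_{m}\,r^{-1}\,\mathcal T_{m}^{*}=M_{\Xi_{m}}\,J\big(\mathcal Mr^{-1}\mathcal M^{*}\big)J\,M_{\overline{\Xi_{m}}}=M_{\Xi_{m}}\,JR^{-1}J\,M_{\overline{\Xi_{m}}}.
\end{align*}
Now $J=\mathcal M\mathcal I\mathcal M^{*}$, where $(\mathcal Ig)(r):=r^{-1}g(1/r)$ is the unitary inversion on $\mathsf L^{2}(\mathbb R_{+})$; since $\mathcal I r^{-1}\mathcal I=r$, a second use of \eqref{complex shifted} gives $JR^{-1}J=\mathcal M(\mathcal I r^{-1}\mathcal I)\mathcal M^{*}=\mathcal Mr\,\mathcal M^{*}=R^{1}$, while by Theorem \ref{Mellin shift theorem} $\mathcal I$ carries $\mathsf L^{2}\big(\mathbb R_{+},(1+r^{-2})\mathrm dr\big)$ onto $\mathsf L^{2}\big(\mathbb R_{+},(1+r^{2})\mathrm dr\big)$, so that $J$ maps $\mathfrak D^{-1}$ onto $\mathfrak D^{1}$. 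Finally $\overline{\Xi_{m}(s)}=\Xi_{m}(s)^{-1}$ for real $s$, which coincides with $\Xi_{m}^{-1}(s)$ by \eqref{Xi bar and inverse}, whence
\begin{align*}
 \mathcal T_{m}\,r^{-1}\,\mathcal T_{m}^{*}=M_{\Xi_{m}}R^{1}M_{\Xi_{m}^{-1}}=\Xi_{m}R^{1}\Xi_{m}^{-1},
\end{align*}
and summing over $m$ yields \eqref{1/r with T}.

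The step requiring the most care is the bookkeeping of operator domains, since $r^{-1}$, $R^{-1}$ and $R^{1}$ are unbounded and multiplication by $\Xi_{m}$ or $\Xi_{m}^{-1}$, although $\mathsf L^{2}(\mathbb R)$-unitary, need not preserve the shifted domains $\mathfrak D^{\pm1}$ (compare the exceptional case $(m,\lambda)=(0,1)$ in Lemma \ref{R and Xi commutation lemma}). I would therefore check explicitly, via Theorem \ref{Mellin shift theorem}, that both sides of the fibre-wise identity have domain $\big\{\psi\in\mathsf L^{2}(\mathbb R):\Xi_{m}^{-1}\psi\in\mathfrak D^{1}\big\}$ and agree there; the rest is the routine chain of unitary identities above. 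Alternatively one could identify $\mathcal T_{m}r^{-1}\mathcal T_{m}^{*}$ with the Legendre-function integral operator supplied by Lemma \ref{WF Coulomb lemma} and invoke the classical Mellin transform of $Q_{|m|-1/2}$, but the route above is more direct.
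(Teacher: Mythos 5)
Your proposal is correct and follows essentially the same route as the paper: the paper likewise reduces to the fibres via \eqref{T in action}, applies \eqref{complex shifted} to turn $r^{-1}$ into $R^{-1}$, uses the reflection $s\mapsto -s$ to convert $R^{-1}$ into $R^{1}$ (which you formalise via the inversion $\mathcal I$ conjugated by $\mathcal M$), and invokes \eqref{Xi bar and inverse} to identify $\overline{\Xi_m}$ with $\Xi_m^{-1}$. The only difference is presentational — the paper carries out the computation pointwise on a vector $\varphi\in\mathsf L^2\big(\mathbb R^2,(1+|\mathbf x|^{-2})\,\mathrm d\mathbf x\big)$ rather than as a chain of operator identities, which quietly settles the domain bookkeeping you flag at the end.
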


\begin{proof}
For any $\varphi\in \mathsf L^2\Big(\mathbb R^2, \big(1 +|\mathbf x|^{-2}\big)\,\mathrm d\mathbf x\Big)$ applying \eqref{T in action}, \eqref{T components}, \eqref{complex shifted} and \eqref{Xi bar and inverse} we obtain for almost every $s\in \mathbb R$
\begin{equation*}\begin{split}
 &\Big(\mathcal T\big(|\cdot|^{-1}\varphi\big)\!\Big)(s)= \underset{m\in \mathbb Z}\bigoplus\Xi_{m}(s)\Big(\mathcal M\big((\cdot)^{-1}\varphi_m\big)\!\Big)(-s)= \underset{m\in \mathbb Z}\bigoplus\Xi_{m}(s)(R^{-1}\mathcal M\varphi_{m})(-s)\\
 &= \underset{m\in \mathbb Z}\bigoplus\Xi_{m}(s)\Big(R^1\big((\mathcal M\varphi_{m})(-\cdot)\big)\Big)(s)= \underset{m\in \mathbb Z}\bigoplus\Xi_{m}(s)\big(R^1\Xi_m^{-1}\mathcal T_{m}\,\varphi_{m}\big)(s).
 \end{split}
\end{equation*}
This together with \eqref{T in action} gives \eqref{1/r with T}.
\end{proof}

\paragraph{U-transform.}

For $\varkappa\in \mathbb Z +1/2$ let the unitary operators $\mathcal U_\varkappa:\mathsf L^2(\mathbb R_+, \mathbb C^2)\to \mathsf L^2(\mathbb R, \mathbb C^2)$ be defined by
\begin{align}\label{Upsi}
 \mathcal U_\varkappa\binom{\psi_1}{\psi_2} :=\Bigg(\mathcal{STA}^*\bigoplus_{\widetilde\varkappa\in \mathbb Z +1/2}\delta_{\widetilde\varkappa, \varkappa}\binom{\psi_1}{\psi_2}\Bigg)_{\varkappa} =\binom{\mathcal T_{\varkappa -1/2}\psi_1}{-\mathrm i\mathcal T_{\varkappa +1/2}\psi_2}.
\end{align}

A straightforward calculation involving \eqref{T components}, \eqref{Mellin transform}, \eqref{V}, \eqref{V via 1/V} and the elementary properties of the gamma function delivers
\begin{lem}\label{T transformed extra functions lemma}
For $\nu\in (0, 1/2]$ let
\begin{equation*}
 \beta:= \sqrt{1/4 -\nu^2}.
\end{equation*}
The functions \eqref{extra functions} from the operator core $\mathfrak C^\nu_{\pm1/2}$ of $D^\nu_{\pm1/2}$ satisfy the relation
\begin{equation*}
 \mathcal U_{\pm1/2}\psi^\nu_{\pm1/2} =\chi_\pm^\nu\binom{1}{\nu V_{\mp1/2}(\mathrm i\beta)} + \binom{\xi_\pm^\nu}{\eta_\pm^\nu}
\end{equation*}
with
\begin{align}
 \chi_\pm^\nu&:= \nu\Xi_{\pm1/2- 1/2}\big(\mathrm i(\beta +1/2)\big)\frac{(\cdot -\mathrm i)\Gamma(\mathrm i\cdot+ \beta +1/2)}{\mathrm i(\beta -1/2)},\label{chi}\\
 \xi_\pm^\nu&:= \nu\Gamma(\mathrm i\cdot+ \beta+ 1/2)\Big(\Xi_{\pm1/2 -1/2}(\cdot) -\frac{(\cdot -\mathrm i)\Xi_{\pm1/2- 1/2}\big(\mathrm i(\beta +1/2)\big)}{\mathrm i(\beta -1/2)}\Big),\label{xi}\\
 \eta_\pm^\nu&:= \frac{\mathrm i\nu^2\Gamma(\mathrm i\cdot+ \beta+ 1/2)}{\beta \pm1/2}\Big(\Xi_{\pm1/2 +1/2}(\cdot) -\frac{(\cdot -\mathrm i)\Xi_{\pm1/2 +1/2}\big(\mathrm i(\beta +1/2)\big)}{\mathrm i(\beta -1/2)}\Big).\label{eta}
\end{align}
\end{lem}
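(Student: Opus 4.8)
\textbf{Proof plan for Lemma~\ref{T transformed extra functions lemma}.}
The plan is to compute $\mathcal U_{\pm1/2}\psi^\nu_{\pm1/2}$ directly from the definitions, starting from the explicit form \eqref{extra functions} of $\psi^\nu_{\pm1/2}$ and pushing it through the chain of transforms that make up $\mathcal U_{\pm1/2}$. By \eqref{Upsi}, the components of $\mathcal U_{\pm1/2}\psi^\nu_{\pm1/2}$ are $\mathcal T_{\pm1/2-1/2}$ and $-\mathrm i\mathcal T_{\pm1/2+1/2}$ applied to the two scalar entries of $\psi^\nu_{\pm1/2}$, and by \eqref{T components} each $\mathcal T_m$ acts as multiplication by $\Xi_m$ composed with the Mellin transform and a reflection $s\mapsto -s$. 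So the first concrete step is to evaluate, for the exponents $a=\sqrt{\varkappa^2-\nu^2}=\beta$ occurring in \eqref{extra functions}, the Mellin transform of $r^{a}\mathrm e^{-r}$: using \eqref{Mellin transform} this is $\frac1{\sqrt{2\pi}}\int_0^\infty r^{a-1/2-\mathrm is}\mathrm e^{-r}\,\mathrm dr=\frac1{\sqrt{2\pi}}\Gamma(a+1/2-\mathrm is)$, so after the reflection one gets a function proportional to $\Gamma(\beta+1/2+\mathrm is)$, i.e. $\Gamma(\mathrm i\cdot+\beta+1/2)$ up to the prefactor $\sqrt{2\pi}$ built into \eqref{extra functions}. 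Multiplying by the appropriate $\Xi_{m}(s)$ then gives each component of $\mathcal U_{\pm1/2}\psi^\nu_{\pm1/2}$ in closed form.

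The second step is purely algebraic rearrangement: one has two expressions of the shape $\nu\,\Xi_{\pm1/2-1/2}(s)\,\Gamma(\mathrm i s+\beta+1/2)$ and $-\mathrm i\big(\beta\mp1/2\big)\,\Xi_{\pm1/2+1/2}(s)\,\Gamma(\mathrm i s+\beta+1/2)$ (the second prefactor coming from the vector entry $\sqrt{\varkappa^2-\nu^2}-\varkappa=\beta\mp1/2$ in \eqref{extra functions}), and one wants to exhibit them as a multiple of the fixed vector $\binom{1}{\nu V_{\mp1/2}(\mathrm i\beta)}$ plus a remainder $\binom{\xi^\nu_\pm}{\eta^\nu_\pm}$. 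The natural choice is to take out the value of the first component at the special point $s=\mathrm i(\beta+1/2)$ — which is exactly what $\chi^\nu_\pm$ in \eqref{chi} records, the factor $(\cdot-\mathrm i)/\big(\mathrm i(\beta-1/2)\big)$ being an interpolation device that equals $1$ at $s=\mathrm i(\beta+1/2)$ and vanishes at $s=\mathrm i$, hence making $\xi^\nu_\pm$, $\eta^\nu_\pm$ in \eqref{xi}, \eqref{eta} analytic where needed (the $\Gamma$ pole at $s=\mathrm i(\beta+1/2)$ of $\Gamma(\mathrm i s+\beta+1/2)$... — more precisely $\Gamma(\mathrm i s+\beta+1/2)$ has no pole there; the construction is rather arranged so that $\xi^\nu_\pm$ and $\eta^\nu_\pm$ lie in the space $\mathfrak D^1$ relevant for later channel estimates). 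The content to verify is that the \emph{ratio} of the second to the first component, after removing the $\Gamma(\mathrm i s+\beta+1/2)$ factor, equals $\nu V_{\mp1/2}(\mathrm i\beta)$ at the point $s=\mathrm i(\beta+1/2)$; this is where \eqref{V} and the recurrence \eqref{V via 1/V} enter, together with the identity $\beta^2=1/4-\nu^2$, i.e. $(\beta+1/2)^2-\nu^2=\beta+1/2+(\beta^2-1/4)+\nu^2-\nu^2=\dots$ — one reduces everything to quotients of gamma functions at arguments differing by integers and uses $\Gamma(z+1)=z\Gamma(z)$.

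Concretely I would proceed as follows. First substitute \eqref{extra functions} into \eqref{Upsi} and apply \eqref{T components}; evaluate the Mellin transforms via the Gamma integral above to get both components as explicit products of $\Xi$'s and a single $\Gamma(\mathrm i\cdot+\beta+1/2)$. Second, factor out $\Gamma(\mathrm i\cdot+\beta+1/2)$ from the remainders, leaving in the ``principal part'' the linear-in-$s$ factor $(\cdot-\mathrm i)$ that is designed to cancel the zero of $\Gamma(\mathrm i\cdot+\beta+1/2)^{-1}\cdot(\text{something})$ — matching the shapes of \eqref{chi}, \eqref{xi}, \eqref{eta} by inspection. Third, check the one nontrivial scalar identity: that $-\mathrm i(\beta\mp1/2)\,\Xi_{\pm1/2+1/2}\big(\mathrm i(\beta+1/2)\big)\big/\big(\nu\,\Xi_{\pm1/2-1/2}\big(\mathrm i(\beta+1/2)\big)\big)=\nu V_{\mp1/2}(\mathrm i\beta)$, using \eqref{Xi}, \eqref{V}, the recurrence \eqref{V via 1/V}, the reflection/recurrence formulas for $\Gamma$, and $\beta^2+\nu^2=1/4$. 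The main obstacle will be this last bookkeeping step: $\Xi_{m}$ at an imaginary argument involves ratios $\Gamma\big((|m|+1-\mathrm is)/2\big)/\Gamma\big((|m|+1+\mathrm is)/2\big)$ which at $s=\mathrm i(\beta+1/2)$ become ratios of gammas at half-integer-shifted arguments, and one must carefully track the index shift $|m|\mapsto|m|\pm1$ coming from the two values $\pm1/2\mp1/2$ and $\pm1/2\pm1/2$ and the sign of $\varkappa=\pm1/2$, so that the $V_j$ with $j=\mp1/2$ and argument $\mathrm i\beta$ emerges with the correct prefactor $\nu$. Everything else is routine; I would present the Gamma-integral evaluation and the final identity and leave intermediate simplifications to the reader as ``a straightforward calculation'', consistent with how the lemma is announced.
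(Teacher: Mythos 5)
Your plan reproduces the paper's route: the paper itself only says the lemma "delivers" from "a straightforward calculation involving \eqref{T components}, \eqref{Mellin transform}, \eqref{V}, \eqref{V via 1/V} and the elementary properties of the gamma function," and your steps — Mellin of $r^\beta\mathrm e^{-r}$ giving $\Gamma(\beta+1/2-\mathrm is)$, then the reflection $s\mapsto -s$ and multiplication by $\Xi_m$, then a purely algebraic regrouping whose one nontrivial content is the scalar identity equating the ratio of the two channel coefficients at $s=\mathrm i(\beta+1/2)$ to $\nu V_{\mp1/2}(\mathrm i\beta)$ via $\Gamma(z+1)=z\Gamma(z)$, \eqref{Xi}, \eqref{V}, and $\beta^2+\nu^2=1/4$ — are exactly that calculation, and the identity checks out for both signs.

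One small slip in a parenthetical: you write that "$\Gamma(\mathrm is+\beta+1/2)$ has no pole" at $s=\mathrm i(\beta+1/2)$, but it does, since there $\mathrm is+\beta+1/2=0$. The role of the interpolating factor $(\cdot-\mathrm i)/\big(\mathrm i(\beta-1/2)\big)$ in \eqref{xi}, \eqref{eta} is precisely that it equals $1$ at $s=\mathrm i(\beta+1/2)$, so the bracket vanishes there and cancels this $\Gamma$-pole, ensuring $\xi^\nu_\pm,\eta^\nu_\pm\in\mathfrak D^1$; meanwhile $\chi^\nu_\pm$ retains the pole inside the strip $\mathfrak S^1$ (which is why it lands only in the larger domain $\mathfrak D(H_0^{\dots})$, the subject of Lemma~\ref{domain connection lemma}). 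This misstatement is tangential and does not affect the verification of the stated algebraic identity, which is what Lemma~\ref{T transformed extra functions lemma} asserts.
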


\section{Fourier-Mellin theory of the relativistic massless Coulomb operator in two dimensions}\label{scalar section}

For $\alpha\in \mathbb R$ consider the symmetric operator
\begin{equation*}
 \widetilde H^\alpha :=(-\Delta)^{1/2} - \alpha|\cdot|^{-1}
\end{equation*}
in $\mathsf L^2(\mathbb R^2)$ on the domain 
\begin{equation*}
\mathfrak D(\widetilde H^\alpha):= \mathsf H^1(\mathbb R^2)\cap \mathsf L^2\big(\mathbb R^2, |\mathbf x|^{-2}\,\mathrm d\mathbf x\big). 
\end{equation*}
According to Lemmata \ref{transformed operators lemma} and \ref{1/r with T lemma} and Corollary \ref{V corollary} we have
\begin{equation}\label{MWF Herbst}
 \mathcal{T}\widetilde H^\alpha\mathcal{T}^*= \underset{m\in \mathbb Z}\bigoplus \big( 1 -\alpha V_{|m|- 1/2}(\cdot + \mathrm i/2)\big)R^1 =:\underset{m\in \mathbb Z}\bigoplus\widetilde H^\alpha_m,
\end{equation}
where the right hand side is an orthogonal sum of operators in $\mathsf L^2(\mathbb R)$ densely defined on
\begin{equation}\label{H m alpha tilde domain}
\mathfrak D(\widetilde H^\alpha_m):= \big\{\varphi\in \mathfrak D^1: \Xi_m^{-1}\varphi\in \mathfrak D^1\big\}.
\end{equation}
According to Lemma \ref{R and Xi commutation lemma}, $\mathfrak D(\widetilde H^\alpha_m)= \mathfrak D^1$ for $m\in \mathbb Z\setminus\{0\}$.
On the other hand, $\mathfrak D(\widetilde H^\alpha_0) \neq\mathfrak D^1$ for any $\alpha\neq 0$, which corresponds to the absence of Hardy inequality in two dimensions.

\begin{lem}\label{critical constants lemma}
 For $m\in \mathbb Z$ and $\alpha \in\mathbb R$ the operator $\widetilde H^\alpha_m$ is symmetric. It is bounded below (and non-negative) in $\mathsf L^2(\mathbb R)$ if and only if
\begin{equation}\label{alpha_c}
 \alpha\leqslant\alpha_m:= \frac1{V_{|m| -1/2}(0)}= \frac{2\Gamma^2\Big(\big(2|m| +3\big)/4\Big)}{\Gamma^2\Big(\big(2|m|+ 1\big)/4\Big)}.
\end{equation}
\end{lem}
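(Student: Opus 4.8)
The plan is to diagonalise $\widetilde H^\alpha_m$ as far as possible using the structure $\widetilde H^\alpha_m = (1 - \alpha V_{|m|-1/2}(\cdot + \mathrm i/2))R^1$ established in \eqref{MWF Herbst}, and to reduce the semiboundedness question to a pointwise inequality for the multiplier $V_{|m|-1/2}$. First I would record symmetry: since $R^1 = \mathcal M r\mathcal M^*$ is self-adjoint (multiplication by $r$ on its maximal domain, conjugated by the unitary Mellin transform) and the Coulomb term $\Xi_m R^1 \Xi_m^{-1}$ is symmetric by Lemma \ref{1/r with T lemma} (being the MWF-image of multiplication by the real function $|\cdot|^{-1}$), the combination $\widetilde H^\alpha_m$ is symmetric on the domain \eqref{H m alpha tilde domain}; one should check that $\mathsf C_0^\infty(\mathbb R_+)$ or an analogous natural set sits inside this domain so that it is densely defined, which follows from Lemma \ref{R and Xi commutation lemma} and the analyticity/boundedness statement \eqref{Xi boundedness statement}.

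For the main equivalence, the key observation is that after conjugating by the unitary ``reflection'' $\psi \mapsto \psi(-\cdot)$ and by $\mathcal M^*$, the operator $R^1$ becomes multiplication by the variable $r \in \mathbb R_+$, while the form $\langle \varphi, \widetilde H^\alpha_m\varphi\rangle$ becomes, after using the commutation rule \eqref{R and Xi commutation} (equivalently Corollary \ref{V corollary} for the Coulomb part), a weighted integral whose integrand carries the factor $1 - \alpha V_{|m|-1/2}(s + \mathrm i/2)$ evaluated along the contour $\Im z = 1/2$. More precisely, I would show that for $\varphi$ in a suitable core one has
\begin{align*}
 \langle \varphi, \widetilde H^\alpha_m \varphi\rangle = \int_{\mathbb R} \big(1 - \alpha V_{|m|-1/2}(s)\big)\,\big|(\text{something unitary applied to }\varphi)(s)\big|^2 \,\mathrm ds,
\end{align*}
where the shift $+\mathrm i/2$ has been absorbed by moving the analytic function $\Psi$ from the line $\Im z = 0$ to the line $\Im z = 1/2$, i.e. by invoking Theorem \ref{Mellin shift theorem} and Definition \ref{D lambda definition} with $\lambda = 1$. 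The point is that $V_{|m|-1/2}(s + \mathrm i/2)$ with $s$ real is \emph{not} real, but the half-shift inherent in $R^1$ acting on $\mathfrak D^1$ means the natural spectral variable lives on the middle line $\Im z = 1/4$... — actually the cleaner route is: $R^1$ restricted to the relevant domain is unitarily equivalent, via the map sending $\psi \in \mathfrak D^1$ to the boundary value of its analytic continuation on the line $\Im z = 1/2$, to multiplication by a positive function, and simultaneously the Coulomb multiplier becomes $V_{|m|-1/2}$ evaluated on that same real line $\{\Im z = 1/2\} - \mathrm i/2 = \mathbb R$. Granting this, $\widetilde H^\alpha_m \geqslant 0$ iff $1 - \alpha V_{|m|-1/2}(s) \geqslant 0$ for all $s \in \mathbb R$, and the bound-below question has the same answer because the weight is bounded.

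Now the pointwise condition is handled entirely by Lemma \ref{V monotonicity lemma}: for $s \in \mathbb R$ the function $V_{|m|-1/2}(s)$ is even (part 1), positive and strictly decreasing on $\mathbb R_+$ (part 2), hence attains its supremum over $\mathbb R$ exactly at $s = 0$, with value $V_{|m|-1/2}(0)$. Therefore $\inf_{s}\big(1 - \alpha V_{|m|-1/2}(s)\big) \geqslant 0$ for $\alpha \geqslant 0$ iff $\alpha V_{|m|-1/2}(0) \leqslant 1$, i.e. $\alpha \leqslant \alpha_m := 1/V_{|m|-1/2}(0)$; for $\alpha \leqslant 0$ the operator is trivially non-negative and $\alpha \leqslant \alpha_m$ holds automatically since $\alpha_m > 0$. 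The explicit formula for $\alpha_m$ in \eqref{alpha_c} then drops out of the definition \eqref{V} of $V_j$ at $z = 0$: $V_j(0) = \Gamma^2((j+1)/2)/\big(2\Gamma^2((j+2)/2)\big)$ with $j = |m| - 1/2$.

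The main obstacle is making the reduction to the pointwise inequality fully rigorous, i.e. justifying that $\widetilde H^\alpha_m$ really is unitarily equivalent to multiplication by $1 - \alpha V_{|m|-1/2}(\cdot)$ on $\mathsf L^2(\mathbb R)$ (or at least has the same quadratic form up to a positive unitary weight). This requires care with the domain \eqref{H m alpha tilde domain}: one must verify that the analytic-continuation/contour-shift manipulations underlying $R^1 = \mathcal M r \mathcal M^*$ and the commutation identity \eqref{R and Xi commutation} are valid on a core dense enough to compute the form, and that the special channel $m = 0$ (where $\mathfrak D(\widetilde H^\alpha_0) \neq \mathfrak D^1$) is not an exception to the equivalence — only to the domain description. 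I expect this to be dispatched by choosing $\varphi$ with $\mathcal M^*\varphi \in \mathsf C_0^\infty(\mathbb R_+)$, for which all the integrals converge absolutely and Theorem \ref{Mellin shift theorem} applies verbatim, then passing to the form closure. Once the identity of quadratic forms is in hand, the semiboundedness criterion and the explicit constant are immediate from Lemma \ref{V monotonicity lemma}.
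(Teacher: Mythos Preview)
Your approach for the ``if'' direction and for symmetry is essentially the paper's: one establishes the form identity
\[
\langle \varphi, \widetilde H^\alpha_m\varphi\rangle = \int_{\mathbb R}\big(1 - \alpha V_{|m|-1/2}(s)\big)\,\big|(R^{1/2}\varphi)(s)\big|^2\,\mathrm ds
\]
via Corollary~\ref{V corollary} and Lemma~\ref{R and Xi commutation lemma}, and then reads off non-negativity from Lemma~\ref{V monotonicity lemma}. But the map appearing on the right is $R^{1/2} = \mathcal M r^{1/2}\mathcal M^*$, which is \emph{not} unitary --- it is unbounded. Your parenthetical ``something unitary applied to $\varphi$'' and the later hope that $\widetilde H^\alpha_m$ be ``unitarily equivalent to multiplication by $1 - \alpha V_{|m|-1/2}(\cdot)$'' are therefore incorrect, and in fact inconsistent with the lemma itself: a bounded multiplication operator is always bounded below, so if such a unitary equivalence held, the operator would be bounded below for every $\alpha$.

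This is exactly where your ``only if'' direction has a genuine gap. You assert that if the multiplier is somewhere negative then $\widetilde H^\alpha_m$ fails to be bounded below, ``because the weight is bounded''; but boundedness of the weight points the wrong way. What is needed is a mechanism to make $\|R^{1/2}\varphi\|$ arbitrarily large while $\|\varphi\|$ stays fixed and $R^{1/2}\varphi$ remains concentrated where the multiplier is negative. The paper supplies this in two steps: first it exhibits explicit $\varphi_n\in\mathfrak D(\widetilde H^\alpha_m)$ (Gaussians centred on the line $\Im z = 1/2$, with a correction factor to stay in the domain when $m=0$) for which the form is eventually negative; second --- and this is the key idea you are missing --- it replaces $\varphi_n(s)$ by $\lambda^{\mathrm is}\varphi_n(s)$ for $\lambda>0$. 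This substitution preserves the $\mathsf L^2$-norm and the domain, but under $\mathcal M^*$ it corresponds to the dilation $r\mapsto \lambda r$, so $(R^{1/2}\varphi)(s)$ is multiplied by $\lambda^{1/2}$ and the form scales by $\lambda$. Sending $\lambda\to\infty$ then yields unboundedness below. Without this homogeneity/scaling argument (or an equivalent one), the reverse implication is unproved.
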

\begin{proof}
For $m\in \mathbb Z$ and $\varphi\in \mathfrak D(\widetilde H^\alpha_m)\subset \mathfrak D^1$ using Corollary \ref{V corollary} and Lemma \ref{R and Xi commutation lemma} we compute
\begin{equation}\begin{split}\label{1}
\langle \varphi,\widetilde H_m^\alpha\varphi\rangle &=\langle\varphi, R^1\varphi\rangle -\alpha\big\langle\varphi, \Xi_m R^1\Xi_m^{-1}\varphi\big\rangle\\ &=\int_{-\infty}^{+\infty}\big( 1 -\alpha V_{|m|- 1/2}(s)\big)\big|(R^{1/2}\varphi)(s)\big|^2\mathrm ds.
\end{split}\end{equation}
Since the right hand side is real-valued, $\widetilde H^\alpha_m$ is symmetric.
By Lemma \ref{V monotonicity lemma}(1, 2) the condition \eqref{alpha_c} is equivalent to the non-negativity of $1 -\alpha V_{|m|- 1/2}(s)$ for all $s\in\mathbb R$. For $\alpha> \alpha_m$, $1 -\alpha V_{|m|- 1/2}$ is negative on an open interval $(-\rho, \rho)$ ($\rho$ depends on $m$ and $\alpha$). For $n\in\mathbb N$ the functions
\[
\varphi_n:= \widetilde\varphi_n/\|\widetilde\varphi_n\|_{\mathsf L^2(\mathbb R)}, \quad\textrm{with } \widetilde\varphi_n(s):= \mathrm e^{-n(s- \mathrm i/2)^2}\big(1 -\mathrm e^{-n^2(s- \mathrm i)^2}\big)
\]
are normalised in $\mathsf L^2(\mathbb R)$ and belong to $\mathfrak D(\widetilde H^\alpha_m)$. 
Using Lemma \ref{V monotonicity lemma} we estimate
\begin{equation}\label{V bound}
 1 -\alpha V_{|m|- 1/2} \leqslant \big(1 -\alpha V_{|m|- 1/2}(\rho/2)\big)\mathbbm1_{[-\rho/2, \rho/2]} + \mathbbm1_{\mathbb R\setminus (-\rho/2, \rho/2)}
\end{equation}
as a function on $\mathbb R$. It follows from \eqref{V bound} that for $n$ big enough \eqref{1} becomes negative with $\varphi:= \varphi_n$. But then replacing $\varphi_n(s)$ by $\lambda^{\mathrm is}\varphi_n(s)$ (still normalised and belonging to $\mathfrak D(\widetilde H^\alpha_m)$ for all $\lambda\in \mathbb R_+$) we can make the quadratic form \eqref{1} arbitrarily negative.
\end{proof}

Given $m\in \mathbb Z$, Lemma \ref{critical constants lemma} allows us for $\alpha\leqslant\alpha_m$ to pass from the symmetric operator $\widetilde H^\alpha_m$ to the self-adjoint operator $H^\alpha_m$ by Friedrichs extension \cite{Friedrichs}. 
The following description of the domains of $H^\alpha_m$ with $m\neq 0$, $\alpha \in (0, \alpha_m]$ follows analogously to Corollary 2 in \cite{YaouancOliverRaynal} (see also Section 2.2.3 of \cite{BalinskyEvans}) with the help of Lemma \ref{V monotonicity lemma}:
\begin{lem}\label{H alpha m domains lemma}
 Let $m\in\mathbb Z\setminus \{0\}$.
\begin{enumerate}
 \item For $\alpha < V^{-1}_{|m|-1/2}(\mathrm{i}/2)$ the operator $\widetilde H^\alpha_m$ is self-adjoint.
 \item For $\alpha = V^{-1}_{|m|-1/2}(\mathrm{i}/2)$ the operator $\widetilde H^\alpha_m$ is essentially self-adjoint.
 \item For $\alpha \in \big(V^{-1}_{|m|-1/2}(\mathrm{i}/2), \alpha_m\big]$ the Friedrichs extension $H^\alpha_m$ of $\widetilde H^\alpha_m$ is the restriction of
\begin{equation}\label{H_m^alpha^*}
 (\widetilde H_m^\alpha)^*= R^1\big(1 -\alpha V_{|m| -1/2}(\cdot\, -\mathrm i/2)\big)
\end{equation}
to
\begin{equation*}
 \mathfrak D(H^\alpha_m) =\mathfrak D^1\dot{+} \Span\big\{(\cdot\, -\mathrm i/2 +\mathrm i\zeta_{m, \alpha})^{-1}\big\},
\end{equation*}
where $\zeta_{m, \alpha}$ is the unique solution of
\begin{equation}\label{zeta}
 1- \alpha V_{|m| -1/2}(-\mathrm i\zeta_{m, \alpha}) =0
\end{equation}
in $(-1/2, 0]$.
\end{enumerate}
\end{lem}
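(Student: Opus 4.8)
The plan is to reduce everything to the one-dimensional multiplication picture supplied by \eqref{MWF Herbst}: after conjugation by $\mathcal T$ (and Corollary \ref{V corollary}), $\widetilde H^\alpha_m$ becomes the operator $\bigl(1 - \alpha V_{|m|-1/2}(\cdot + \mathrm i/2)\bigr)R^1$ on $\mathfrak D^1$, and since $m \neq 0$ we have by Lemma \ref{R and Xi commutation lemma} that this domain is all of $\mathfrak D^1$. By Theorem \ref{Mellin shift theorem}, $R^1$ is the complex-shift-by-$\mathrm i$ operator on $\mathfrak D^1 = \mathcal M \mathsf L^2(\mathbb R_+, (1+r^2)\,\mathrm dr)$, so $\widetilde H^\alpha_m$ is unitarily equivalent (via $\mathcal M^*$) to a Herbst-type operator whose symbol involves the boundary values $\Xi_m^{-1}(\cdot)$ and $\Xi_m^{-1}(\cdot + \mathrm i)$ — this is precisely the setting of \cite{YaouancOliverRaynal}, Corollary 2, and the argument there transfers essentially verbatim once the role of their symbol is played by $V_{|m|-1/2}$. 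First I would compute the adjoint: a standard integration-by-parts / Mellin-shift computation gives $(\widetilde H^\alpha_m)^* = R^1\bigl(1 - \alpha V_{|m|-1/2}(\cdot - \mathrm i/2)\bigr)$ as in \eqref{H_m^alpha^*}, with natural domain consisting of those $\psi$ for which $(1 - \alpha V_{|m|-1/2}(\cdot - \mathrm i/2))\psi$ lies in $\mathfrak D^1$; the deficiency spaces of $\widetilde H^\alpha_m$ are the solutions of $(\widetilde H^\alpha_m)^*\psi = \pm\mathrm i\psi$ inside $\mathsf L^2(\mathbb R)$.

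The self-adjointness dichotomy (parts 1 and 2) then hinges on analyticity of $\Xi_m^{-1}$ in the strip $\mathfrak S^1$. By Remark \ref{Xi bar and inverse remark}, $\Xi_m^{-1}(z) = \overline{\Xi_m(\overline z)}$ is analytic in $\mathbb C \setminus \mathrm i(1 + |m| + 2\mathbb N_0)$, so for $m \neq 0$ it is automatically analytic and bounded in $\mathfrak S^1$ (as already recorded in \eqref{Xi boundedness statement}). The question of whether the formal adjoint's domain is strictly larger than $\mathfrak D^1$ reduces to whether the function $(1 - \alpha V_{|m|-1/2}(z - \mathrm i/2))^{-1}$, or rather its effect, forces an extra pole on the contour $\Im z = 1$: concretely, following \cite{YaouancOliverRaynal}, the symbol $1 - \alpha V_{|m|-1/2}(z)$ vanishes somewhere in the closed strip $\overline{\mathfrak S^{1/2}}$ exactly when $\alpha \geqslant V^{-1}_{|m|-1/2}(\mathrm i/2)$ (using Lemma \ref{V monotonicity lemma}(3): $V_{|m|-1/2}(\mathrm i\zeta)$ increases on $[0,1/2)$ from $V_{|m|-1/2}(0) = 1/\alpha_m$, so the equation $1 - \alpha V_{|m|-1/2}(-\mathrm i\zeta) = 0$ has a solution $\zeta_{m,\alpha} \in (-1/2, 0]$ precisely in the stated $\alpha$-range, and it is unique by strict monotonicity). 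When $\alpha < V^{-1}_{|m|-1/2}(\mathrm i/2)$ the symbol is zero-free in the relevant strip, the formal adjoint's domain collapses to $\mathfrak D^1$, and the operator is already self-adjoint; when $\alpha = V^{-1}_{|m|-1/2}(\mathrm i/2)$ the zero sits on the boundary $\Im z = 1/2$ and a limiting/log-singularity analysis shows the extra candidate function fails to be square-integrable, giving essential self-adjointness.

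For part 3, with $\alpha \in (V^{-1}_{|m|-1/2}(\mathrm i/2), \alpha_m]$, the zero $-\mathrm i\zeta_{m,\alpha}$ of $1 - \alpha V_{|m|-1/2}$ lies in $\mathfrak S^{1/2}$, which means that the one-dimensional extension is generated by the function whose Mellin-space representative is $(\cdot - \mathrm i/2 + \mathrm i\zeta_{m,\alpha})^{-1}$ — this is the unique (up to scalar) element of $\mathfrak D(\widetilde H^\alpha_m)^*$ not in $\mathfrak D^1$, and one checks it lies in $\mathsf L^2(\mathbb R)$ precisely because $\zeta_{m,\alpha} \in (-1/2, 0]$ keeps the pole off the real line while $\alpha \leqslant \alpha_m$ (Lemma \ref{critical constants lemma}) guarantees semiboundedness so that the Friedrichs extension exists and is the one singled out. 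Identifying the Friedrichs extension among the one-parameter family of self-adjoint extensions is the delicate point: one shows that $(\cdot - \mathrm i/2 + \mathrm i\zeta_{m,\alpha})^{-1}$ belongs to the form domain of $\widetilde H^\alpha_m$ (equivalently, that the associated $r$-space function is ``less singular'' at the origin, i.e. behaves like $r^{-1/2 + (\text{something}> 0)}$), which is exactly the computation carried out in \cite{YaouancOliverRaynal} using the monotonicity of $V_{|m|-1/2}$ on the imaginary axis (Lemma \ref{V monotonicity lemma}(3)).

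The main obstacle I anticipate is not any single estimate but the careful bookkeeping of which singularity of the symbol lands in which sub-strip as $\alpha$ crosses the threshold $V^{-1}_{|m|-1/2}(\mathrm i/2)$, together with verifying that the distinguished extension coincides with Friedrichs rather than with some other self-adjoint realization — this is where one genuinely needs the detailed properties of $V_{|m|-1/2}$ from Lemma \ref{V monotonicity lemma} and the contour-shift lemma of \cite{Titchmarsh} underlying Theorem \ref{Mellin shift theorem}, rather than soft functional-analytic arguments. Since the paper explicitly says this ``follows analogously to Corollary 2 in \cite{YaouancOliverRaynal}'', the write-up should consist of transcribing that argument with $V_{|m|-1/2}$ in place of their symbol and citing Lemma \ref{V monotonicity lemma} at the points where monotonicity is invoked.
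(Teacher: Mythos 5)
Your proposal is correct and matches the paper's approach: the paper proves this lemma purely by reference, stating that it ``follows analogously to Corollary 2 in \cite{YaouancOliverRaynal} (see also Section 2.2.3 of \cite{BalinskyEvans}) with the help of Lemma \ref{V monotonicity lemma},'' which is exactly the reduction you outline (MWF picture, identification of the adjoint's domain via the location of the zero of $1-\alpha V_{|m|-1/2}$ on the imaginary axis, monotonicity of $V$ to pin down $\zeta_{m,\alpha}$, and the form-domain criterion to single out the Friedrichs extension). One small slip to fix in a write-up: the zero of the symbol obstructs analyticity of $\psi$ through the \emph{middle} of the strip (pole at $\Im z = 1/2 - \zeta_{m,\alpha} \in [1/2, 1)$), not specifically on the contour $\Im z=1$; otherwise the bookkeeping you describe is the right one.
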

In the case $m =0$ the functions $V_{-1/2}(\cdot\, \pm \mathrm i/2)$ are not bounded on $\mathbb R$, which makes the argument of \cite{YaouancOliverRaynal} not directly applicable (as both factors in \eqref{H_m^alpha^*} are unbounded).
Instead of providing an exact description of $\mathfrak D(H^\alpha_0)$ we prove a simpler result:
\begin{lem}\label{H alpha 0 domain lemma}
For $\alpha \in (0, \alpha_0]$ the domain of the Friedrichs extension $H^\alpha_0$ of $\widetilde H^\alpha_0$ satisfies
\begin{equation*}
 \mathfrak D(H^\alpha_0) \supseteq\mathfrak D(\widetilde H^\alpha_0)\dot{+} \Span\{\varphi_0^\alpha\}
\end{equation*}
with
\begin{equation*}
 \varphi_0^\alpha(s):= \frac{s -\mathrm i}{(s -2\mathrm i)(s -\mathrm i/2 +\mathrm i\zeta_{0, \alpha})}
\end{equation*}
and $\zeta_{0,\alpha}$ defined as in \eqref{zeta}.
Moreover,
\begin{equation}\label{extended action}
 (H^\alpha_0\varphi_0^\alpha)(s) =\frac{s\big(1 -\alpha V_{-1/2}(s +\mathrm i/2)\big)}{(s -\mathrm i)(s +\mathrm i/2 +\mathrm i\zeta_{0, \alpha})}
\end{equation}
holds for all $s\in \mathbb R\setminus \{0\}$.
\end{lem}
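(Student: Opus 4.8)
The plan is to realise $H^\alpha_0$ as a restriction of the adjoint operator $(\widetilde H^\alpha_0)^*$ and to verify directly that $\varphi_0^\alpha$ belongs to its domain. Since $\widetilde H^\alpha_0\geqslant 0$ by Lemma~\ref{critical constants lemma}, the Friedrichs extension $H^\alpha_0$ is the restriction of $(\widetilde H^\alpha_0)^*$ to $\mathfrak D\big((\widetilde H^\alpha_0)^*\big)\cap\mathfrak Q$, where $\mathfrak Q$ is the form domain, i.e.\ the completion of $\mathfrak D(\widetilde H^\alpha_0)$ in the norm $\varphi\mapsto\big(\langle\varphi,\widetilde H^\alpha_0\varphi\rangle+\|\varphi\|^2\big)^{1/2}$. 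It therefore suffices to prove \textbf{(a)} that $\varphi_0^\alpha\in\mathfrak D\big((\widetilde H^\alpha_0)^*\big)$ with $(\widetilde H^\alpha_0)^*\varphi_0^\alpha$ equal to the right-hand side of \eqref{extended action}, denoted below $h_\alpha$, and \textbf{(b)} that $\varphi_0^\alpha\in\mathfrak Q$. These together give $\varphi_0^\alpha\in\mathfrak D(H^\alpha_0)$ with $H^\alpha_0\varphi_0^\alpha=h_\alpha$; moreover $\varphi_0^\alpha$ has a pole at $s=\mathrm i(1/2-\zeta_{0,\alpha})$, which lies in the open strip $\mathfrak S^1$ because $\zeta_{0,\alpha}\in(-1/2,0]$, so $\varphi_0^\alpha\notin\mathfrak D^1\supseteq\mathfrak D(\widetilde H^\alpha_0)$ and the sum in the assertion is direct.

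For \textbf{(a)}, fix $v\in\mathfrak D(\widetilde H^\alpha_0)$ and denote by $v$ also its analytic continuation into $\mathfrak S^1$. Using $\widetilde H^\alpha_0v=\big(1-\alpha V_{-1/2}(\cdot+\mathrm i/2)\big)R^1v$ together with $\overline{V_{-1/2}(s+\mathrm i/2)}=V_{-1/2}(s-\mathrm i/2)$ (immediate from \eqref{V}) we write
\[
\langle\varphi_0^\alpha,\widetilde H^\alpha_0v\rangle=\int_{\mathbb R}\overline{\varphi_0^\alpha(s)}\,\big(1-\alpha V_{-1/2}(s+\mathrm i/2)\big)\,v(s+\mathrm i)\,\mathrm ds=\int_{\Im w=1}F(w)\,\mathrm dw,
\]
where $F(w):=\overline{\varphi_0^\alpha(\overline{w-\mathrm i})}\,\big(1-\alpha V_{-1/2}(w-\mathrm i/2)\big)\,v(w)$ and $\overline{\varphi_0^\alpha(\overline{w-\mathrm i})}=w\,\big[(w+\mathrm i)(w-\mathrm i/2-\mathrm i\zeta_{0,\alpha})\big]^{-1}$. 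The crucial point is that $F$ extends analytically across the whole closed strip $\{0\leqslant\Im w\leqslant1\}$: each of the three singularities that could occur there cancels — the pole of $\overline{\varphi_0^\alpha(\overline{\cdot-\mathrm i})}$ at $w=\mathrm i(1/2+\zeta_{0,\alpha})$ against the zero of $1-\alpha V_{-1/2}(\cdot-\mathrm i/2)$ at that point (by the defining equation \eqref{zeta} of $\zeta_{0,\alpha}$ and the evenness of $V_{-1/2}$ from Lemma~\ref{V monotonicity lemma}); the pole of $V_{-1/2}(\cdot-\mathrm i/2)$ at $w=0$ against the zero of $\overline{\varphi_0^\alpha(\overline{\cdot-\mathrm i})}$ at $w=0$; and the simple pole of $V_{-1/2}(\cdot-\mathrm i/2)$ at $w=\mathrm i$ against the vanishing of $v$ at $\mathrm i$, which is precisely what the requirement $\Xi_0^{-1}v\in\mathfrak D^1$ in \eqref{H m alpha tilde domain} provides (both $\Xi_0^{-1}$ and $V_{-1/2}(\cdot-\mathrm i/2)$ have a simple pole at $\mathrm i$, cf.\ Corollary~\ref{V corollary}). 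Combining this analyticity with the $\mathsf L^2$ control of $F$ on the two boundary lines and the log-convexity of $t\mapsto\int_{\mathbb R}|F(s+\mathrm it)|^2\,\mathrm ds$, one shifts the contour to $\Im w=0$ by the lemma of \cite{Titchmarsh}, Section~5.4; the resulting integral $\int_{\mathbb R}\overline{\varphi_0^\alpha(\overline{s-\mathrm i})}\,\big(1-\alpha V_{-1/2}(s-\mathrm i/2)\big)\,v(s)\,\mathrm ds$ equals $\langle h_\alpha,v\rangle$ by the elementary identity $\overline{h_\alpha(s)}=\overline{\varphi_0^\alpha(\overline{s-\mathrm i})}\,\big(1-\alpha V_{-1/2}(s-\mathrm i/2)\big)$. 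This yields $(\widetilde H^\alpha_0)^*\varphi_0^\alpha=h_\alpha$.

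For \textbf{(b)}, consider first $\alpha<\alpha_0$, so that $\zeta_{0,\alpha}<0$; then $\varphi_0^\alpha\in\mathfrak D^{1/2}$ and $r^{1/2}\mathcal M^*\varphi_0^\alpha\in\mathsf L^2(\mathbb R_+)$. I would check that the functions $\varphi_n:=(\cdot-\mathrm i)\big(\cdot-\mathrm i(1+1/n)\big)^{-1}\,\mathcal M\big(\mathbbm1_{(0,n)}\mathcal M^*\varphi_0^\alpha\big)$ lie in $\mathfrak D(\widetilde H^\alpha_0)$ — the Mellin preimage of the second factor is compactly supported, hence this factor lies in $\mathfrak D^1$, while the rational prefactor has a simple zero at $\mathrm i$ and makes $\Xi_0^{-1}\varphi_n$ bounded and analytic in $\mathfrak S^1$ (recall that $\Xi_0^{-1}$ is bounded in $\mathfrak S^1$ away from $\mathrm i$ by \eqref{Xi asymptotics}) — and that $\varphi_n\to\varphi_0^\alpha$ in $\mathsf L^2(\mathbb R)$ and $R^{1/2}\varphi_n\to R^{1/2}\varphi_0^\alpha$ in $\mathsf L^2(\mathbb R)$. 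Since $0\leqslant 1-\alpha V_{-1/2}\leqslant 1$ on $\mathbb R$, the quadratic form in \eqref{1} extends to $\varphi\mapsto\int_{\mathbb R}\big(1-\alpha V_{-1/2}(s)\big)\,|(R^{1/2}\varphi)(s)|^2\,\mathrm ds\leqslant\|R^{1/2}\varphi\|^2$, so $\varphi_n\to\varphi_0^\alpha$ in the form norm and $\varphi_0^\alpha\in\mathfrak Q$. The critical value $\alpha=\alpha_0$, where $\zeta_{0,\alpha_0}=0$, $r^{1/2}\mathcal M^*\varphi_0^{\alpha_0}\notin\mathsf L^2(\mathbb R_+)$, and $R^{1/2}\varphi_0^{\alpha_0}$ has a $1/s$-type singularity at the origin, is the main obstacle; I expect to treat it by exploiting that $1-\alpha_0V_{-1/2}(s)$ vanishes quadratically at $s=0$ (Lemma~\ref{V monotonicity lemma}), which renders the form finite on $\varphi_0^{\alpha_0}$ and keeps the above cut-off sequence form-convergent, or else by passing to the limit $\alpha\uparrow\alpha_0$. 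Finally, \textbf{(a)}, \textbf{(b)} and the Friedrichs characterisation give $\varphi_0^\alpha\in\mathfrak D(H^\alpha_0)$ with $H^\alpha_0\varphi_0^\alpha=h_\alpha$, which is \eqref{extended action}.
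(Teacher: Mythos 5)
Your overall strategy coincides with the paper's: invoke Weidmann's characterisation of the Friedrichs extension as the restriction of $(\widetilde H^\alpha_0)^*$ to $\mathfrak D\big((\widetilde H^\alpha_0)^*\big)\cap\mathfrak Q$, then verify \textbf{(a)} that $\varphi_0^\alpha$ lies in the domain of the adjoint with the stated image, and \textbf{(b)} that $\varphi_0^\alpha$ lies in the form domain. Your observation that the pole of $\varphi_0^\alpha$ at $\mathrm i(1/2-\zeta_{0,\alpha})\in\mathfrak S^1$ makes the sum direct is correct and worth keeping.

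There is a genuine gap in \textbf{(b)} at the critical value $\alpha=\alpha_0$. You flag this yourself as ``the main obstacle'' and offer two unexecuted strategies, but this is precisely where the work of the lemma is concentrated: since $\zeta_{0,\alpha_0}=0$, the function $\varphi_0^{\alpha_0}$ has its pole on the line $\Im z=1/2$, so $\varphi_0^{\alpha_0}\notin\mathfrak D^{1/2}$, and the easy containment $\varphi_0^\alpha\in\mathfrak D^{1/2}=\mathfrak Q_0^\alpha$ (which already settles $\alpha<\alpha_0$ without any cut-off sequence — you do not need your $\varphi_n$ there) does not apply. The paper first notes $\mathfrak D^{1/2}\subset\mathfrak Q_0^{\alpha_0}$ and that the closed form is given by \eqref{1} on all of $\mathfrak D^{1/2}$, and then exhibits the concrete sequence $\psi_n(s)=(s-\mathrm i)(s-2\mathrm i)^{-1}(s-\mathrm i/2-\mathrm i/n)^{-1}\in\mathfrak D^{1/2}$, whose form increments $\psi_n-\psi_m$ are dominated by $\int\big(1-\alpha_0V_{-1/2}(s)\big)\big(s^2(m^2s^2+1)\big)^{-1}\mathrm ds$; the quadratic vanishing of $1-\alpha_0V_{-1/2}$ at $s=0$ (Lemma~\ref{V monotonicity lemma}) then gives Cauchy-ness by monotone convergence. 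Your idea of using the quadratic vanishing is the right one, but without a verified approximating sequence (and your proposed cut-off $\mathcal M\big(\mathbbm 1_{(0,n)}\mathcal M^*\varphi_0^{\alpha_0}\big)$ would still require a non-trivial estimate of form convergence) the argument is incomplete. The alternative route via $\alpha\uparrow\alpha_0$ is also unverified and would need uniform control of the form norms.

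A smaller remark on \textbf{(a)}: the cancellation at $w=\mathrm i$ is stated as ``the vanishing of $v$ at $\mathrm i$,'' but $v$ is determined by $\mathsf L^2$ boundary values, and pointwise vanishing at a boundary point is not implied. The correct mechanism is the factorisation $V_{-1/2}(w-\mathrm i/2)v(w)=\Xi_0(w-\mathrm i)(\Xi_0^{-1}v)(w)$, where $\Xi_0^{-1}v\in\mathfrak D^1$ by \eqref{H m alpha tilde domain} and $\Xi_0(\cdot-\mathrm i)$ is regular at $\mathrm i$. The paper sidesteps the boundary-pole bookkeeping altogether by performing two intermediate contour shifts through $\mathfrak D^{1/4}$ and $\mathfrak D^{3/4}$ (using Corollary~\ref{V corollary} and Lemma~\ref{R and Xi commutation lemma}), so that each shift stays strictly away from the offending boundary point; if you keep a single-shift argument you should at least spell out the $\mathsf L^2$ control in a neighbourhood of $w=\mathrm i$ along these lines.
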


\begin{proof}
According to Theorem 5.38 in \cite{Weidmann1}, $H^\alpha_0$ is the restriction of $(\widetilde H^\alpha_0)^*$ to $\mathfrak D(H^\alpha_0):= \mathfrak Q_0^\alpha\cap \mathfrak D\big((\widetilde H^\alpha_0)^*\big)$, where $\mathfrak Q_0^\alpha$ is the closure of $\mathfrak D(\widetilde H^\alpha_0)$ in the norm of the quadratic form of $\widetilde H^\alpha_0 +1$. 

Since $\mathsf C_{0}^{\infty}\big(\mathbb{R}^2\setminus \{0\}\big)\subset \mathfrak D(\widetilde H^\alpha)$ is dense in $\mathsf H^{1/2}(\mathbb{R}^2)$, the representation \eqref{MWF Herbst} shows that $\mathfrak{D}(\widetilde H_{0}^{\alpha})$ is dense in $\mathfrak{D}^{1/2}$ with respect to the graph norm of $R^{1/2}$
for all $\alpha \in (0,\alpha_0]$.
Lemma \ref{critical constants lemma} implies the inequalities
\begin{equation*}
\begin{split}
\langle\varphi, R^1\varphi\rangle\geqslant \langle\varphi, \widetilde H^\alpha_0\varphi\rangle \geqslant (1 -\alpha/\alpha_0)\langle\varphi, R^1\varphi\rangle
\end{split}
\end{equation*}
for all $\alpha \in (0, \alpha_0]$ and $\varphi\in \mathfrak D(\widetilde H^\alpha_0)$.  Thus $\mathfrak Q_0^\alpha =\mathfrak{D}^{1/2} \subset \mathfrak Q_0^{\alpha_0}$ for $\alpha \in (0, \alpha_0)$
and the right hand side of \eqref{1} coincides with the closure of the quadratic form of $\widetilde H_{0}^{\alpha}$ on every $\varphi\in \mathfrak{D}^{1/2}$ for $\alpha \in (0, \alpha_0]$.

For $n\in \mathbb N$ let $\psi_n(s) :=(s -\mathrm i)(s -2\mathrm i)^{-1}(s -\mathrm i/2 -\mathrm i/n)^{-1}\in \mathfrak{D}^{1/2} \subset \mathfrak Q_0^{\alpha_0}$. Computing the right hand side of \eqref{1} on $\varphi:= \psi_n - \psi_m$ with $m\leqslant n$ we obtain
\begin{equation*}
\begin{split}
 &\int_{-\infty}^{+\infty}\big( 1 -\alpha_0 V_{- 1/2}(s)\big)\big|R^{1/2}(\psi_n- \psi_m)(s)\big|^2\mathrm ds\leqslant \int_{-\infty}^{+\infty}\frac{\big( 1 -\alpha_0 V_{- 1/2}(s)\big)}{s^2(m^2s^2+1)}\mathrm ds.
\end{split}
\end{equation*}
By Lemma \ref{V monotonicity lemma} and monotone convergence we conclude that $(\psi_n)_{n\in \mathbb N}$ is a Cauchy sequence in $\mathfrak Q_0^{\alpha_0}$ which converges to $\varphi_0^{\alpha_0}$ in $\mathsf L^2(\mathbb R)$. Thus $\varphi_0^{\alpha_0}$ belongs to $\mathfrak Q_0^{\alpha_0}$.

For every $\varphi \in\mathfrak D(\widetilde H^\alpha_0)$ with $\alpha \in (0, \alpha_0]$ taking into account the relations
\begin{align*}
 &\varphi_0^\alpha\in \mathfrak D^{1/4},\ \Xi_0^{-1}\varphi_0^\alpha\in \mathfrak D^{1/4}\text{ and }\big(1 -\alpha V_{- 1/2}(\cdot -\mathrm i/4)\big)\varphi_0^\alpha(\cdot +\mathrm i/4)\in \mathfrak D^{3/4}
\end{align*}
(recall \eqref{zeta} and \eqref{Xi asymptotics}) and using Corollary \ref{V corollary} and Lemma \ref{R and Xi commutation lemma} we obtain
\begin{align*}
 \langle \varphi_0^\alpha, \widetilde H^\alpha_0\varphi\rangle &= \Big\langle \varphi_0^\alpha, \big(R^1 -\alpha\Xi_0 R^1\Xi_0^{-1}\big)\varphi\Big\rangle\\& =\langle R^{1/4}\varphi_0^\alpha, R^{3/4}\varphi\rangle -\alpha\langle R^{1/4}\Xi_0^{-1}\varphi_0^\alpha, R^{3/4}\Xi_0^{-1}\varphi\rangle\\ & =\Big\langle\big(1 -\alpha V_{- 1/2}(\cdot -\mathrm i/4)\big)\varphi_0^\alpha(\cdot +\mathrm i/4), R^{3/4}\varphi\rangle\\ &=\Big\langle\big(1 -\alpha V_{- 1/2}(\cdot +\mathrm i/2)\big)\varphi_0^\alpha(\cdot +\mathrm i), \varphi\Big\rangle.
\end{align*}
It follows that $\varphi_0^{\alpha}\in \mathfrak D\big((\widetilde H^\alpha_0)^*\big)$ and \eqref{extended action} holds for all $\alpha \in (0, \alpha_0]$.
\end{proof}

We now make a crucial observation concerning the functions \eqref{extra functions} transformed in Lemma \ref{T transformed extra functions lemma}.

\begin{lem}\label{domain connection lemma}
 Let $\nu\in (0, 1/2]$. The functions \eqref{chi}, \eqref{xi} and \eqref{eta} satisfy:
\begin{enumerate}
  \item $\xi^\nu_\pm$ and $\eta^\nu_\pm$ belong to $\mathfrak D^1$;
  \item $\Xi_0^{-1}\xi^\nu_+$ and $\Xi_0^{-1}\eta^\nu_-$ belong to $\mathfrak D^1$;
  \item $\chi^\nu_\pm$ belong to $\mathfrak D\big(H_0^{(V_{-1/2}(\mathrm i\beta))^{-1}}\big)$ and
\begin{equation}\label{H_0 on chi}
  H_0^{(V_{-1/2}(\mathrm i\beta))^{-1}}\chi^\nu_\pm =\Big(1 -\big(V_{-1/2}(\mathrm i\beta)\big)^{-1}V_{-1/2}(\cdot\, +\mathrm i/2)\Big)\chi^\nu_\pm(\cdot\, +\mathrm i);
\end{equation}
  \item $\chi^\nu_\pm$ belong to $\mathfrak D\big(H_1^{(V_{1/2}(\mathrm i\beta))^{-1}}\big)$.
\end{enumerate}
\end{lem}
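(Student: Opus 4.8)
My plan is to reduce all four assertions to the single task of recognising explicit holomorphic functions --- namely \eqref{chi}--\eqref{eta}, suitable products of these with $\Xi_0^{-1}$, and suitable differences of these with the singular spanning functions of Lemmata \ref{H alpha m domains lemma} and \ref{H alpha 0 domain lemma} --- as elements of $\mathfrak D^1$. The common feature is that each of $\chi_\pm^\nu$, $\xi_\pm^\nu$, $\eta_\pm^\nu$ equals $\Gamma(\mathrm i\cdot+\beta+1/2)$ times a polynomial and one of the functions $\Xi_0$, $\Xi_{\pm1}$, $\Xi_0^{-1}$. By Stirling's formula \eqref{Gamma asymptotics} the gamma factor decays like $\mathrm e^{-\pi|\Re z|/2}$, uniformly for $\Im z\in[0,1]$, while by \eqref{Xi asymptotics} and \eqref{Xi boundedness statement} the functions $\Xi_0,\Xi_{\pm1}$ grow at most polynomially in $|\Re z|$ and $\Xi_0^{-1}$ stays bounded on $\overline{\mathfrak S^1}=\{z:0\le\Im z\le1\}$ away from $\mathrm i$. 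The workhorse I shall use throughout is: if $f$ is analytic on an open neighbourhood of $\overline{\mathfrak S^1}$ and $\sup_{0\le t\le1}|f(x+\mathrm it)|\le\phi(x)$ for some $\phi\in\mathsf L^2(\mathbb R)$, then $f\in\mathfrak D^1$ and $R^1f=f(\cdot+\mathrm i)$; this is immediate from Definition \ref{D lambda definition} and dominated convergence. Inside $\overline{\mathfrak S^1}$ the gamma factor has its only pole at $s_0:=\mathrm i(\beta+1/2)$ (since $\beta\in[0,1/2)$, this is a simple pole with $\Im s_0\in[1/2,1)$), $\Xi_0,\Xi_{\pm1}$ are analytic (Remark \ref{Xi bar and inverse remark}), and $\Xi_0^{-1}$ is analytic except for a simple pole at $\mathrm i$, so $(\cdot-\mathrm i)\Xi_0^{-1}$ continues analytically to all of $\overline{\mathfrak S^1}$ with at most linear growth in $|\Re z|$.

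\emph{Items 1 and 2.} At $s_0$ the bracket in \eqref{xi} and \eqref{eta} vanishes, since $s_0-\mathrm i=\mathrm i(\beta-1/2)$ turns its subtracted term into $\Xi_m(s_0)$; hence $\xi_\pm^\nu$ and $\eta_\pm^\nu$ continue analytically across $s_0$, are analytic on $\overline{\mathfrak S^1}$, and the workhorse gives $\xi_\pm^\nu,\eta_\pm^\nu\in\mathfrak D^1$. In $\xi_+^\nu$ the index $\pm1/2-1/2$ is $0$ and in $\eta_-^\nu$ the index $\pm1/2+1/2$ is $0$, so $\Xi_0^{-1}\xi_+^\nu$ and $\Xi_0^{-1}\eta_-^\nu$ equal, up to constants, $\Gamma(\mathrm i\cdot+\beta+1/2)\big(1-\big(\mathrm i(\beta-1/2)\big)^{-1}\Xi_0(s_0)(\cdot-\mathrm i)\Xi_0^{-1}\big)$; the factor $(\cdot-\mathrm i)$ removes the pole of $\Xi_0^{-1}$ at $\mathrm i$, the combination in the second factor still vanishes at $s_0$, so these products are again $\Gamma(\mathrm i\cdot+\beta+1/2)$ times an analytic, at most polynomially growing function on $\overline{\mathfrak S^1}$, and lie in $\mathfrak D^1$ by the workhorse. (The analogous products with $\xi_-^\nu,\eta_+^\nu$ carry $\Xi_{\mp1}$ instead of $\Xi_0$ and retain an uncancelled pole at $\mathrm i$, which is why they do not appear in the statement.)

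\emph{Item 3.} Put $\alpha:=\big(V_{-1/2}(\mathrm i\beta)\big)^{-1}$. By the positivity, evenness and strict monotonicity of $V_{-1/2}(\mathrm i\,\cdot\,)$ on $[0,1/2)$ (Lemma \ref{V monotonicity lemma}(1,3)) and $\beta\in[0,1/2)$ one gets $\alpha\in(0,\alpha_0]$ and identifies the root of \eqref{zeta} for $m=0$ as $\zeta_{0,\alpha}=-\beta$; hence Lemma \ref{H alpha 0 domain lemma} applies and $\varphi_0^\alpha$ has inside $\overline{\mathfrak S^1}$ exactly one pole, a simple one at $\mathrm i/2-\mathrm i\zeta_{0,\alpha}=s_0$. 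Since $\chi_\pm^\nu$ is a nonzero constant times $(\cdot-\mathrm i)\Gamma(\mathrm i\cdot+\beta+1/2)$ it has a simple pole at $s_0$ as well; choose the constant $\mu_\pm$ so that $\chi_\pm^\nu-\mu_\pm\varphi_0^\alpha$ has vanishing residue there. Then $g_\pm:=\chi_\pm^\nu-\mu_\pm\varphi_0^\alpha$ is analytic on $\overline{\mathfrak S^1}$ (the only other pole of $\varphi_0^\alpha$, at $2\mathrm i$, lies outside) and belongs to $\mathfrak D^1$ by the workhorse (exponential decay of $\chi_\pm^\nu$ and $|\Re z|^{-1}$-decay of $\varphi_0^\alpha$); moreover $g_\pm$ vanishes at $\mathrm i$ because both $\chi_\pm^\nu$ and $\varphi_0^\alpha$ carry the factor $(\cdot-\mathrm i)$, and since $\Xi_0^{-1}(s_0)\ne0$ the same $\mu_\pm$ also makes $\Xi_0^{-1}g_\pm$ regular at $s_0$. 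Thus $\Xi_0^{-1}g_\pm$ is analytic on $\overline{\mathfrak S^1}$, and since $\Xi_0^{-1}$ is bounded there away from $\mathrm i$ while $g_\pm$ decays like $|\Re z|^{-1}$, the workhorse gives $\Xi_0^{-1}g_\pm\in\mathfrak D^1$. By \eqref{H m alpha tilde domain}, $g_\pm\in\mathfrak D(\widetilde H_0^\alpha)$, so $\chi_\pm^\nu=\mu_\pm\varphi_0^\alpha+g_\pm\in\mathfrak D(\widetilde H_0^\alpha)\dot+\Span\{\varphi_0^\alpha\}\subseteq\mathfrak D(H_0^\alpha)$. For \eqref{H_0 on chi}: with $\zeta_{0,\alpha}=-\beta$ one computes $\varphi_0^\alpha(s+\mathrm i)=\frac{s}{(s-\mathrm i)(s+\mathrm i/2-\mathrm i\beta)}$, so \eqref{extended action} becomes $(H_0^\alpha\varphi_0^\alpha)(s)=\big(1-\alpha V_{-1/2}(s+\mathrm i/2)\big)\varphi_0^\alpha(s+\mathrm i)$; combining this with $(\widetilde H_0^\alpha g_\pm)(s)=\big(1-\alpha V_{-1/2}(s+\mathrm i/2)\big)g_\pm(s+\mathrm i)$ (from \eqref{MWF Herbst} and $R^1g_\pm=g_\pm(\cdot+\mathrm i)$), using linearity and that $H_0^\alpha$ extends $\widetilde H_0^\alpha$, and recalling $\alpha^{-1}=V_{-1/2}(\mathrm i\beta)$, gives \eqref{H_0 on chi} for $s\ne0$.

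\emph{Item 4, and the main difficulty.} Put $\alpha:=\big(V_{1/2}(\mathrm i\beta)\big)^{-1}$. Lemma \ref{V monotonicity lemma}, $\beta\in[0,1/2)$ and analyticity (hence finiteness) of $V_{1/2}$ at $\mathrm i/2$ place $\alpha$ in $\big(V^{-1}_{1/2}(\mathrm i/2),\alpha_1\big]$, i.e.\ in case~3 of Lemma \ref{H alpha m domains lemma}, with the root of \eqref{zeta} for $m=1$ again equal to $-\beta$, so the singular spanning function is $\big(\cdot-\mathrm i/2-\mathrm i\beta\big)^{-1}=(\cdot-s_0)^{-1}$. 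Subtracting from $\chi_\pm^\nu$ the multiple of $(\cdot-s_0)^{-1}$ matching the residue at $s_0$ leaves a function analytic on $\overline{\mathfrak S^1}$ which, by the workhorse, lies in $\mathfrak D^1$; hence $\chi_\pm^\nu\in\mathfrak D^1\dot+\Span\{(\cdot-s_0)^{-1}\}=\mathfrak D(H_1^\alpha)$, and no action formula is needed here. The genuinely substantive point, the one I expect to need the most care, is the coincidence behind items 3 and 4: the pole of $\chi_\pm^\nu$ inherited from the gamma factor sits exactly at $\mathrm i/2-\mathrm i\zeta$ with $\zeta=-\beta$ the root of \eqref{zeta} that parametrises the one-dimensional extensions $H_0^\alpha$, $H_1^\alpha$; pinning down $\zeta_{0,\alpha}=\zeta_{1,\alpha}=-\beta$ and the correct case of Lemma \ref{H alpha m domains lemma} from the monotonicity and evenness of $V_j$ is where the bookkeeping lies, and the secondary subtlety is that membership in $\mathfrak D(\widetilde H_0^\alpha)$, rather than merely $\mathfrak D^1$, additionally forces cancellation of the $\Xi_0^{-1}$-pole at $\mathrm i$, which is precisely the role of the factor $(\cdot-\mathrm i)$ built into $\chi_\pm^\nu$ and $\varphi_0^\alpha$.
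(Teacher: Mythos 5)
Your proof is correct and follows essentially the same route as the paper's: establish $\mathfrak D^1$-membership of the regular pieces via analyticity on the closed strip (after the cancellation of the gamma pole at $\mathrm i(\beta+1/2)$ by the zero of the bracket) together with Stirling decay, and for $\chi_\pm^\nu$ subtract the appropriate singular spanning function of $\mathfrak D(H_m^\alpha)$, identified through $\zeta_{0,\alpha}=\zeta_{1,\alpha}=-\beta$. The only cosmetic difference is that the paper writes the correction coefficients explicitly (as in \eqref{chi-phi}) where you determine them by residue matching; the substance is identical.
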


\begin{proof}
\emph{1.} By Remark \ref{Xi bar and inverse remark} and since the gamma function is analytic in $\mathbb C\setminus (-\mathbb N_0)$ with a simple pole at zero, $\xi^\nu_\pm$ and $\eta^\nu_\pm$ are analytic in a complex neighbourhood of the strip $\mathfrak S^1$. Thus, for every $\rho >0$, $\xi^\nu_\pm$ and $\eta^\nu_\pm$ are bounded on $\mathfrak A_\rho :=\big\{z\in\mathbb C: \Re z\in [-\rho, \rho], \Im z\in [0, 1]\big\}$. On $\mathfrak S^1\setminus \mathfrak A_\rho$ substituting the asymptotics \eqref{Gamma asymptotics} into \eqref{xi}, \eqref{eta} and \eqref{Xi} (or using \eqref{Xi bar and inverse} and \eqref{Xi asymptotics}) and choosing $\rho$ big enough we obtain the properties \emph{1.--3.} of Definition \ref{D lambda definition}.

\emph{2.} Both $\Xi_0^{-1}\xi^\nu_+$ and $\Xi_0^{-1}\eta^\nu_-$ are analytic in a complex neighbourhood of $\mathfrak S^1$. We can thus repeat the proof of \emph{1.} taking \eqref{Xi asymptotics} into account.

\emph{3.} By Lemma \ref{H alpha 0 domain lemma}, it suffices to show that
\begin{equation}\label{chi-phi}
 \chi^\nu_{\pm} +\mathrm i\nu\frac{2\beta -3}{2\beta -1}\Xi_{\pm1/2 -1/2}\big(\mathrm i(\beta+ 1/2)\big)\varphi_0^{(V_{-1/2}(\mathrm i\beta))^{-1}}\in\mathfrak D\Big(\widetilde H^{(V_{-1/2}(\mathrm i\beta))^{-1}}_0\Big),
\end{equation}
see \eqref{H m alpha tilde domain}. This follows analogously to \emph{1,} since
$\zeta_{0, (V_{-1/2}(\mathrm i\beta))^{-1}} :=-\beta$ is the solution of \eqref{zeta}. Formula \eqref{H_0 on chi} follows from \eqref{chi-phi}, \eqref{MWF Herbst} and \eqref{extended action}.

\emph{4.} The proof is analogous to \emph{3.} Since $\zeta_{1, (V_{1/2}(\mathrm i\beta))^{-1}} :=-\beta$ is the solution of \eqref{zeta} we conclude that
\begin{equation*}
 \chi^\nu_{\pm} +\mathrm i\nu\Xi_{\pm1/2 -1/2}\big(\mathrm i(\beta+ 1/2)\big)\big(\cdot -\mathrm i/2 +\mathrm i\zeta_{1, (V_{1/2}(\mathrm i\beta))^{-1}}\big)^{-1}
\end{equation*}
belongs to $\mathfrak D\Big(\widetilde H^{(V_{1/2}(\mathrm i\beta))^{-1}}_1\Big)$ characterised in Lemma \ref{H alpha m domains lemma}.
\end{proof}

\section{Critical channels estimate}\label{critical channels section}

For $\nu\in (0, 1/2]$ we introduce the $(2\times 2)$-matrix-valued function on $\mathbb R$:
\begin{equation*}
 M_{\pm}^\nu(s):= \begin{pmatrix}-\nu V_{\mp1/2}(s +\mathrm i/2) & 1\\ 1 & -\nu V_{\pm1/2}(s +\mathrm i/2)\end{pmatrix}.
\end{equation*}

\begin{lem}\label{maximal subdomain transformed lemma}
For any $\Psi\in\mathfrak C^\nu_{\pm1/2}$ there exists a decomposition
\begin{align}\label{U Psi representation}
 \mathcal U_{\pm1/2}\Psi =\binom{\zeta}\upsilon +a\chi_\pm^\nu\binom1{\nu V_{\mp1/2}(\mathrm i\beta)}
\end{align}
with $\zeta\in \mathfrak D\Big(\widetilde H_{1/2\mp1/2}^{(V_{\mp1/2}(\mathrm i\beta))^{-1}}\Big)$, $\upsilon\in \mathfrak D\Big(\widetilde H_{1/2\pm1/2}^{(V_{\pm1/2}(\mathrm i\beta))^{-1}}\Big)$ and $a\in \mathbb C$.
Moreover, the representation
\begin{equation*}
\begin{split}
 \mathcal U_{\pm1/2}D^\nu_{\pm1/2}\Psi = M^\nu_{\pm}\binom{R^1\zeta +a\chi_\pm^\nu(\cdot +\mathrm i)}{R^1\upsilon +a\nu V_{\mp1/2}(\mathrm i\beta)\chi_\pm^\nu(\cdot +\mathrm i)}
\end{split}\end{equation*}
holds.
\end{lem}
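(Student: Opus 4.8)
The plan is to split $\Psi\in\mathfrak C^\nu_{\pm1/2}$ into its smooth part and the distinguished extra function, transform the two pieces separately, and read off the decomposition \eqref{U Psi representation} from Lemma \ref{T transformed extra functions lemma}. Write $\Psi=\Psi_0+c\,\psi^\nu_{\pm1/2}$ with $\Psi_0=\binom{\psi_1}{\psi_2}\in\mathsf C_0^\infty(\mathbb R_+,\mathbb C^2)$ and $c\in\mathbb C$. By \eqref{Upsi}, $\mathcal U_{\pm1/2}\Psi_0=\binom{\mathcal T_{\pm1/2-1/2}\psi_1}{-\mathrm i\mathcal T_{\pm1/2+1/2}\psi_2}$, while Lemma \ref{T transformed extra functions lemma} gives $\mathcal U_{\pm1/2}\psi^\nu_{\pm1/2}=\chi^\nu_\pm\binom{1}{\nu V_{\mp1/2}(\mathrm i\beta)}+\binom{\xi^\nu_\pm}{\eta^\nu_\pm}$. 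Hence \eqref{U Psi representation} holds with $a:=c$, $\zeta:=\mathcal T_{\pm1/2-1/2}\psi_1+c\,\xi^\nu_\pm$ and $\upsilon:=-\mathrm i\mathcal T_{\pm1/2+1/2}\psi_2+c\,\eta^\nu_\pm$. For the membership statements I would first observe that $\mathcal T_m$ maps $\mathsf C_0^\infty(\mathbb R_+)$ into $\mathfrak D(\widetilde H^\alpha_m)$: for $\phi\in\mathsf C_0^\infty(\mathbb R_+)$ both $\mathcal M\phi$ and $(\mathcal M\phi)(-\cdot)$ extend to entire functions decaying faster than any power along horizontal lines, and $\Xi_m$ is analytic in $\mathfrak S^1$ with at most polynomial growth there by \eqref{Xi asymptotics}, so that $\mathcal T_m\phi=\Xi_m(\mathcal M\phi)(-\cdot)$ and $\Xi_m^{-1}\mathcal T_m\phi=(\mathcal M\phi)(-\cdot)$ are analytic and $\mathsf L^2$-bounded on $\mathfrak S^1$, i.e.\ belong to $\mathfrak D^1$, whence $\mathcal T_m\phi\in\mathfrak D(\widetilde H^\alpha_m)$ by \eqref{H m alpha tilde domain}. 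Together with Lemma \ref{domain connection lemma}(1,2) and the fact that $\widetilde H^\alpha_m$ depends only on $|m|$, this places $\zeta$ and $\upsilon$ in the claimed domains; by Lemma \ref{V monotonicity lemma}(3) and $\beta\in[0,1/2)$ the corresponding coupling constant $(V_{\mp1/2}(\mathrm i\beta))^{-1}$ does not exceed the critical value \eqref{alpha_c}.

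For the representation of $\mathcal U_{\pm1/2}D^\nu_{\pm1/2}\Psi$ I use linearity and treat the two pieces in turn. On the smooth part, $D^\nu_{\pm1/2}\Psi_0=d^\nu_{\pm1/2}\Psi_0$; I split $d^\nu_{\pm1/2}$ from \eqref{d_kappa} into its off-diagonal part, which is the reduction of $-\mathrm i\boldsymbol\sigma\cdot\nabla$ to the channel $\pm1/2$ and therefore is conjugated by $\mathcal U_{\pm1/2}$ to $R^1\otimes\sigma_1$ by Lemma \ref{transformed operators lemma} (applicable since $\Psi_0\in\mathsf H^1(\mathbb R^2,\mathbb C^2)$), and into the diagonal multiplication by $-\nu/r$, which on the channel $m$ turns into $-\nu V_{|m|-1/2}(\cdot+\mathrm i/2)R^1$ by Lemma \ref{1/r with T lemma} and Corollary \ref{V corollary} (whose $m=0$ proviso is met because $\mathcal T_m\psi_j\in\mathfrak D(\widetilde H^\alpha_m)$). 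Keeping track of the indices $|\pm1/2-1/2|-1/2=\mp1/2$ and $|\pm1/2+1/2|-1/2=\pm1/2$, one obtains
\begin{equation*}
 \mathcal U_{\pm1/2}d^\nu_{\pm1/2}\Psi_0=M^\nu_\pm\binom{R^1\mathcal T_{\pm1/2-1/2}\psi_1}{R^1\big(-\mathrm i\mathcal T_{\pm1/2+1/2}\psi_2\big)}.
\end{equation*}

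On the extra function, $D^\nu_{\pm1/2}\psi^\nu_{\pm1/2}=d^\nu_{\pm1/2}\psi^\nu_{\pm1/2}$; since $\psi^\nu_{\pm1/2}=\sqrt{2\pi}\,\mathrm e^{-r}\varphi^\nu_{\pm1/2,+}$ with $d^\nu_{\pm1/2}\varphi^\nu_{\pm1/2,+}=0$ (see the proof of Lemma \ref{D^nu_kappa lemma}) a one-line computation gives $d^\nu_{\pm1/2}\psi^\nu_{\pm1/2}=\mathrm i\sigma_2\psi^\nu_{\pm1/2}=\sqrt{2\pi}\,r^\beta\mathrm e^{-r}\binom{\beta\mp1/2}{-\nu}$. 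Applying $\mathcal U_{\pm1/2}$ and Lemma \ref{T transformed extra functions lemma} again (to recognise $\mathcal T_{\pm1/2-1/2}(\sqrt{2\pi}\nu\,r^\beta\mathrm e^{-r})=\chi^\nu_\pm+\xi^\nu_\pm$ and $-\mathrm i\mathcal T_{\pm1/2+1/2}(\sqrt{2\pi}(\beta\mp1/2)\,r^\beta\mathrm e^{-r})=\nu V_{\mp1/2}(\mathrm i\beta)\chi^\nu_\pm+\eta^\nu_\pm$) produces
\begin{equation*}
 \mathcal U_{\pm1/2}D^\nu_{\pm1/2}\psi^\nu_{\pm1/2}=\binom{\tfrac{\beta\mp1/2}{\nu}\big(\chi^\nu_\pm+\xi^\nu_\pm\big)}{-\tfrac{\nu}{\beta\mp1/2}\big(\nu V_{\mp1/2}(\mathrm i\beta)\chi^\nu_\pm+\eta^\nu_\pm\big)};
\end{equation*}
it then remains to check that this equals $M^\nu_\pm\binom{R^1\xi^\nu_\pm+\chi^\nu_\pm(\cdot+\mathrm i)}{R^1\eta^\nu_\pm+\nu V_{\mp1/2}(\mathrm i\beta)\chi^\nu_\pm(\cdot+\mathrm i)}$, where $R^1\xi^\nu_\pm=\xi^\nu_\pm(\cdot+\mathrm i)$ and $R^1\eta^\nu_\pm=\eta^\nu_\pm(\cdot+\mathrm i)$ by Lemma \ref{domain connection lemma}(1) and Theorem \ref{Mellin shift theorem}, and $\chi^\nu_\pm(\cdot+\mathrm i)$ stands for the shift of the explicit function \eqref{chi}. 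Adding the two contributions and recalling $a=c$ then gives the asserted formula.

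The main obstacle is this final verification. It is a calculation with gamma functions, resting on \eqref{Xi}, \eqref{V}, \eqref{chi}--\eqref{eta}, the relation $\chi^\nu_\pm+\xi^\nu_\pm=\nu\,\Xi_{\pm1/2-1/2}\,\Gamma(\mathrm i\,\cdot+\beta+1/2)$ and, above all, the recurrence \eqref{V via 1/V} (equivalently the functional equation of $\Gamma$), which is exactly what makes the $\chi^\nu_\pm$, $\xi^\nu_\pm$ and $\eta^\nu_\pm$ terms reassemble into the matrix $M^\nu_\pm$. A delicate point is the meaning of $R^1\chi^\nu_\pm=\chi^\nu_\pm(\cdot+\mathrm i)$: in contrast with elements of $\mathfrak D^1$, the function $\chi^\nu_\pm$ has a pole strictly inside the strip $\mathfrak S^1$ (at $\mathrm i(\beta+1/2)$), so this shift is not the boundary value of a function analytic on all of $\mathfrak S^1$ but must be extracted from the Friedrichs-extension formulas in Lemma \ref{domain connection lemma}(3,4), which is precisely why those statements are needed. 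Finally, the sign and index bookkeeping separating the two cases $\varkappa=\pm1/2$ must be carried along carefully throughout.
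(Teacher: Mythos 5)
Your proof takes a genuinely different route from the paper's. The paper proves the representation formula by a duality argument: for smooth test pairs $(\varpi,\varsigma)$ it writes $\langle D^\nu_{\pm1/2}\Psi,\binom{\varpi}{\varsigma}\rangle=\langle\Psi,D^\nu_{\pm1/2}\binom{\varpi}{\varsigma}\rangle$, passes to the $\mathcal U$-representation of the right-hand side, reorganises the $\chi^\nu_\pm$-contribution into the form pairing with $\diag\big(H_{1/2\mp1/2}^{(V_{\mp1/2}(\mathrm i\beta))^{-1}},H_{1/2\pm1/2}^{(V_{\pm1/2}(\mathrm i\beta))^{-1}}\big)$, and uses the self-adjointness of these Friedrichs extensions together with \eqref{H_0 on chi} and \eqref{H_m^alpha^*} (applied to $\chi^\nu_\pm$ via Lemma~\ref{domain connection lemma}(3,4)) to move the operator onto $\chi^\nu_\pm$. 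This sidesteps any explicit gamma-function manipulation: the factor $\chi^\nu_\pm(\cdot+\mathrm i)$ comes out for free from those formulas, and the pole of $\chi^\nu_\pm$ inside $\mathfrak S^1$ never needs to be negotiated, because $R^1$ only ever acts on the test functions. You instead compute $d^\nu_{\pm1/2}\psi^\nu_{\pm1/2}=\mathrm i\sigma_2\psi^\nu_{\pm1/2}$ pointwise, apply $\mathcal U_{\pm1/2}$ to both sides, and reduce the claim to a concrete pointwise identity, essentially $\mathcal U_{\pm1/2}(\mathrm i\sigma_2)\psi^\nu_{\pm1/2}=M^\nu_\pm\big(\mathcal U_{\pm1/2}\psi^\nu_{\pm1/2}\big)(\cdot+\mathrm i)$. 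The decomposition part and all the bookkeeping in your reduction are correct and match the paper's references.

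However, the proof as written has a genuine gap: you stop exactly at the point where that identity has to be verified, flagging it as ``the main obstacle'' but not carrying it out. This is not merely notational; it is the step that needs an argument. Since $r^{-1}\psi^\nu_{\pm1/2}\sim r^{\beta-1}$ near $0$ is not in $\mathsf L^2(\mathbb R_+)$ for $\nu>0$, the formal intertwining relation $\mathcal U_{\pm1/2}d^\nu_{\pm1/2}=M^\nu_\pm R^1\mathcal U_{\pm1/2}$ cannot be invoked term-by-term on $\psi^\nu_{\pm1/2}$; only the full combination $d^\nu_{\pm1/2}\psi^\nu_{\pm1/2}$ is square-integrable. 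So the final verification has to be done by hand. It does go through --- for instance, in the first component in the $+$ case one reduces, using $\Gamma(z+1)=z\Gamma(z)$ and the numerical identity $(\beta-1/2)^2+\nu^2=-(\beta-1/2)$, to the functional equation $-\mathrm i\,\Xi_1(s+\mathrm i)=(\mathrm is-1)\,\Xi_0(s)$, which follows from \eqref{Xi}; the other entries work out analogously --- but as presented, the proof is incomplete. Your own observation that $\chi^\nu_\pm(\cdot+\mathrm i)$ is not the boundary value of an $\mathfrak S^1$-analytic function is precisely the reason the paper prefers the weak-formulation route: there, Lemma~\ref{domain connection lemma}(3,4) has already packaged all the required gamma-function content, so the present lemma reduces to bookkeeping of adjoints.
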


\begin{proof}
The decomposition \eqref{U Psi representation} follows from \eqref{operator core}, Lemma \ref{T transformed extra functions lemma}, \eqref{H m alpha tilde domain} and Lemma \ref{domain connection lemma}.
 For any $(\varpi, \varsigma)\in \mathsf C^\infty_0\big(\mathbb R_+, \mathbb C^2\big)$ using \eqref{D^nu}, \eqref{Upsi}, Lemmata \ref{transformed operators lemma} and \ref{1/r with T lemma}, Corollary \ref{V corollary}, \eqref{H_0 on chi} and \eqref{H_m^alpha^*} we obtain
\begin{equation*}
\begin{split}
 &\Big\langle D^\nu_{\pm1/2}\Psi, \binom{\varpi}{\varsigma}\Big\rangle =\Big\langle \Psi, D^\nu_{\pm1/2}\binom{\varpi}{\varsigma}\Big\rangle \\ & =\Big\langle \Psi, \bigg(\mathcal A(\mathcal{ST})^*\mathcal{ST}D^\nu(\mathcal{ST})^*\mathcal{ST}\mathcal A^*\bigoplus_{\varkappa\in \mathbb Z+ 1/2}\delta_{\varkappa, \pm1/2}\binom{\varpi}{\varsigma}\bigg)_{\varkappa =\pm1/2}\Big\rangle \\ & =\Big\langle \binom{\zeta}\upsilon +a\chi_\pm^\nu\binom1{\nu V_{\mp1/2}(\mathrm i\beta)},\\ &\hspace{1mm} \begin{pmatrix}-\nu\Xi_{1/2\mp1/2}R^1\Xi_{1/2\mp1/2}^{-1} & R^1\\ R^1 & -\nu\Xi_{1/2\pm1/2}R^1\Xi_{1/2\pm1/2}^{-1}\end{pmatrix}\mathcal U_{\pm1/2}\binom{\varpi}{\varsigma}\Big\rangle \\ & =\Big\langle M_\pm^\nu R^1\binom{\zeta}\upsilon, \mathcal U_{\pm1/2}\binom{\varpi}{\varsigma}\Big\rangle\\ & +a\Big\langle\chi_\pm^\nu\binom{\nu V_{\mp1/2}(\mathrm i\beta)}1, \begin{pmatrix}H_{1/2\mp1/2}^{(V_{\mp1/2}(\mathrm i\beta))^{-1}} & 0\\ 0 & H_{
1/2\pm1/2}^{(V_{\pm1/2}(\mathrm 
i\beta))^{-1}}\end{pmatrix}\mathcal U_{\pm1/2}\binom{\varpi}{\varsigma}\Big\rangle\\ & = \Big\langle \mathcal U_{\pm1/2}^*M^\nu_{\pm}\binom{R^1\zeta +a\chi_\pm^\nu(\cdot +\mathrm i)}{R^1\upsilon +a\nu V_{\mp1/2}(\mathrm i\beta)\chi_\pm^\nu(\cdot +\mathrm i)}, \binom{\varpi}{\varsigma}\Big\rangle.
\end{split}\end{equation*}
By density of $\mathsf C_0^\infty\big(\mathbb R_+, \mathbb C^2)$ the claim follows.
\end{proof}

\begin{lem}\label{critical Dirac via Kato bound lemma}
For $\nu\in (0, 1/2]$ define the functions
\begin{align*}
 K^\nu_\pm(s):= \Big| 1 - \big(V_{\pm1/2}(\mathrm i\beta)\big)^{-1}V_{\pm1/2}(s +\mathrm i/2)\Big|^2
\end{align*}
on $\mathbb R\setminus\{0\}$.
Then there exists a constant $\eta_\nu >0$ such that the lower bound
\begin{align}\label{critical Dirac via Kato bound}
 (M_{\pm}^\nu)^*M_{\pm}^\nu \geqslant\eta_\nu\diag(K^\nu_\mp, K^\nu_\pm)
\end{align}
holds point-wise on $\mathbb R\setminus\{0\}$.
\end{lem}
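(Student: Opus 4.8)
The plan is to reduce the matrix inequality \eqref{critical Dirac via Kato bound} to a pair of scalar lower bounds and then prove each of those by a compactness-plus-asymptotics argument. Writing $V_\mp := V_{\mp 1/2}(s+\mathrm i/2)$ and $V_\pm := V_{\pm 1/2}(s+\mathrm i/2)$ for brevity (both real-valued on $\mathbb R$), the matrix $M^\nu_\pm$ is real and symmetric on $\mathbb R\setminus\{0\}$, so $(M^\nu_\pm)^*M^\nu_\pm = (M^\nu_\pm)^2$. Its eigenvalues are $\big(1 - \tfrac{\nu^2}{2}(V_\mp + V_\pm)\big) \pm \sqrt{\tfrac{\nu^2}{4}(V_\mp - V_\pm)^2 + 1}$, and the square of the smaller one in absolute value is what we must bound below. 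Rather than diagonalising, I would instead observe that it suffices to find $\eta_\nu>0$ with
\begin{equation*}
 \big\langle w, (M^\nu_\pm)^2 w\big\rangle \geqslant \eta_\nu\big(K^\nu_\mp |w_1|^2 + K^\nu_\pm|w_2|^2\big), \qquad w=(w_1,w_2)\in\mathbb C^2,
\end{equation*}
pointwise in $s$. Testing against $w=(1,0)$ and $w=(0,1)$ shows the first diagonal entry of $(M^\nu_\pm)^2$ is $\nu^2 V_\mp^2 + 1$ and the second is $\nu^2 V_\pm^2 + 1$, so the whole content of the lemma is that these entries, together with control of the off-diagonal cross term, dominate $\eta_\nu K^\nu_\mp$ and $\eta_\nu K^\nu_\pm$ respectively; the cross term will be harmless because $(M^\nu_\pm)^2 - \eta_\nu\diag(K^\nu_\mp,K^\nu_\pm)$ need only be positive semidefinite, which by the $2\times 2$ Sylvester criterion amounts to nonnegativity of the trace-type diagonal expressions and of the determinant.

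The key structural input is that on the critical line the operators $\widetilde H^{\alpha_{|m|}}_m$ with $\alpha_{|m|}=V_{|m|-1/2}(0)^{-1}$ are nonnegative but have a zero at $s=0$; concretely $1 - (V_{\pm1/2}(\mathrm i\beta))^{-1}V_{\pm1/2}(s+\mathrm i/2)$ is the symbol whose modulus-squared is $K^\nu_\pm$, and by Lemma \ref{V monotonicity lemma}(2,3) the factor $V_{\pm1/2}(s+\mathrm i/2)$ is, after using $V_j(z)=V_j(-z)$, a positive function of $s$ that attains its maximum at $s=0$ and decays to zero as $|s|\to\infty$. Hence $K^\nu_\pm(s)$ vanishes precisely where $V_{\pm1/2}(s+\mathrm i/2)=V_{\pm1/2}(\mathrm i\beta)$; since $\nu\le 1/2$ forces $\beta=\sqrt{1/4-\nu^2}\in[0,1/2)$ and $V_{\pm1/2}(\mathrm i\beta)\ge V_{\pm1/2}(0)=\alpha_{1/2\mp1/2}^{-1}\cdot(\text{something})$, one checks that this equation has at most finitely many real solutions, and at each such $s_0$ both $1+\nu^2 V_\pm^2$ and $K^\nu_\pm$ are smooth, with $K^\nu_\pm(s_0)$ generically zero only to first or second order. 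Away from a neighbourhood of the finitely many zeros of $K^\nu_\mp$ and $K^\nu_\pm$ and of $s=0$, the right-hand side $\diag(K^\nu_\mp,K^\nu_\pm)$ is bounded and $(M^\nu_\pm)^2 \ge \mathrm{const}>0$ (its smaller eigenvalue never vanishes on $\mathbb R\setminus\{0\}$, because that would force $\det M^\nu_\pm = \nu^2 V_\mp V_\pm - 1 = 0$ with $V_\mp,V_\pm>0$, a codimension-one condition one rules out using the strict monotonicity from Lemma \ref{V monotonicity lemma}), so the inequality there holds for small $\eta_\nu$ by a routine compactness argument on compact intervals. Near $s=\infty$ the asymptotics \eqref{Gamma asymptotics}, equivalently \eqref{Xi asymptotics} and \eqref{V via 1/V}, give $V_\pm(s)=O(s^{-2})$, hence $K^\nu_\pm\to 1$ while $(M^\nu_\pm)^2\to\mathrm{diag}(1,1)$, so the bound is immediate for large $|s|$. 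The only delicate regime is a neighbourhood of each zero $s_0$ of $K^\nu_\pm$ (including possibly $s_0=0$), where both sides are small; there I would Taylor-expand and compare leading orders, using that $\partial_s V_{\pm1/2}(s+\mathrm i/2)\ne 0$ at any $s_0\ne 0$ by Lemma \ref{V monotonicity lemma}(2) so $K^\nu_\pm$ vanishes to exactly second order, while the diagonal entry $1+\nu^2V_\pm^2$ of $(M^\nu_\pm)^2$ is strictly positive there — so in fact the dangerous-looking zero of $K$ is the \emph{easy} case, since the left side stays bounded away from zero there.

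The genuine obstacle, and the place where the zero at $s=0$ matters, is reconciling the determinant condition with the diagonal conditions simultaneously: one must show $\det\!\big((M^\nu_\pm)^2 - \eta_\nu\diag(K^\nu_\mp,K^\nu_\pm)\big)\ge 0$, i.e.
\begin{equation*}
 (\nu^2V_\mp^2+1-\eta_\nu K^\nu_\mp)(\nu^2V_\pm^2+1-\eta_\nu K^\nu_\pm) \ge \big|-\nu(V_\mp+V_\pm)\big|^2 = \nu^2(V_\mp+V_\pm)^2,
\end{equation*}
where the right side is the square of the off-diagonal entry of $(M^\nu_\pm)^2$. For $\eta_\nu=0$ this reads $(\nu^2V_\mp^2+1)(\nu^2V_\pm^2+1)\ge\nu^2(V_\mp+V_\pm)^2$, i.e. $(\det M^\nu_\pm)^2=(\nu^2V_\mp V_\pm-1)^2\ge 0$, which is trivially true but \emph{degenerate} exactly when $\nu^2 V_\mp V_\pm=1$. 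So I must show this last equality fails on $\mathbb R\setminus\{0\}$, or locate where it is near-degenerate and absorb the deficit into the $O(\eta_\nu)$ correction; here the relation \eqref{V via 1/V}, $\big(z^2+(j+1)^2\big)V_j(z)=V_{j+1}(z)^{-1}$, is the crucial identity, since for the pair $j=\mp 1/2$ it pins down $V_{-1/2}$ and $V_{1/2}$ against each other at $z=s+\mathrm i/2$, and combined with $V_j(0)^{-1}=\alpha_j$ from \eqref{alpha_c} and the definition $\beta^2=1/4-\nu^2$ one can verify that $\nu^2 V_{-1/2}(\mathrm i/2)V_{1/2}(\mathrm i/2)$ (the $s=0$ value) equals $1$ only in the critical case $\nu=1/2$, $\beta=0$ — otherwise it is strictly less than $1$. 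Thus the determinant inequality holds with a positive margin uniformly on compacts, $\eta_\nu$ can be chosen small enough to preserve it there, and the behaviour at infinity handled above closes the estimate globally on $\mathbb R\setminus\{0\}$.
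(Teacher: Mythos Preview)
Your argument rests on the claim that $V_{\mp1/2}(s+\mathrm i/2)$ and $V_{\pm1/2}(s+\mathrm i/2)$ are real on $\mathbb R$, so that $M^\nu_\pm$ is real symmetric and $(M^\nu_\pm)^*M^\nu_\pm=(M^\nu_\pm)^2$. This is false. From the definition \eqref{V} one has $\overline{V_j(z)}=V_j(\bar z)$, whence $\overline{V_j(s+\mathrm i/2)}=V_j(s-\mathrm i/2)=V_j(-s+\mathrm i/2)$ by Lemma~\ref{V monotonicity lemma}(1); this equals $V_j(s+\mathrm i/2)$ only at $s=0$. Concretely, the relation \eqref{V via 1/V} with $j=-1/2$, $z=s+\mathrm i/2$ gives
\[
V_{-1/2}(s+\mathrm i/2)\,V_{1/2}(s+\mathrm i/2)=\frac{1}{(s+\mathrm i/2)^2+1/4}=\frac{1}{s^2+\mathrm is},
\]
manifestly non-real, and in fact $|V_{-1/2}(s+\mathrm i/2)|=|s|^{-1}$ blows up at the origin rather than attaining a maximum there. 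Lemma~\ref{V monotonicity lemma}(2), which you invoke, concerns $V_j$ on the \emph{real} axis and says nothing about the line $\Im z=1/2$. Consequently your eigenvalue formula, the identification of the diagonal entries of $(M^\nu_\pm)^2$ as $\nu^2V_\mp^2+1$, the off-diagonal entry $-\nu(V_\mp+V_\pm)$, the asymptotic claim $V_\pm=O(s^{-2})$, and the degeneracy analysis of $\det M^\nu_\pm$ all collapse.

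The paper proceeds differently: it computes $(M^\nu_+)^*M^\nu_+$ directly as a Hermitian (not real) matrix, with diagonal entries $1+\nu^2s^{-2}$ and $1+\nu^2(1+s^2)^{-1}$ and off-diagonal entries carrying a unimodular phase $P(s)$. It then replaces $K^\nu_\pm$ by the crude explicit bounds $K^\nu_-\leqslant 2(1+\nu^4V^2s^{-2})$, $K^\nu_+\leqslant 2(1+(1+s^2)^{-1}V^{-2})$ (with $V:=V_{1/2}(\mathrm i\beta)$), so that the determinant of the difference becomes an explicit rational function $(\mathcal As^4+\mathcal Bs^2+\mathcal C)/\big(s^2(1+s^2)V^2\big)$ whose coefficients are polynomials in~$\eta$; positivity of $\mathcal A,\mathcal B,\mathcal C$ for small $\eta$ then gives the result globally in one stroke, with no compactness or local Taylor analysis needed. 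Your overall Sylvester-criterion strategy is sound, but it has to be run on the correct Hermitian matrix $(M^\nu_\pm)^*M^\nu_\pm$, and the singular behaviour at $s=0$ (coming from the pole of $V_{-1/2}(\cdot+\mathrm i/2)$) must be matched against the corresponding blow-up of $K^\nu_-$, which you have not accounted for.
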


\begin{proof}
It is enough to establish \eqref{critical Dirac via Kato bound} for $M_{+}^\nu$ and then use the relation $M_{-}^\nu= \sigma_1M_{+}^\nu\sigma_1$. We introduce the shorthand $V:= V_{1/2}(\mathrm i\beta) =\nu^{-2}\big(V_{-1/2}(\mathrm i\beta)\big)^{-1}$ (see  \eqref{V via 1/V} for the second equality).
For any $s\in \mathbb R\setminus\{0\}$, estimating
\begin{align*}
 K^\nu_\pm(s)\leqslant 2\Big(1 + \big(V_{\pm1/2}(\mathrm i\beta)\big)^{-2}\big|V_{\pm1/2}(s +\mathrm i/2)\big|^2\Big)
\end{align*}
and using \eqref{V} we obtain
\begin{align*}
 K^\nu_+(s)\leqslant 2\big(1 + (1+ s^2)^{-1}V^{-2}\big)
\end{align*}
and
\begin{align*}
 K^\nu_-(s)\leqslant 2(1 + \nu^4V^{2}s^{-2}).
\end{align*}
Analogously we get
\begin{align*}
\big(M_{+}^\nu(s)\big)^*M_{+}^\nu(s)= \begin{pmatrix}
                                                 1 + \nu^2s^{-2}& -\dfrac{\nu(1 -2\mathrm is)}{s^2 +\mathrm is}P(s)\\ -\dfrac{\nu(1 +2\mathrm is)}{s^2 -\mathrm is}\overline P(s) & 1+\nu^2(1+ s^2)^{-1}
                                                \end{pmatrix}
\end{align*}
with
\begin{align*}
 P(s):= \frac{\Gamma\big((1 +\mathrm is)/2\big)\Gamma(-\mathrm is/2)}{\Gamma\big((1 -\mathrm is)/2\big)\Gamma(\mathrm is/2)}, \qquad \big|P(s)\big| =1.
\end{align*}
Thus for any $\eta >0$ the inequality
\begin{align}\label{determinant}
 \det\Big(\big(M_{+}^\nu(s)\big)^*M_{+}^\nu(s) -\frac\eta2\diag\big(K^\nu_-(s), K^\nu_+(s)\big)\Big) \geqslant \frac{\mathcal A s^4 +\mathcal Bs^2 + \mathcal C}{s^2(1 +s^2)V^2}
\end{align}
holds with
\begin{align*}
 \mathcal A &:=V^2(1 -\eta)^2,\\
 \mathcal B &:=V^2(1 -2\nu^2) -(1 +2V^2 +2\nu^2V^2 +\nu^4V^4)\eta + (1+ V^2 +\nu^4V^4)\eta^2,\\
 \mathcal C &:=\nu^4V^2 -\nu^2(1 +V^2 +\nu^2V^4 +\nu^4V^4)\eta +\nu^4V^2(1 +V^2)\eta^2.
\end{align*}
There exists $\eta_\nu >0$ such that for any $\eta\in [0, 2\eta_\nu]$ the coefficients $\mathcal A$, $\mathcal B$ and $\mathcal C$ are strictly positive, hence also the right hand side of \eqref{determinant}. Since for $\eta =0$ both eigenvalues of $\big(M_{+}^\nu(s)\big)^*M_{+}^\nu(s)$ are positive, both eigenvalues of \[\big(M_{+}^\nu(s)\big)^*M_{+}^\nu(s) -\eta\diag\big(K^\nu_-(s), K^\nu_+(s)\big)\] are non-negative for all $s\in \mathbb R\setminus\{0\}$ provided $\eta\in [0, \eta_\nu]$.
\end{proof}

\begin{rem}
 It is easy to see that
\begin{align}\label{eta_nu as an inf}
 \eta_\nu =\inf_{s \in\mathbb R\setminus\{0\}}\eta_-^\nu(s),
\end{align}
where $\eta_-^\nu(s)$ is the smallest of the two solutions $\eta$ of
\begin{align*}
\det\Big(\big(M_{+}^\nu(s)\big)^*M_{+}^\nu(s) -\eta\diag\big(K^\nu_-(s), K^\nu_+(s)\big)\Big) =0.
\end{align*}
Numerical analysis indicates that the infimum in \eqref{eta_nu as an inf} is achieved for $s =+0$ and is thus equal to
\begin{align*}
 &\frac12\Bigg(\frac{{\nu}^{2}+1}{\big(1-V_{1/2}(\mathrm{i}\beta)^{-1}\big)^2}+\nu^{2}V_{-1/2}(\mathrm{i}\beta)^2 \\
 &-\sqrt{\bigg(\frac{{\nu}^{2}+1}{\big(1-V_{1/2}(\mathrm{i}\beta)^{-1}\big)^2}+\nu^{2}V_{-1/2}(\mathrm{i}\beta)^2\bigg)^2-\frac{4\nu^{4}V_{-1/2}(\mathrm{i}\beta)^2}{\big(1-V_{1/2}(\mathrm{i}\beta)^{-1}\big)^2}}\ \Bigg).
\end{align*}
\end{rem}

The final result of this section is
\begin{lem}\label{squares comparison lemma}
The inequality
\begin{align}\label{squares comparison}
(D_{\pm1/2}^\nu)^2 \geqslant \eta_\nu\big(\mathcal U_{\pm1/2}^*\diag(H_{1/2 \mp1/2}^{(V_{\mp1/2}(\mathrm i\beta))^{-1}}, H_{1/2 \pm1/2}^{(V_{\pm1/2}(\mathrm i\beta))^{-1}})\mathcal U_{\pm1/2}\big)^2
\end{align}
holds for any $\nu\in (0, 1/2]$ with $\eta_\nu$ defined in Lemma \ref{critical Dirac via Kato bound lemma}.
\end{lem}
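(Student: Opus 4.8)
The plan is to verify \eqref{squares comparison} first on the operator core $\mathfrak C^\nu_{\pm1/2}$ of $D^\nu_{\pm1/2}$ from Lemma \ref{D^nu_kappa lemma} and then to extend it to all of $\mathfrak D(D^\nu_{\pm1/2})$ by a closedness argument. Abbreviate
\begin{equation*}
 A_\pm:=\mathcal U_{\pm1/2}^*\diag\big(H_{1/2\mp1/2}^{(V_{\mp1/2}(\mathrm i\beta))^{-1}}, H_{1/2\pm1/2}^{(V_{\pm1/2}(\mathrm i\beta))^{-1}}\big)\mathcal U_{\pm1/2},
\end{equation*}
a self-adjoint operator in $\mathsf L^2(\mathbb R_+, \mathbb C^2)$, being the unitary image of a direct sum of the Friedrichs extensions from Section \ref{scalar section}. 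Since the domains of a self-adjoint operator and of its absolute value coincide, \eqref{squares comparison} is equivalent to the inclusion $\mathfrak D(D^\nu_{\pm1/2})\subseteq\mathfrak D(A_\pm)$ together with $\|D^\nu_{\pm1/2}\psi\|^2\geqslant\eta_\nu\|A_\pm\psi\|^2$ for all $\psi\in\mathfrak D(D^\nu_{\pm1/2})$, and these are the two statements I would establish.

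First I would fix $\Psi\in\mathfrak C^\nu_{\pm1/2}$ and apply Lemma \ref{maximal subdomain transformed lemma} to write $\mathcal U_{\pm1/2}\Psi=\binom{\zeta+a\chi^\nu_\pm}{\upsilon+a\nu V_{\mp1/2}(\mathrm i\beta)\chi^\nu_\pm}$ with $\zeta\in\mathfrak D\big(\widetilde H_{1/2\mp1/2}^{(V_{\mp1/2}(\mathrm i\beta))^{-1}}\big)$, $\upsilon\in\mathfrak D\big(\widetilde H_{1/2\pm1/2}^{(V_{\pm1/2}(\mathrm i\beta))^{-1}}\big)$ and $a\in\mathbb C$. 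Because the Friedrichs extensions extend the symmetric operators $\widetilde H^\alpha_m$ and $\chi^\nu_\pm$ lies in the domains of $H_0^{(V_{-1/2}(\mathrm i\beta))^{-1}}$ and $H_1^{(V_{1/2}(\mathrm i\beta))^{-1}}$ by Lemma \ref{domain connection lemma}(3,4), the vector $\mathcal U_{\pm1/2}\Psi$ lies in the domain of the diagonal operator, so $\Psi\in\mathfrak D(A_\pm)$. Combining the explicit actions of $\widetilde H^\alpha_m$ from \eqref{MWF Herbst}, of $H_0^{(V_{-1/2}(\mathrm i\beta))^{-1}}$ on $\chi^\nu_\pm$ from \eqref{H_0 on chi}, and of $H_1^{(V_{1/2}(\mathrm i\beta))^{-1}}$ on $\chi^\nu_\pm$ from \eqref{H_m^alpha^*} (using $\zeta_{1,(V_{1/2}(\mathrm i\beta))^{-1}}=-\beta$, as in the proof of Lemma \ref{domain connection lemma}(4)), I would check that
\begin{equation*}
 \diag\big(H_{1/2\mp1/2}^{(V_{\mp1/2}(\mathrm i\beta))^{-1}}, H_{1/2\pm1/2}^{(V_{\pm1/2}(\mathrm i\beta))^{-1}}\big)\mathcal U_{\pm1/2}\Psi=\diag\big(g_{\mp1/2}, g_{\pm1/2}\big)w,
\end{equation*}
where $g_j(s):=1-\big(V_j(\mathrm i\beta)\big)^{-1}V_j(s+\mathrm i/2)$ satisfies $|g_{\pm1/2}|^2=K^\nu_\pm$, and $w:=\binom{R^1\zeta+a\chi^\nu_\pm(\cdot+\mathrm i)}{R^1\upsilon+a\nu V_{\mp1/2}(\mathrm i\beta)\chi^\nu_\pm(\cdot+\mathrm i)}$ is exactly the vector to which $M^\nu_\pm$ is applied in the formula of Lemma \ref{maximal subdomain transformed lemma} for $\mathcal U_{\pm1/2}D^\nu_{\pm1/2}\Psi$.

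Given this identification the core estimate is immediate: by unitarity of $\mathcal U_{\pm1/2}$ and the pointwise bound $(M^\nu_\pm)^*M^\nu_\pm\geqslant\eta_\nu\diag(K^\nu_\mp, K^\nu_\pm)$ of Lemma \ref{critical Dirac via Kato bound lemma},
\begin{equation*}
\begin{split}
 \|D^\nu_{\pm1/2}\Psi\|^2&=\int_{\mathbb R}\big\langle w(s), \big(M^\nu_\pm(s)\big)^*M^\nu_\pm(s)w(s)\big\rangle\,\mathrm ds\\ &\geqslant\eta_\nu\int_{\mathbb R}\big(K^\nu_\mp(s)|w_1(s)|^2+K^\nu_\pm(s)|w_2(s)|^2\big)\,\mathrm ds=\eta_\nu\|A_\pm\Psi\|^2 .
\end{split}
\end{equation*}
To pass to arbitrary $\psi\in\mathfrak D(D^\nu_{\pm1/2})$, I would choose $\Psi_n\in\mathfrak C^\nu_{\pm1/2}$ with $\Psi_n\to\psi$ and $D^\nu_{\pm1/2}\Psi_n\to D^\nu_{\pm1/2}\psi$ (possible since $\mathfrak C^\nu_{\pm1/2}$ is a core for $D^\nu_{\pm1/2}$); the core estimate applied to the differences $\Psi_n-\Psi_m$ shows that $(A_\pm\Psi_n)_n$ is Cauchy, and closedness of $A_\pm$ then yields $\psi\in\mathfrak D(A_\pm)$ with $A_\pm\Psi_n\to A_\pm\psi$. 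Letting $n\to\infty$ in the core estimate gives $\|D^\nu_{\pm1/2}\psi\|^2\geqslant\eta_\nu\|A_\pm\psi\|^2$, which together with the inclusion of domains is \eqref{squares comparison}.

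I expect the main obstacle to be the bookkeeping behind the displayed identity for $\diag(\cdots)\mathcal U_{\pm1/2}\Psi$: one has to keep careful track of which channel index, which extension parameter and which of the functions $\chi^\nu_\pm$, $\xi^\nu_\pm$, $\eta^\nu_\pm$ enter each of the two slots, and one has to be sure that the analytic-continuation (i.e.\ $R^1$) manipulations turning $\chi^\nu_\pm$ into $\chi^\nu_\pm(\cdot+\mathrm i)$ are legitimate on $\mathfrak D\big(H_0^{(V_{-1/2}(\mathrm i\beta))^{-1}}\big)$ and $\mathfrak D\big(H_1^{(V_{1/2}(\mathrm i\beta))^{-1}}\big)$. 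All of this, however, has been prepared by Lemma \ref{domain connection lemma}, \eqref{H_0 on chi} and \eqref{H_m^alpha^*}, so in the end it reduces to a routine verification.
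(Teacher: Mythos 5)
Your proof is correct and takes essentially the same route as the paper: apply Lemma~\ref{maximal subdomain transformed lemma} to express $\mathcal U_{\pm1/2}D^\nu_{\pm1/2}\Psi$ as $M^\nu_\pm w$, use the pointwise bound of Lemma~\ref{critical Dirac via Kato bound lemma}, identify $\diag(g_{\mp1/2},g_{\pm1/2})w$ with $\mathcal U_{\pm1/2}A_\pm\Psi$, and invoke the core property of $\mathfrak C^\nu_{\pm1/2}$. You merely spell out the domain inclusion and the closedness argument for the passage from core to full domain in more detail than the paper, which compresses that step into a single sentence.
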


\begin{proof}
For arbitrary $\Psi\in\mathfrak C^\nu_{\pm1/2}$ we use \eqref{U Psi representation} to represent $\mathcal U_{\pm1/2}\Psi$.
Applying Lemmata \ref{maximal subdomain transformed lemma}, \ref{critical Dirac via Kato bound lemma}, \ref{H alpha 0 domain lemma} and \ref{H alpha m domains lemma} together with Equation \eqref{H_0 on chi} we get
\begin{align*}
 \|D_{\pm1/2}^\nu\Psi\|^2 &= \bigg\|M^\nu_{\pm}\binom{R^1\zeta +a\chi_\pm^\nu(\cdot +\mathrm i)}{R^1\upsilon +a\nu V_{\mp1/2}(\mathrm i\beta)\chi_\pm^\nu(\cdot +\mathrm i)}\bigg\|^2\\ &\geqslant \eta_\nu\left\|\begin{pmatrix}\bigg(1 - \dfrac{V_{\mp1/2}(\cdot +\mathrm i/2)}{V_{\mp1/2}(\mathrm i\beta)}\bigg)\big(R^1\zeta +a\chi_\pm^\nu(\cdot +\mathrm i)\big)\\ \bigg(1 - \dfrac{V_{\pm1/2}(\cdot +\mathrm i/2)}{V_{\pm1/2}(\mathrm i\beta)}\bigg)\big(R^1\upsilon +a\nu V_{\mp1/2}(\mathrm i\beta)\chi_\pm^\nu(\cdot +\mathrm i)\big)\end{pmatrix}\right\|^2\\
&=\eta_\nu\big\|\mathcal U_{\pm1/2}^*\diag(H_{1/2 \mp1/2}^{(V_{\mp1/2}(\mathrm i\beta))^{-1}}, H_{1/2 \pm1/2}^{(V_{\pm1/2}(\mathrm i\beta))^{-1}})\mathcal U_{\pm1/2}\Psi\big\|^2.
\end{align*}
Since $\mathfrak C^\nu_{\pm1/2}$ is an operator core for $D^\nu_{\pm1/2}$, we conclude \eqref{squares comparison}.
\end{proof}

\section{Non-critical channels estimate}\label{non-critical section}

\begin{lem}\label{non-critical channels lemma}
 For $\nu\in (0, 1/2]$ the operator inequalities
\begin{align}\label{non-critical channels inequality squared}
 (D_{\varkappa}^\nu)^2 \geqslant \Big(1 -\nu\big(3(16 +\nu^2)^{1/2} -5\nu\big)/8\Big)\big(\mathcal U_\varkappa^*R^1\mathcal U_\varkappa\big)^2
\end{align}
hold true for all $\varkappa \in(\mathbb Z +1/2)\setminus\{-1/2,\, 1/2\}$.
\end{lem}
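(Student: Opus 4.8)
The plan is to pass to the $\mathcal U_\varkappa$-representation of $D^\nu_\varkappa$ and reduce the statement to a pointwise matrix inequality. Since $\varkappa\neq\pm1/2$, Lemma \ref{D^nu_kappa lemma} gives $\mathfrak C^\nu_\varkappa=\mathsf C^\infty_0(\mathbb R_+,\mathbb C^2)$ as an operator core for $D^\nu_\varkappa$, both $\varkappa-1/2$ and $\varkappa+1/2$ are non-zero integers, and $\{|\varkappa-1/2|,|\varkappa+1/2|\}=\{n,n+1\}$ with $n:=|\varkappa|-1/2\in\mathbb N$. Repeating the computation from the proof of Lemma \ref{maximal subdomain transformed lemma} (now with no extra functions, since none occur in the non-critical channels) and using \eqref{D^nu}, \eqref{Upsi}, Lemmata \ref{transformed operators lemma} and \ref{1/r with T lemma}, Corollary \ref{V corollary} and Lemma \ref{R and Xi commutation lemma} — the last two applied on $\mathfrak D^1$, which is legitimate because the relevant angular momenta are non-zero — one obtains $\mathcal U_\varkappa\Psi\in\mathfrak D^1\oplus\mathfrak D^1$ and
$$\mathcal U_\varkappa D^\nu_\varkappa\Psi=M^\nu_\varkappa\,(R^1\otimes\mathbbm1)\,\mathcal U_\varkappa\Psi,\qquad M^\nu_\varkappa(s):=\begin{pmatrix}-\nu V_{|\varkappa-1/2|-1/2}(s+\mathrm i/2)&1\\1&-\nu V_{|\varkappa+1/2|-1/2}(s+\mathrm i/2)\end{pmatrix},$$
for every $\Psi\in\mathfrak C^\nu_\varkappa$, with $M^\nu_\varkappa$ acting as a matrix-valued multiplication operator. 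I expect this bookkeeping — in particular checking that $\mathcal U_\varkappa$ maps $\mathsf C^\infty_0(\mathbb R_+,\mathbb C^2)$ into the domains on which Corollary \ref{V corollary} and Lemma \ref{R and Xi commutation lemma} are available — to be the main obstacle; everything afterwards is an elementary matrix estimate.

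By \eqref{|p| with T} the operator $R^1$ is self-adjoint and non-negative, so it suffices to prove the pointwise lower bound $\big(M^\nu_\varkappa(s)\big)^*M^\nu_\varkappa(s)\geqslant c_\nu\mathbbm1$ for all $s\in\mathbb R$, where $c_\nu:=1-\nu\big(3(16+\nu^2)^{1/2}-5\nu\big)/8$ is the constant in \eqref{non-critical channels inequality squared}. Indeed, this gives $\|D^\nu_\varkappa\Psi\|\geqslant\sqrt{c_\nu}\,\|\mathcal U^*_\varkappa R^1\mathcal U_\varkappa\Psi\|$ for all $\Psi$ in the operator core $\mathfrak C^\nu_\varkappa$ of $D^\nu_\varkappa$, and since $\mathcal U^*_\varkappa R^1\mathcal U_\varkappa$ is self-adjoint, hence closed, this extends to $\mathfrak D(D^\nu_\varkappa)$ and yields \eqref{non-critical channels inequality squared} for the squares exactly as at the end of the proof of Lemma \ref{squares comparison lemma}.

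To obtain the pointwise bound I would first establish the identity $\big|V_j(s+\mathrm i/2)\big|^2=\big((j+1/2)^2+s^2\big)^{-1}$ for $j\in\mathbb N_0-1/2$ and $s\in\mathbb R$: since $V_j$ is real on $\mathbb R$, Schwarz reflection gives $|V_j(s+\mathrm i/2)|^2=V_j(s+\mathrm i/2)V_j(s-\mathrm i/2)$, and the right-hand side collapses after inserting \eqref{V} and applying $\Gamma(w+1)=w\Gamma(w)$. Writing $a:=\nu V_{|\varkappa-1/2|-1/2}(s+\mathrm i/2)$, $b:=\nu V_{|\varkappa+1/2|-1/2}(s+\mathrm i/2)$, one then has $\{|a|^2,|b|^2\}=\{\nu^2(n^2+s^2)^{-1},\nu^2((n+1)^2+s^2)^{-1}\}$ (so $|a|,|b|\leqslant\nu\leqslant1/2$), while $\det M^\nu_\varkappa=ab-1$ gives $\tr\big((M^\nu_\varkappa)^*M^\nu_\varkappa\big)=2+|a|^2+|b|^2$ and $\det\big((M^\nu_\varkappa)^*M^\nu_\varkappa\big)=|1-ab|^2\geqslant(1-|a|\,|b|)^2$. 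For a Hermitian $2\times2$ matrix the smallest eigenvalue is $\geqslant c_\nu$ iff $c_\nu^2-c_\nu\tr+\det\geqslant0$ (the side condition $\tr\geqslant2c_\nu$ holding automatically since $c_\nu<1\leqslant\tr/2$), so it remains to verify
$$\Phi(s,\varkappa):=c_\nu^2-c_\nu\big(2+|a|^2+|b|^2\big)+\big(1-|a|\,|b|\big)^2\geqslant0.$$
As $c_\nu\in(0,1)$ and $|a|\,|b|\leqslant\nu^2<1$, and since $|a|^2$, $|b|^2$ and $|a|\,|b|$ all decrease when $s^2$ or $n$ increases, $\Phi$ is monotonically increasing in $s^2$ and in $n$; hence $\Phi(s,\varkappa)\geqslant\Phi(0,\pm3/2)$ (i.e. $n=1$, $s=0$), where $|a|=\nu$, $|b|=\nu/2$, $ab=\nu^2/2$, the discriminant $\tr^2-4\det$ equals $\tfrac{9\nu^2}{16}(16+\nu^2)$, and a direct computation gives $\Phi(0,\pm3/2)=0$, with $c_\nu$ equal to the smallest eigenvalue of $\big(M^\nu_{\pm3/2}(0)\big)^*M^\nu_{\pm3/2}(0)$.
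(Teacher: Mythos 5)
Your proof is correct, and it takes a genuinely different route from the paper's. The paper works directly in the Mellin transform of the coordinate-space representation: for $\varphi\in\mathsf C_0^\infty(\mathbb R_+,\mathbb C^2)$ they compute $\|D^\nu_\varkappa\varphi\|^2$ as an integral against the matrix $A^\nu_\varkappa(1,s)$, subtract $(1-b)\|D^0_\varkappa\varphi\|^2$, and then establish non-negativity of the two eigenvalues $a^\nu_{\varkappa,\pm}(b,s)$ of the remainder $A^\nu_\varkappa(b,s)$ by two explicit monotonicity calculations (in $\varkappa$ and in $s$). You instead pass to the $\mathcal U_\varkappa$-representation used for the \emph{critical} channels in Lemmata \ref{maximal subdomain transformed lemma}--\ref{critical Dirac via Kato bound lemma}, where $D^\nu_\varkappa$ becomes the multiplication operator $M^\nu_\varkappa\,(R^1\otimes\mathbbm1)$, and reduce the whole claim to the pointwise bound $(M^\nu_\varkappa)^*M^\nu_\varkappa\geqslant c_\nu\mathbbm1$. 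This uniformises the treatment across critical and non-critical channels and removes the need to identify $\mathcal U^*_\varkappa R^1\mathcal U_\varkappa$ with $D^0_\varkappa$ (which the paper does explicitly). Your computation is sound: the identity $|V_j(s+\mathrm i/2)|^2=\big((j+1/2)^2+s^2\big)^{-1}$ follows exactly as you say from Schwarz reflection and $\Gamma(w+1)=w\Gamma(w)$; $\det\big((M^\nu_\varkappa)^*M^\nu_\varkappa\big)=|1-ab|^2\geqslant(1-|a||b|)^2$; and since $\nu^2\leqslant c_\nu$ for $\nu\in(0,1/2]$, the function $\Phi=(1-c_\nu)^2-c_\nu(x+y)-2\sqrt{xy}+xy$ is decreasing in both $x=|a|^2$ and $y=|b|^2$, so its minimum over the allowed range $x\in\{\nu^2/(n^2+s^2)\}$, $y\in\{\nu^2/((n+1)^2+s^2)\}$ is indeed at $n=1$, $s=0$, where you correctly verify $\Phi=0$ and $c_\nu$ is the smallest eigenvalue (the same critical point $\varkappa=3/2$, $s=0$ that the paper finds). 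Both routes land on the same saturating matrix; yours is arguably cleaner because the single pointwise estimate replaces the paper's two separate monotonicity arguments, at the cost of the $\Gamma$-identity for $|V_j(s+\mathrm i/2)|^2$.
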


\begin{proof}
As in Lemma \ref{squares comparison lemma}, it is enough to prove \eqref{non-critical channels inequality squared} on the functions from the operator core $\mathfrak C_{\varkappa}^\nu$ which, according to \eqref{operator core}, coincides with $\mathsf C_0^\infty(\mathbb R_+, \mathbb C^2)$.

With the help of Lemma \ref{transformed operators lemma}, \eqref{Upsi} and \eqref{D^nu}  we get for every $\varphi\in\mathsf C_0^\infty(\mathbb R_+, \mathbb C^2)$
\begin{align*}
 &\|\mathcal U_\varkappa^*R^1\mathcal U_\varkappa\varphi\|^2 =\Big\|\bigoplus_{\widetilde\varkappa\in \mathbb Z +1/2}\delta_{\widetilde\varkappa, \varkappa}R^1\mathcal U_{\varkappa}\varphi\Big\|^2\\ &=\Big\|\mathcal{ST}(-\Delta)^{1/2}(\mathcal{ST})^*\bigoplus_{\widetilde\varkappa\in \mathbb Z +1/2}\delta_{\widetilde\varkappa, \varkappa}\,\mathcal U_{\varkappa}\varphi\Big\|^2\\ &=\Big\|\mathcal{A}(-\mathrm i\boldsymbol\sigma\cdot\nabla)\mathcal A^*(\mathcal{STA}^*)^*\bigoplus_{\widetilde\varkappa\in \mathbb Z +1/2}\delta_{\widetilde\varkappa, \varkappa}\,\mathcal U_{\varkappa}\varphi\Big\|^2 =\|D_\varkappa^0\varphi\|^2.
\end{align*}
It is thus enough to prove \eqref{non-critical channels inequality squared} with $D_\varkappa^0$ instead of $\mathcal U_\varkappa^*R^1\mathcal U_\varkappa$.

For $b\in \mathbb R$ we introduce a family of matrix-functions
\begin{align*}
 A^\nu_\varkappa(b, s) :=\begin{pmatrix}\nu^2 +b\big(s^2 +(1/2 -\varkappa)^2\big)& 2\nu(\mathrm is +\varkappa)\\ 2\nu(-\mathrm is +\varkappa) & \nu^2 +b\big(s^2 +(\varkappa +1/2)^2\big)\end{pmatrix}, \quad s\in\mathbb R.
\end{align*}
A straightforward calculation using Lemma \ref{D^nu_kappa lemma}, \eqref{d_kappa} and \eqref{Mellin transform} delivers
\begin{align*}
 &\|D^\nu_\varkappa\varphi\|^2= \|\mathcal MD^\nu_\varkappa\varphi\|^2= \int\limits_{-\infty}^\infty\big\langle (R^{-1}\mathcal M\varphi)(s), A_\varkappa^\nu(1, s)(R^{-1}\mathcal M\varphi)(s)\big\rangle\,\mathrm ds.
\end{align*}
Thus
\begin{align}\label{lower bound with A}
 &\|D_\varkappa^\nu\varphi\|^2 -(1 -b)\|D_\varkappa^0\varphi\|^2 =\int_{-\infty}^\infty\big\langle (R^{-1}\mathcal M\varphi)(s), A_\varkappa^\nu(b, s)(R^{-1}\mathcal M\varphi)(s)\big\rangle\,\mathrm ds
\end{align}
holds. The eigenvalues of $A^\nu_\varkappa(b, s)$ are given by
\begin{align}\label{eigenvalues of A}
 a^\nu_{\varkappa, \pm}(b, s):= \nu^2 +b/4 +\varkappa^2b +s^2b \pm(4\varkappa^2\nu^2 +4\nu^2s^2 +\varkappa^2b^2)^{1/2}.
\end{align}
Note that $a^\nu_{\varkappa, \pm} =a^\nu_{-\varkappa, \pm}$.

We now seek $b <1$ such that the inequality $a^\nu_{\varkappa, -}(b, s)\geqslant 0$ holds for all $\varkappa\in \mathbb N_1 +1/2$ and $s\in\mathbb R$.
We claim that, for all other parameters being fixed, $a^\nu_{\varkappa, -}(b, s)$ is an increasing function of $\varkappa\in \mathbb N_1 +1/2$ provided $b \geqslant 2\nu/\sqrt{15}$ holds. Indeed, extending \eqref{eigenvalues of A} to $\varkappa \in\mathbb R$, we get
\begin{align*}
 &a^\nu_{\varkappa +1, -}(b, s) -a^\nu_{\varkappa, -}(b, s) =\int_{\varkappa}^{\varkappa +1}\frac{\partial a^\nu_{\widetilde\varkappa, -}}{\partial \widetilde\varkappa}(b, s)\,\mathrm d\widetilde\varkappa\\
&=(2\varkappa +1)b -\int_{\varkappa}^{\varkappa +1}\frac{(4\nu^2 +b^2)\widetilde\varkappa}{\sqrt{(4\nu^2 +b^2)\widetilde\varkappa^2 +4\nu^2s^2}}\,\mathrm d\widetilde\varkappa \geqslant 4b -\sqrt{4\nu^2 +b^2} \geqslant 0.
\end{align*}
Note that $a^\nu_{3/2, -}(b, s) =a^\nu_{3/2, -}(b, -s)$. For $s >0$ and 
\begin{align*}
 b\geqslant \nu\sqrt{2(\sqrt{13}/3 -1)}\qquad \Big(> 2\nu/\sqrt{15}\Big)
\end{align*}
we have
\begin{align*}
 \frac{\partial a^\nu_{3/2, -}(b, s)}{\partial s} &= 2b s -\frac{4\nu^2s}{\sqrt{9\nu^2 +4\nu^2s^2 +9b^2/4}}\\ & \geqslant 2s\big(b -2\nu^2/\sqrt{9\nu^2 +9b^2/4}\big) \geqslant 0.
\end{align*}
Thus, provided 
\begin{align*}
 b \geqslant \nu\big(3\sqrt{16 +\nu^2} -5\nu\big)/8\qquad \Big(> \nu\sqrt{2(\sqrt{13}/3 -1)} \quad\textrm{for all }\nu\in (0, 1/2]\Big)
\end{align*}
holds, for any $s\in \mathbb R$ and $\varkappa \in \mathbb N_1+ 1/2$ we have
\begin{align*}
 a^\nu_{\varkappa, -}(b, s)\geqslant a^\nu_{3/2, -}(b, 0) =\nu^2 +5b/2 -3\sqrt{\nu^2 +b^2/4} \geqslant 0.
\end{align*}
It follows now from \eqref{lower bound with A} that 
\begin{align*}
 (D_{\varkappa}^\nu)^2 \geqslant \Big(1 -\nu\big(3(16 +\nu^2)^{1/2} -5\nu\big)/8\Big)(D_{\varkappa}^0)^2
\end{align*}
(and hence \eqref{non-critical channels inequality squared}) holds for all $\nu\in (0, 1/2]$ and $\varkappa\in(\mathbb Z +1/2)\setminus\{-1/2,\, 1/2\}$.
\end{proof}

\section{Critical lower bounds}\label{clb section}

In this section we prove lower bounds analogous to the critical hydrogen inequality introduced in Theorem 2.3 of \cite{SolovejSoerensenSpitzer} and further developed in \cite{FrankHLT}.

For $\gamma\in \mathbb R$ we introduce the quadratic form
\begin{equation*}
 \mathfrak p^\gamma[f]:= \int_{\mathbb R_+}p^\gamma\big|f(p)\big|^2\,\mathrm dp
\end{equation*}
on $\mathsf L^2\big(\mathbb R_+, (1+ p^\gamma)\mathrm dp\big)$.

The next theorem will imply a lower bound for the quadratic form of the critical operator $H_m^{\alpha_m}$. Recall the definition \eqref{alpha_c} of $\alpha_m$ and Lemma \ref{WF Coulomb lemma}.

\begin{thm}\label{critical lower bound theorem}
 For any $m\in\mathbb Z$ and $\lambda\in (0, 1)$ there exists $K_{m, \lambda}> 0$ such that for all $l >0$ the inequality
\begin{equation}\label{critical lower bound}
 \mathfrak p^1 -\alpha_m\mathfrak q_m\geqslant K_{m, \lambda}l^{\lambda -1}\mathfrak p^\lambda -l^{-1}\mathfrak p^0
\end{equation}
holds on $\mathsf L^2\big(\mathbb R_+, (1+ p)\mathrm dp\big)$.
\end{thm}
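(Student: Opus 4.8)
\emph{Proof plan.} The plan is to reduce \eqref{critical lower bound} to the zeroth angular channel of the known inequality \eqref{SSS} by means of a pointwise comparison of Mellin symbols, so that no new ``Hardy--Lieb--Thirring'' argument is needed.

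First I would pass to the Mellin side. Combining \eqref{MWF Herbst}, the computation \eqref{1} applied to the Coulomb part (i.e.\ Lemma \ref{1/r with T lemma}), and \eqref{|p| with T}, one gets, for every $f\in \mathsf L^2(\mathbb R_+, (1+p)\mathrm dp)$ and $G$ the analytic continuation of $\mathcal M f$ (which lies in $\mathfrak D^{1/2}$ by Definition \ref{D lambda definition} and Theorem \ref{Mellin shift theorem}), the representations
\begin{align*}
 (\mathfrak p^1 -\alpha\mathfrak q_m)[f] =\int_{\mathbb R}\big(1 -\alpha V_{|m| -1/2}(s)\big)\big|G(s +\mathrm i/2)\big|^2\,\mathrm ds, \qquad \mathfrak p^0[f] =\int_{\mathbb R}\big|G(s)\big|^2\,\mathrm ds,
\end{align*}
valid for all $\alpha \in \mathbb R$, and in particular for $(m,\alpha)=(0,\alpha_0)$; likewise $\mathfrak p^\lambda[f]=\int_{\mathbb R}|G(s+\mathrm i\lambda/2)|^2\,\mathrm ds$. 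On the other hand $(-\Delta)^{1/2}$, $|\cdot|^{-1}$, $(-\Delta)^{\lambda/2}$ and $\mathbbm 1$ all commute with the angular decomposition $\mathcal{WF}$ (for $|\cdot|^{-1}$ this is Lemma \ref{WF Coulomb lemma}), so the inequality \eqref{SSS} --- understood for the distinguished (Friedrichs) realisation of $(-\Delta)^{1/2}-\alpha_0|\cdot|^{-1}$, whose zeroth fibre is $H_0^{\alpha_0}$ and whose form domain contains $\mathsf L^2(\mathbb R_+, (1+p)\mathrm dp)$ in that fibre by Lemma \ref{H alpha 0 domain lemma} --- restricts in the zeroth channel, with $t=\lambda/2\in(0,1/2)$, to
\begin{equation*}
 \mathfrak p^1 -\alpha_0\mathfrak q_0 \geqslant M_{\lambda/2}\,l^{\lambda -1}\mathfrak p^\lambda -l^{-1}\mathfrak p^0 \qquad\text{on }\mathsf L^2(\mathbb R_+, (1+p)\mathrm dp).
\end{equation*}

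The crucial step is the pointwise estimate: for every $m\in\mathbb Z$ there is $c_m\in (0,1]$ with
\begin{equation*}
 1 -\alpha_m V_{|m| -1/2}(s) \geqslant c_m\big(1 -\alpha_0 V_{-1/2}(s)\big)\qquad\text{for all }s\in\mathbb R.
\end{equation*}
By Lemma \ref{V monotonicity lemma}(1,2) both sides are even, and since $V_j$ is strictly decreasing on $\mathbb R_+$ with $\alpha_jV_j(0)=1$ by \eqref{alpha_c}, they are strictly positive on $\mathbb R\setminus\{0\}$ and tend to $1$ as $|s|\to\infty$ (because $V_j(s)\to0$); being real-analytic and even with $V_j'(0)=0$ and $V_j''(0)<0$ (the last fact read off from the alternating series for $V_j'$ in the proof of Lemma \ref{V monotonicity lemma}), each vanishes at $s=0$ to exactly second order. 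Hence the quotient $\big(1-\alpha_mV_{|m|-1/2}(s)\big)\big/\big(1-\alpha_0V_{-1/2}(s)\big)$ extends to a continuous, strictly positive function on the one-point compactification of $\mathbb R$, so it attains a positive minimum, and one sets $c_m:=\min\{1,\text{ that minimum}\}$ (for $m=0$ the quotient is $\equiv1$, $c_0=1$). Feeding this into the first display, integrated against $|G(s+\mathrm i/2)|^2$, gives $(\mathfrak p^1-\alpha_m\mathfrak q_m)[f]\geqslant c_m(\mathfrak p^1-\alpha_0\mathfrak q_0)[f]$ for $f\in\mathsf L^2(\mathbb R_+,(1+p)\mathrm dp)$, whence by the zeroth-channel form of \eqref{SSS}, $c_m\leqslant1$ and $\mathfrak p^0\geqslant0$,
\begin{equation*}
 (\mathfrak p^1-\alpha_m\mathfrak q_m)[f]\geqslant c_mM_{\lambda/2}\,l^{\lambda-1}\mathfrak p^\lambda[f]-c_m\,l^{-1}\mathfrak p^0[f]\geqslant c_mM_{\lambda/2}\,l^{\lambda-1}\mathfrak p^\lambda[f]-l^{-1}\mathfrak p^0[f],
\end{equation*}
i.e.\ \eqref{critical lower bound} with $K_{m,\lambda}:=c_mM_{\lambda/2}>0$.

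I expect the main obstacle to be the bookkeeping of the first step rather than any new analytic input: one must verify that the three quadratic forms in \eqref{critical lower bound}, a priori defined on all of $\mathsf L^2(\mathbb R_+,(1+p)\mathrm dp)$, are represented by the stated Mellin integrals for every such $f$ (and not merely on the dense domains underlying \eqref{1}), and that \eqref{SSS} may indeed be fibred as claimed. The delicate point is the form domain of the critical operator $H_0^{\alpha_0}$, for which Lemma \ref{H alpha 0 domain lemma} together with the density of $\mathsf C_0^\infty(\mathbb R^2\setminus\{0\})$ in $\mathsf H^{1/2}(\mathbb R^2)$ is the appropriate tool; note also that comparing instead with $\mathfrak p^1-\alpha_0\mathfrak q_m$ would fail, since for $m\ne0$ that symbol does not vanish at $s=0$, so one must compare with the genuinely critical zeroth-channel symbol $1-\alpha_0V_{-1/2}$.
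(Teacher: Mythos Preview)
Your approach is correct and genuinely different from the paper's. The paper proves the theorem for each $m$ separately by a direct Cauchy--Bunyakovsky--Schwarz argument on the Legendre kernel $Q_{|m|-1/2}$ with the weight $(p+l^{\lambda-1}p^\lambda)/(q+l^{\lambda-1}q^\lambda)$, expands the resulting expression into values of $V_{|m|-1/2}$ at imaginary arguments, and then controls the error terms by elementary interpolation; this is essentially a channel-by-channel rerun of the method behind \eqref{SSS} from \cite{FrankHLT}. You instead take \eqref{SSS} as an external input, restrict it to the $m=0$ angular fibre to obtain the case $m=0$ of \eqref{critical lower bound} directly, and then reduce every other $m$ to $m=0$ by the pointwise Mellin-symbol comparison $1-\alpha_mV_{|m|-1/2}(s)\geqslant c_m\bigl(1-\alpha_0V_{-1/2}(s)\bigr)$, which you justify correctly via the second-order vanishing of both sides at $s=0$ and their common limit $1$ at infinity.

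What each route buys: the paper's argument is self-contained (it never invokes \eqref{SSS}) and yields constants that can in principle be traced through the estimates \eqref{terms with V}--\eqref{powers interpolation}, whereas yours is considerably shorter and conceptually transparent but makes the constant $K_{m,\lambda}=c_mM_{\lambda/2}$ depend on the externally given $M_{\lambda/2}$ and on the implicit minimum $c_m$. Your remark that comparing with $1-\alpha_0V_{|m|-1/2}$ would fail (because it does not vanish at $s=0$ when $m\neq0$) is exactly right and is the reason the reduction must go through the genuinely critical zeroth channel. The bookkeeping caveats you flag (extending the Mellin representation of $\mathfrak q_m$ from $\mathfrak D(\widetilde H_m^\alpha)$ to all of $\mathsf L^2(\mathbb R_+,(1+p)\mathrm dp)$, and restricting the form inequality \eqref{SSS} from $\mathsf H^{1/2}(\mathbb R^2)$ to its $m=0$ fibre) are handled by the same tools the paper uses in the proofs of Lemma~\ref{H alpha 0 domain lemma} and Corollary~\ref{critical lower bound corollary}, so no new obstacle arises there.
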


\begin{proof}
 Let $m\in\mathbb Z$, $\lambda\in (3/4, 1)$ and $f\in \mathsf L^2\big(\mathbb R_+, (1+ p)\mathrm dp\big)$. Using the non-negativity of $Q_{|m|-1/2}$, the Cauchy-Bunyakovsky-Schwarz inequality and that
\begin{equation*}
 (q+l^{\lambda -1}q^{\lambda})^{-1}\leqslant q^{-1}-l^{\lambda -1}q^{\lambda -2}+l^{2(\lambda -1)}q^{2\lambda -3} \quad\text{holds for all }q, l>0
\end{equation*}
 (which follows from $(1 +z)^{-1}\leqslant 1 -z +z^2$ for all $z\geqslant 0$ by letting $z:= (lq)^{\lambda -1}$) we obtain
 \begin{equation}\begin{split}\label{with Taylor}
 \mathfrak q_m[f] &=\frac1\pi\int_{0}^{\infty}\int_{0}^{\infty}\overline{f(p)}f(q)Q_{|m|-1/2}\bigg(\frac{1}{2}\Big(\frac{p}{q}+\frac{q}{p}\Big)\bigg)\mathrm dq\,
 \mathrm{d}p\\
 &\leqslant \frac1\pi\int_{0}^{\infty}\int_{0}^{\infty}\big|f(p)\big|^{2}Q_{|m|-1/2}\bigg(\frac{1}{2}\Big(\frac{p}{q}+\frac{q}{p}\Big)\bigg)
 \bigg(\frac{p+l^{\lambda -1}p^\lambda}{q+l^{\lambda -1}{q}^\lambda}\bigg)\mathrm dq\,\mathrm dp\\
 &\leqslant \frac1\pi\int_{0}^{\infty}\int_{0}^{\infty}\big|f(p)\big|^{2}Q_{|m|-1/2}\bigg(\frac{1}{2}\Big(\frac{p}{q}+\frac{q}{p}\Big)\bigg)\\
 &\qquad\qquad\quad\times(p+l^{\lambda -1}p^\lambda)(q^{-1}-l^{\lambda -1}q^{\lambda -2}+l^{2(\lambda -1)}q^{2\lambda -3})\mathrm dq\,\mathrm dp.
 \end{split}\end{equation}
From \eqref{Q} it is easy to find the asymptotics
\begin{equation*}
 Q_{|m|-1/2}\big((x+ x^{-1})/2\big)\sim \frac{\pi^{1/2}\Gamma\big(|m| +1/2\big)}{\Gamma\big(|m| +1\big)}\begin{cases}
                                      x^{-|m|- 1/2}, & \textrm{for }x\to +\infty;\\x^{|m| +1/2}, & \textrm{for }x\to +0,
                                     \end{cases}
\end{equation*}
which implies that the function
 \begin{equation*}\begin{split}
  V_{|m| -1/2}(z)&:=\frac1\pi\int_{0}^{\infty}Q_{|m|-1/2}\big((x+ x^{-1})/2\big)x^{-\mathrm iz -1} \mathrm dx\\ 
  &=\frac{p^{\mathrm iz}}\pi\int_{0}^{\infty}Q_{|m| -1/2}\bigg(\frac{1}{2}\Big(\frac{p}{q}+\frac{q}{p}\Big)\bigg)q^{-\mathrm iz -1}\mathrm dq
 \end{split}\end{equation*}
 is well-defined and analytic in the strip $\Big\{z\in \mathbb C: \Im z\in\big(-|m| -1/2, |m| +1/2\big)\Big\}$. It also coincides there with the function defined in \eqref{V}, as can be seen by comparing Lemma \ref{WF Coulomb lemma} with Lemmata \ref{1/r with T lemma}, \ref{R and Xi commutation lemma} and Corollary \ref{V corollary} or by a calculation as in Section VI of \cite{YaouancOliverRaynal}.

We can then rewrite the right hand side of \eqref{with Taylor} obtaining
\begin{align}\begin{split}\label{terms with V}
\mathfrak q_m[f] &= V_{|m| -1/2}(0)\mathfrak p^1[f] + \Big(V_{|m| -1/2}(0)-V_{|m| -1/2}\big(\mathrm i(\lambda -1)\big)\Big)l^{\lambda -1}\mathfrak p^\lambda[f]\\ &+\Big(V_{|m| -1/2}\big(2\mathrm i(\lambda -1)\big)-
 V_{|m| -1/2}\big(\mathrm i(\lambda -1)\big)\Big)l^{2(\lambda -1)}\mathfrak p^{2\lambda -1}[f]\\ &+V_{|m| -1/2}\big(2\mathrm i(\lambda -1)\big)l^{3(\lambda -1)}\mathfrak p^{3\lambda -2}[f].
\end{split}\end{align}
Lemmata \ref{critical constants lemma} and \ref{V monotonicity lemma} imply
 \begin{equation}\begin{split}\label{signs of V}
        V_{|m| -1/2}(0)&=\alpha_m^{-1}; \\
        V_{|m| -1/2}(0)-V_{|m| -1/2}\big(\mathrm i(\lambda -1)\big) &< 0; \\
        V_{|m| -1/2}\big(2\mathrm i(\lambda -1)\big)- V_{|m| -1/2}\big(\mathrm i(\lambda -1)\big)&\geqslant 0;\\
        V_{|m| -1/2}\big(2\mathrm i(\lambda -1)\big) &\geqslant 0.
 \end{split}\end{equation}
For every $\lambda\in(3/4, 1)$ and $\varepsilon_{1} ,\varepsilon_{2} >0$ there exist $C_{1}, C_{2}> 0$ such that the inequalities
\begin{equation}\label{powers interpolation}
  l^{2(\lambda -1)}p^{2\lambda -1}\leqslant \varepsilon_1p^\lambda l^{\lambda -1}+ C_1l^{-1}, \quad l^{3(\lambda -1)}p^{3\lambda -2}\leqslant \varepsilon_2p^\lambda l^{\lambda -1}+ C_2l^{-1}
\end{equation}
 hold for all $p, l>0$.
Substituting \eqref{powers interpolation} into \eqref{terms with V} and taking \eqref{signs of V} into account by choosing $\varepsilon_{1}$ and $\varepsilon_{2} >0$ small enough we obtain
\begin{equation}\label{q_m bound}
 \begin{split}
   \mathfrak q_m[f] \leqslant \mathfrak p^1[f]/\alpha_m -C_1(m, \lambda)l^{\lambda -1}\mathfrak p^\lambda[f] +C_2(m, \lambda)l^{-1}\mathfrak p^0[f]
 \end{split}
\end{equation}
with $C_1(m, \lambda)$, $C_2(m, \lambda) >0$ for $\lambda\in (3/4, 1)$. For $\lambda\in (0, 3/4]$, $\lambda'\in (3/4, 1)$ we can find a constant $C_3(\lambda, \lambda')> 0$ with
\begin{equation*}
 l^{\lambda'}\mathfrak p^{\lambda'}\geqslant -C_3(\lambda, \lambda')\mathfrak p^0 + l^{\lambda}\mathfrak p^{\lambda}
\end{equation*}
and get \eqref{q_m bound} for $\lambda$ from  \eqref{q_m bound} for $\lambda'$.
Rescaling $l$ and using $\alpha_m >0$ we arrive at \eqref{critical lower bound}.
 \end{proof}

\begin{cor}\label{critical lower bound corollary}
 For $m\in\mathbb Z$ and $\lambda\in (0, 1)$ the inequality
 \begin{equation}\label{critical lower bound with operators}
  H_m^{\alpha_m}\geqslant K_{m, \lambda}l^{\lambda -1}R^\lambda -l^{-1}
 \end{equation}
 holds for all $l >0$ with $K_{m, \lambda}$ as in \eqref{critical lower bound}.
\end{cor}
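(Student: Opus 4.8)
The plan is to read off Corollary~\ref{critical lower bound corollary} from Theorem~\ref{critical lower bound theorem} by transporting the latter through the fibre-wise Mellin transform and invoking the channel structure of the MWF-transform, and then passing from the form core of the critical operator to its full form domain. First I would recall the relevant channel identifications: by Lemma~\ref{WF Coulomb lemma}, $\mathfrak q_m$ is the $m$-th component of $|\cdot|^{-1}$ in the $\mathcal{WF}$-representation; by Lemma~\ref{transformed operators lemma}, the $m$-th component of $(-\Delta)^{\lambda/2}$ there is multiplication by $p^\lambda$, i.e.\ the form $\mathfrak p^\lambda$ (with $\mathfrak p^0$ corresponding to the identity); and by \eqref{MWF Herbst} together with \eqref{1}, the $m$-th component of $\widetilde H^{\alpha_m}= (-\Delta)^{1/2}-\alpha_m|\cdot|^{-1}$ is the form $\mathfrak p^1-\alpha_m\mathfrak q_m$, which is unitarily equivalent, via the fibre-wise $\mathcal M$, to the quadratic form $\varphi\mapsto\int_{\mathbb R}\bigl(1-\alpha_m V_{|m|-1/2}(s)\bigr)\bigl|(R^{1/2}\varphi)(s)\bigr|^2\,\mathrm ds$ of $\widetilde H^{\alpha_m}_m$. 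By Lemma~\ref{V monotonicity lemma}(1,2) the weight $1-\alpha_m V_{|m|-1/2}$ is bounded and non-negative on $\mathbb R$, so this form is well defined on all of $\mathfrak D^{1/2}=\mathcal M\mathsf L^2\bigl(\mathbb R_+,(1+p)\,\mathrm dp\bigr)$ (Theorem~\ref{Mellin shift theorem}), it restricts on $\mathfrak D(\widetilde H^{\alpha_m}_m)\subset\mathfrak D^1$ to the form of the symmetric operator $\widetilde H^{\alpha_m}_m$, and $R^\lambda$ has form $\|R^{\lambda/2}\cdot\|^2$ on $\mathfrak D^{\lambda/2}\supset\mathfrak D^{1/2}$.

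Applying $\mathcal M$ to \eqref{critical lower bound} and restricting to $\mathfrak D(\widetilde H^{\alpha_m}_m)$ then gives
\begin{equation*}
 \langle\varphi,\widetilde H^{\alpha_m}_m\varphi\rangle\geqslant K_{m,\lambda}\,l^{\lambda-1}\|R^{\lambda/2}\varphi\|^2-l^{-1}\|\varphi\|^2,\qquad\varphi\in\mathfrak D(\widetilde H^{\alpha_m}_m),\ l>0.
\end{equation*}
Since $H^{\alpha_m}_m$ is by definition the Friedrichs extension of $\widetilde H^{\alpha_m}_m$, the set $\mathfrak D(\widetilde H^{\alpha_m}_m)$ is a core for the quadratic form of $H^{\alpha_m}_m$, and it remains to extend the inequality to the whole form domain of $H^{\alpha_m}_m$. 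Given $\varphi$ in it, I would pick $\varphi_n\in\mathfrak D(\widetilde H^{\alpha_m}_m)$ converging to $\varphi$ in the form norm of $H^{\alpha_m}_m+1$; then the right-hand side of the displayed inequality, and hence $\|R^{\lambda/2}\varphi_n\|$, stays bounded, so since $R^{\lambda/2}=(R^\lambda)^{1/2}$ is self-adjoint, hence closed, and $\varphi_n\to\varphi$ in $\mathsf L^2(\mathbb R)$, one gets $\varphi\in\mathfrak D(R^{\lambda/2})=\mathfrak D^{\lambda/2}$ together with $R^{\lambda/2}\varphi_n\rightharpoonup R^{\lambda/2}\varphi$ along a subsequence; passing to the limit by weak lower semicontinuity of the norm yields \eqref{critical lower bound with operators}. (Alternatively, one notes that the right-hand side above is, up to the bounded term $-l^{-1}$, a non-negative closed form and invokes monotonicity of form closures.)

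The only point requiring care is this last limiting step. At the critical coupling $\alpha_m$ the form domain of $H^{\alpha_m}_m$ is genuinely larger than $\mathfrak D^{1/2}$ — for $m=0$ this is visible in Lemma~\ref{H alpha 0 domain lemma}, and for $m\neq0$ the extra function $(\cdot-\mathrm i/2)^{-1}$ appearing in Lemma~\ref{H alpha m domains lemma}(3) with $\zeta_{m,\alpha_m}=0$ fails to lie in $\mathfrak D^{1/2}$ — so one must verify both that this larger domain still embeds into the form domain $\mathfrak D^{\lambda/2}$ of $R^\lambda$ and that \eqref{critical lower bound} survives the closure. Both are supplied at once by the uniform bound on $\|R^{\lambda/2}\varphi_n\|$ read off from the inequality on the form core, combined with closedness of $R^{\lambda/2}$; no input beyond Theorem~\ref{critical lower bound theorem} and the transforms of Section~\ref{transforms section} is needed.
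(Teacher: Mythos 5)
Your proof follows the same route as the paper: identify the quadratic form of $\widetilde H_m^{\alpha_m}$ with $\mathfrak p^1-\alpha_m\mathfrak q_m$ via the transforms of Section~\ref{transforms section}, apply Theorem~\ref{critical lower bound theorem} on the form core $\mathfrak D(\widetilde H_m^{\alpha_m})$, and extend to the Friedrichs form domain. You spell out the final closure argument (via closedness of $R^{\lambda/2}$ and weak lower semicontinuity) in more detail than the paper, which leaves it implicit in the phrase ``that $H_m^{\alpha_m}$ is the Friedrichs extension,'' but the underlying argument is the same.
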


\begin{proof}
 For any $\varphi\in \mathfrak D(\widetilde H_m^{\alpha_m})$ we have
\begin{equation}\label{form of H_m tilde}
 \langle \varphi, \widetilde H_m^{\alpha_m}\varphi\rangle = \langle \varphi, R^1\varphi\rangle - \alpha_m\langle \varphi, V_{|m|- 1/2}(\cdot +\mathrm i/2)R^1\varphi\rangle.
\end{equation}
By \eqref{complex shifted}, the first term on the right hand side of \eqref{form of H_m tilde} coincides with $\mathfrak p^1[\mathcal M^*\varphi]$. Letting
\begin{equation*}
 \Phi:= \mathcal T^*\bigoplus_{n\in\mathbb Z}\delta_{n, m}\varphi
\end{equation*}
and using Lemma \ref{1/r with T lemma}, Corollary \ref{V corollary} and Lemma \ref{WF Coulomb lemma} we obtain
\begin{equation*}
 \langle \varphi, V_{|m|- 1/2}(\cdot +\mathrm i/2)R^1\varphi\rangle =\langle \Phi, r^{-1}\Phi\rangle =\mathfrak q_m[\mathcal M^*\varphi].
\end{equation*}
Thus \eqref{form of H_m tilde} can be written as
\begin{equation*}
 \langle \varphi, \widetilde H_m^{\alpha_m}\varphi\rangle = \mathfrak p^1[\mathcal M^*\varphi] - \alpha_m\mathfrak q_m[\mathcal M^*\varphi]
\end{equation*}
for any $\varphi\in \mathfrak D(\widetilde H_m^{\alpha_m})$. Using Theorem \ref{critical lower bound theorem}, \eqref{complex shifted} and that $H_m^{\alpha_m}$ is the Friedrichs extension of $\widetilde H_m^{\alpha_m}$ we conclude \eqref{critical lower bound with operators}.
\end{proof}

\section{Proofs of the main theorems}

\paragraph{Proof of Theorem \ref{t:abs_value_bound}.}\label{main proofs section}

\emph{1.} By Lemma \ref{critical constants lemma}
\begin{equation*}
\begin{split}
\langle\varphi, \widetilde H^\alpha_m\varphi\rangle \geqslant (1 -\alpha/\alpha_m)\langle\varphi, R^1\varphi\rangle
\end{split}
\end{equation*}
holds for all $m\in \mathbb Z$, $\alpha\in [0, \alpha_m)$ and $\varphi\in\mathfrak D(\widetilde H_m^{\alpha})$. Passing to the Friedrichs extension and using \eqref{alpha_c} we obtain
\begin{align}\label{H alpha lower bound}
 H^\alpha_m\geqslant \big(1 -\alpha V_{|m| -1/2}(0)\big)R^1.
\end{align}
By the operator monotonicity of the square root, Lemma \ref{squares comparison lemma} implies
\begin{align}\label{moduli comparison}
|D_{\pm1/2}^\nu| \geqslant \eta_\nu^{1/2}\mathcal U_{\pm1/2}^*\diag\Big(H_{1/2 \mp1/2}^{(V_{\mp1/2}(\mathrm i\beta))^{-1}}, H_{1/2 \pm1/2}^{(V_{\pm1/2}(\mathrm i\beta))^{-1}}\Big)\mathcal U_{\pm1/2}.
\end{align}
With \eqref{H alpha lower bound} we conclude
\begin{align}\label{critical channels}
 |D_{\pm1/2}^\nu| \geqslant \eta_\nu^{1/2}\min\bigg\{1 -\frac{V_{-1/2}(0)}{V_{-1/2}(\mathrm i\beta)},\, 1 -\frac{V_{1/2}(0)}{V_{1/2}(\mathrm i\beta)}\bigg\}\mathcal U_{\pm1/2}^*R^1\mathcal U_{\pm1/2}.
\end{align}
Lemma \ref{non-critical channels lemma} implies, in its turn, the estimate
\begin{align}\label{non-critical moduli}
 |D_{\varkappa}^\nu| \geqslant \Big(1 -\nu\big(3(16 +\nu^2)^{1/2} -5\nu\big)/8\Big)^{1/2}\,\mathcal U_\varkappa^*R^1\mathcal U_\varkappa.
\end{align}
Combining it with \eqref{critical channels}, \eqref{D^nu} and Lemma \ref{transformed operators lemma} we arrive at
\begin{align}\label{Upsi inbetween}
 |D^\nu| &\geqslant C_\nu\mathcal A^*\bigg(\bigoplus_{\varkappa\in \mathbb Z +1/2}\mathcal U_{\varkappa}^*R^1\mathcal U_{\varkappa}\bigg)\mathcal A =C_\nu\mathcal T^*\bigg(\bigoplus_{m\in \mathbb Z}R^1\bigg)\mathcal T =C_\nu\sqrt{-\Delta}
\end{align}
with
\begin{align*}
 C_\nu :=\min\bigg\{\eta_\nu^{1/2}\bigg(1 -\frac{V_{-1/2}(0)}{V_{-1/2}(\mathrm i\beta)}\bigg),&\, \eta_\nu^{1/2}\bigg(1 -\frac{V_{1/2}(0)}{V_{1/2}(\mathrm i\beta)}\bigg),\\ &\Big(1 -\nu\big(3(16 +\nu^2)^{1/2} -5\nu\big)/8\Big)^{1/2}\bigg\}.
\end{align*}

\emph{2.} Corollary \ref{critical lower bound corollary} and \eqref{moduli comparison} imply
\begin{align}\label{critical critical}
 |D_{\pm1/2}^\nu| \geqslant \eta_\nu^{1/2}\big(\min\{K_{0, \lambda}, K_{1, \lambda}\}l^{\lambda -1}\mathcal U_{\pm1/2}^*R^\lambda\mathcal U_{\pm1/2} -l^{-1}\big).
\end{align}
For $\varkappa\in(\mathbb Z +1/2)\setminus\{-1/2, 1/2\}$ we combine \eqref{non-critical moduli} and the simple inequality
\begin{align*}
 R^1 \geqslant \lambda^{-\lambda}(1 -\lambda)^{\lambda -1} l^{\lambda -1}R^\lambda -l^{-1}
\end{align*}
which follows from the spectral theorem. This together with \eqref{critical critical} implies \eqref{l bound} with
\begin{align*}
 K_\lambda :=\min\big\{\eta_{1/2}^{\lambda/2}K_{0,\lambda},\, \eta_{1/2}^{\lambda/2}K_{1, \lambda}, \lambda^{-\lambda}(1 -\lambda)^{\lambda -1} 2^{-5\lambda/2}(37-3\sqrt{65})^{\lambda/2}\big\}
\end{align*}
by a calculation analogous to \eqref{Upsi inbetween}.

\paragraph{Proof of Corollary \ref{defining corollary}.}
Under the assumptions of Corollary \ref{defining corollary} for any $\varepsilon >0$ there exists a decomposition
\begin{align}\label{V and B epsilon}
 V =V_\varepsilon + B_\varepsilon
\end{align}
with
\begin{align*}
\|\tr V_\varepsilon^{2 +\gamma}\|_{\mathsf L^1(\mathbb R^2)}< \varepsilon^{2 +\gamma} \quad \textrm{ and }\quad B_\varepsilon\in \mathsf L^\infty(\mathbb R^2, \mathbb C^{2\times 2}).
\end{align*}
By H\"older and Sobolev inequalities there exists $C_S >0$ such that for any $\varphi\in P_+^\nu\mathfrak D\big(|D^\nu|^{1/2}\big)$ we get
\begin{align}\label{Hoelder and Sobolev}
 \bigg|\int\limits_{\mathbb{R}^2}\big\langle \varphi(\mathbf x), V_\varepsilon(\mathbf x)\varphi (\mathbf x)\big\rangle \mathrm{d}\mathbf{x}\bigg| \leqslant\varepsilon\|\varphi\|^2_{\mathsf L^{\frac{4 +2\gamma}{1 +\gamma}}(\mathbb R^2)} \leqslant \varepsilon C_S\big\|(-\Delta)^{1/(4 +2\gamma)}\varphi\big\|^2.
\end{align}
Now \eqref{CLR reduction} and the estimate $(-\Delta)^{1/(2 +\gamma)} \leqslant (-\Delta)^{1/2} +1$ imply 
\begin{align}\label{case nu <1/2}
 \big\|(-\Delta)^{1/(4 +2\gamma)}\varphi\big\|^2 \leqslant C_\nu^{-1}\big\||D^\nu|^{1/2}\varphi\big\|^2 +\|\varphi\|^2,
\end{align}
for any $\nu\in [0, 1/2)$, $\gamma\geqslant 0$.
For $\nu =1/2$ and $\gamma >0$ we use \eqref{l bound} with $\lambda:= 2/(2 +\gamma)$, $l:= K_{2/(2 +\gamma)}^{(2 +\gamma)/\gamma}$ obtaining
\begin{align}\label{case nu =1/2}
 \big\|(-\Delta)^{1/(4 +2\gamma)}\varphi\big\|^2 \leqslant \big\||D^{1/2}|^{1/2}\varphi\big\|^2 +K_{2/(2 +\gamma)}^{-(2 +\gamma)/\gamma}\|\varphi\|^2.
\end{align}
Combining \eqref{V and B epsilon} and \eqref{Hoelder and Sobolev} with \eqref{case nu <1/2} or \eqref{case nu =1/2} we conclude that $V$ is an infinitesimal form perturbation of $\mathfrak d^\nu(0, 0)$ for all $(\nu, \gamma)\in \big([0, 1/2]\times [0, \infty)\big)\setminus \big\{(1/2, 0)\big\}$. This together with \eqref{w} implies that $\mathfrak d^\nu(\mathfrak w, V)$ is bounded from below by some $-M\in \mathbb R$ and that
\[\mathfrak d^\nu(\mathfrak w, V)[\cdot] +(M +1)\|\cdot\|^2 \quad\textrm{and} \quad\mathfrak d^\nu(0, 0)[\cdot] +\|\cdot\|^2\] are equivalent norms on $P_+^\nu\mathfrak D\big(|D^\nu|^{1/2}\big)$ (see e.g. the proof of Theorem X.17 in \cite{ReedSimonII}).

\paragraph{Proof of Theorem \ref{CLR theorem}.} Using the spectral theorem and \eqref{CLR reduction} we obtain
\begin{align*}
 &\rank \big(D^\nu(\mathfrak w, V)\big)_- =\sup \dim\Big\{\mathcal X \textrm{ subspace of }P_+^\nu\mathfrak D\big(|D^\nu|^{1/2}\big):\\ &\hspace{5cm} \mathfrak d^\nu(\mathfrak w, V)[\psi]< 0 \textrm{ for all }\psi\in \mathcal X\setminus\{0\}\Big\}\\ &\leqslant \sup \dim\bigg\{\mathcal X \textrm{ subspace of }\mathsf H^{1/2}(\mathbb R^2, \mathbb C^2):\textrm{ for all }\psi\in \mathcal X\setminus\{0\}\\& \qquad\qquad\qquad \big\|(-\Delta)^{1/4}\psi\big\|^2 -C_\nu^{-1}\int_{\mathbb{R}^2}\big\langle \psi(\mathbf x), V(\mathbf x)\psi (\mathbf x)\big\rangle \mathrm{d}\mathbf{x}< 0 \textrm{ holds.}\bigg\}\\ &= \rank\big((-\Delta)^{1/2} - C_\nu^{-1}V\big)_-,
\end{align*}
where the operator on the right hand side is the one considered in Example 3.3 of \cite{Frank2014CLR}.
The statement now follows from \eqref{CLR for fractional Laplacian} with
\begin{align*}
 C^{\mathrm{CLR}}_\nu :=4C_\nu^{-2}/\pi.
\end{align*}

\paragraph{Proof of Theorem \ref{LT theorem}.}
For $\nu <1/2$, the statement follows from Theorem \ref{CLR theorem} in the usual way. First, we pass to the integral representation
\begin{align}\begin{split}\label{integral formula}
  \tr\big(D^\nu(\mathfrak w, V)\big)^\gamma_- &=\gamma\int_0^\infty  \rank \big(D^\nu(\mathfrak w, V) +\tau\big)_-\tau^{\gamma -1}\mathrm d\tau\\& \leqslant \gamma\int_0^\infty  \rank \Big(D^\nu\big(\mathfrak w, (V -\tau)_+\big)\Big)_-\tau^{\gamma -1}\mathrm d\tau.
\end{split}\end{align}
Now, applying \eqref{CLR}, we can estimate the right hand side of \eqref{integral formula} by
\begin{align}\label{CLR integral}
 \gamma C^{\mathrm{CLR}}_\nu\int_{\mathbb R^2}\int_0^\infty \tr \big(V(\mathbf x) -\tau\big)_+^2\tau^{\gamma -1}\mathrm d\tau\dd\mathbf x.
\end{align}

For $\mathbf x\in\mathbb R$ let $v_{1,2}(\mathbf x)$ be the eigenvalues of $V(\mathbf x)$. Computing the trace in the eigenbasis of $V(\mathbf x)$ we obtain for all $\tau \geqslant 0$
\begin{align}\label{eigenbasis}
 \tr \big(V(\mathbf x) -\tau\big)_+^2= \sum_{j =1}^2\big(v_j(\mathbf x) -\tau\big)_+^2.
\end{align}
Substituting \eqref{eigenbasis} into \eqref{CLR integral} and computing the integrals we derive \eqref{LT} with
\begin{align*}
 C^{\mathrm{LT}}_{\nu, \gamma} =\frac{2C_{\nu}^{\mathrm{CLR}}}{(\gamma +1)(\gamma +2)}, \quad\text{for } \nu <1/2.
\end{align*}

For $\nu =1/2$, the inequality \eqref{LT} follows from \eqref{l bound} by a calculation similar to the one in the proof of Theorem 1.1 in \cite{FrankHLT}. Namely, proceeding analogously to the proof of Theorem \ref{CLR theorem}, but using \eqref{l bound} instead of \eqref{CLR reduction}, we observe the inequalities
\begin{align}\begin{split}\label{rank with tau}
 &\rank \big(D^{1/2}(\mathfrak w, V) +\tau\big)_- \leqslant \rank\Big((-\Delta)^{\lambda/2} -K_\lambda^{-1}l^{1 -\lambda}\big(V +(l^{-1} -\tau)\big)\Big)_-
\end{split}\end{align}
for all $\lambda\in (0, 1)$, $\tau, l >0$. We now let $l:= (\sigma\tau)^{-1}$ with $\sigma\in (0, 1)$ and estimate the right hand side of \eqref{rank with tau} from above with the help of \eqref{CLR for fractional Laplacian} by
\begin{align*}
 (2\pi \lambda)^{-1}(1 -\lambda/2)^{1 -4/\lambda} K_\lambda^{-2/\lambda}(\sigma\tau)^{2(\lambda -1)/\lambda}\int_{\mathbb R^2} \tr \big(V(\mathbf x) -(1 -\sigma)\tau\big)_+^{2/\lambda}\dd\mathbf x.
\end{align*}
Substituting this into \eqref{integral formula} and integrating in $\tau$ we get for $2/(2+ \gamma) <\lambda <1$
\begin{align*}
 \tr\big(D^\nu(\mathfrak w, V)\big)^\gamma_- \leqslant C^{\mathrm{LT}}_{1/2, \gamma}(\lambda, \sigma)\int_{\mathbb R^2} \tr \big(V(\mathbf x)\big)^{2+ \gamma}\dd\mathbf x
\end{align*}
with
\begin{align*}
 C^{\mathrm{LT}}_{1/2, \gamma}(\lambda, \sigma) :=\gamma\Big(1 -\frac\lambda2\Big)^{1 -\frac4\lambda}\frac{\Gamma\big(2 +\gamma -\frac2\lambda\big)\Gamma\big(1 +\frac2\lambda\big)}{2\pi \lambda K_\lambda^{\frac2\lambda}\Gamma(3 + \gamma)}\sigma^{2 -\frac2\lambda}(1 -\sigma)^{-\gamma -2 +\frac2\lambda}.
\end{align*}
The estimate \eqref{LT} follows with
\begin{equation*}
 C^{\mathrm{LT}}_{1/2, \gamma} :=\min_{\substack{\lambda\in (2/(2+ \gamma), 1)\\ \sigma\in (0, 1)}}C^{\mathrm{LT}}_{1/2, \gamma}(\lambda, \sigma) =\min_{\lambda\in (2/(2+ \gamma), 1)}C^{\mathrm{LT}}_{1/2, \gamma}\Big(\lambda, \frac{2(1 -\lambda)}{\lambda\gamma}\Big).
\end{equation*}


\end{document}